\documentclass[pra,twocolumn,superscriptaddress,nofootinbib]{revtex4-2}
\usepackage{array}[=2016-10-06]
\usepackage[letterpaper, left=0.8in, right=0.8in, top=0.9in, bottom=0.9in]{geometry}
\usepackage{titlesec}
\usepackage{booktabs, tikz, tabularx}

\usepackage{cmap} 
\usepackage[utf8]{inputenc}
\usepackage[english]{babel}
\usepackage[T1]{fontenc}
\usepackage{amsmath}
\usepackage{amsfonts}
\usepackage{bm}
\usepackage{soul}
\usepackage{enumitem}
\usepackage[dvipsnames]{xcolor}
\definecolor{blueviolet}{rgb}{0.2, 0.2, 0.6}
\definecolor{webgreen}{rgb}{0,.5,0}
\definecolor{webbrown}{rgb}{.6,0,0}
\usepackage[pdftex,
	bookmarks=false,
	colorlinks=true, 
	urlcolor=webbrown, 
	linkcolor=blueviolet, 
	citecolor=webgreen,
	pdfstartpage=1,
	pdfstartview={FitH},  
	bookmarksopen=false
	]{hyperref}
\usepackage{tikz}
\usepackage{natbib}
\usepackage{mathtools}
\usepackage{graphicx}
\usepackage{csquotes}
\usepackage{braket}
\usepackage{dsfont}
\usepackage{listings}
\allowdisplaybreaks

\usepackage{comment}

\usepackage{multirow}
\usepackage{amssymb}

\usepackage{amsthm}

\usepackage{color}
\usepackage{nicefrac}

\def\Lip{{\mathrm{Lip}}}
\newcommand{\msf}[1]{{\mathsf{#1}}}

\DeclareFixedFont{\ttb}{T1}{txtt}{bx}{n}{9} 
\DeclareFixedFont{\ttm}{T1}{txtt}{m}{n}{9}  

\usepackage{color}
\definecolor{deepblue}{rgb}{0,0,0.5}
\definecolor{deepred}{rgb}{0.6,0,0}
\definecolor{deepgreen}{rgb}{0,0.5,0}

\newcommand\pythonstyle{\lstset{
language=Python,
basicstyle=\ttm,
morekeywords={self},              
keywordstyle=\ttb\color{deepblue},
emph={MyClass,__init__},          
emphstyle=\ttb\color{deepred},    
stringstyle=\color{deepgreen},
frame=tb,                         
showstringspaces=false
}}

\lstnewenvironment{python}[1][]
{
\pythonstyle
\lstset{#1}
}
{}

\newcommand\pythoninline[1]{{\pythonstyle\lstinline!#1!}}

\def\bra#1{\ensuremath{\mathinner{\langle{#1}|}}}
\def\ket#1{\ensuremath{\mathinner{|{#1}\rangle}}}

\newcommand{\Husimi}{\mathrm{H}}

\newcommand{\mcal}[1]{\mathcal{#1}}

\newcommand{\bbR}{\mathbb{R}}

\newcommand{\diff}{\,\mathrm{d}}

\newcommand{\rmi}{\mathrm{i}}

\DeclareMathOperator{\Tr}{tr}

\theoremstyle{plain}
\newtheorem{theorem}{Theorem}
\newtheorem{corollary}{Corollary}
\newtheorem{lemma}{Lemma}

\newtheorem{proposition}{Proposition}
\newtheorem{definition}{Definition}

\theoremstyle{definition}

\newtheorem{example}{Example}

\theoremstyle{remark}
\newtheorem{remark}{Remark}

\newcommand{\Id}{I}

\usepackage[noend]{algpseudocode}
\usepackage{algorithm,algorithmicx}

\algrenewcommand\alglinenumber[1]{\sf\scriptsize\color{blue}{#1}}
\algrenewcommand\algorithmicrequire{\textbf{Input:}}
\algrenewcommand\algorithmicensure{\textbf{Output:}}

\def\eps{{\varepsilon}}
\def\One{{\mathds{1}}}

\newcommand{\Op}{{\operatorname*{Op}}}

\def\bra#1{\mathinner{\langle{#1}|}}
\def\ket#1{\mathinner{|{#1}\rangle}}

\usepackage[dvipsnames]{xcolor}

\makeatletter
\newif\ifappendixtoc
\let\orig@addcontentsline\addcontentsline

\renewcommand{\addcontentsline}[3]{%
  \orig@addcontentsline{#1}{#2}{#3}%
  \ifappendixtoc
    \def\atoc@a{#1}%
    \def\atoc@b{toc}%
    \ifx\atoc@a\atoc@b
      \orig@addcontentsline{atoc}{#2}{#3}%
    \fi
  \fi
}

\newcommand{\appendixtableofcontents}{%
  \section*{Appendix Contents}%
  \@starttoc{atoc}%
}
\makeatother

\begin{document}

\title{Quantum Chaos and Quantum Optimal Transport}

\author{Jordan Cotler}
    \email{jcotler@fas.harvard.edu}
    \affiliation{Department of Physics, Harvard University, Cambridge, MA 02138 USA}
\author{Felipe Hern\'{a}ndez}
    \email{felipeh@psu.edu}
    \affiliation{Department of Mathematics, Massachusetts Institute of Technology, Cambridge, MA 02139 USA}
    \affiliation{Department of Mathematics, Penn State University, University Park, PA 16802 USA}
\date{\today}

\begin{abstract}
Chaos in classical systems can be characterized by Lyapunov exponents that measure the exponential divergence of nearby trajectories, but directly extending this framework to quantum mechanics has been a persistent challenge. The wavelike nature of quantum states and the non-commutative geometry of quantum phase space obstruct a straightforward generalization of classical chaos theory. Here we develop a rigorous approach to quantum chaos by leveraging quantum optimal transport theory, which provides the missing geometric foundation for measuring distances between extended quantum distributions. We define quantum Lyapunov exponents that naturally avoid divergences encountered when na\"{i}vely generalizing classical exponents, and show that in the semiclassical limit they recover the classical global expansion rate, and hence the usual maximal Lyapunov exponent when they coincide. Our framework provides a tight connection between the divergence of classical trajectories and semiclassical phase space evolution, and additionally clarifies the role of out-of-time-order correlators as diagnostics of quantum chaos. These results establish quantum optimal transport as a unifying mathematical foundation for quantum chaos theory, providing new tools to characterize dynamical behavior across the full range of quantum dynamics from simple few-body models to complex many-body systems.
\end{abstract}

\maketitle

\section{Introduction}

Natural systems typically exhibit chaotic dynamics, characterized by extreme sensitivity to initial conditions. Such behavior is pervasive across distance scales, from the three-body problem in celestial mechanics to the motion of electrons in disordered metals~\cite{haake1991quantum, cvitanovic2005chaos}. Classical chaos theory has matured into a well-established field, built on rigorous mathematical foundations including Lyapunov exponents, ergodic theory, and symbolic dynamics~\cite{cvitanovic2005chaos, robinson1998dynamical}. However, despite considerable effort, a comparable foundational understanding of chaos in quantum mechanical systems remains elusive. Current approaches to quantum chaos, sometimes termed ``quantum chaology''~\cite{berry1989quantum}, comprise a collection of phenomena and diagnostic tools, including random matrix theory~\cite{beenakker1997random, guhr1998random}, semiclassical expansions~\cite{gutzwiller2013chaos, berry1972semiclassical, berry1985semiclassical, haake1991quantum}, and large-$N$ methods~\cite{yaffe1982large}, rather than a unified conceptual framework. Given that our universe is quantum mechanical, developing a systematic theory of quantum chaos is scientifically pressing.

\begin{figure}[t!]
    \centering
    \includegraphics[width=.9\linewidth]{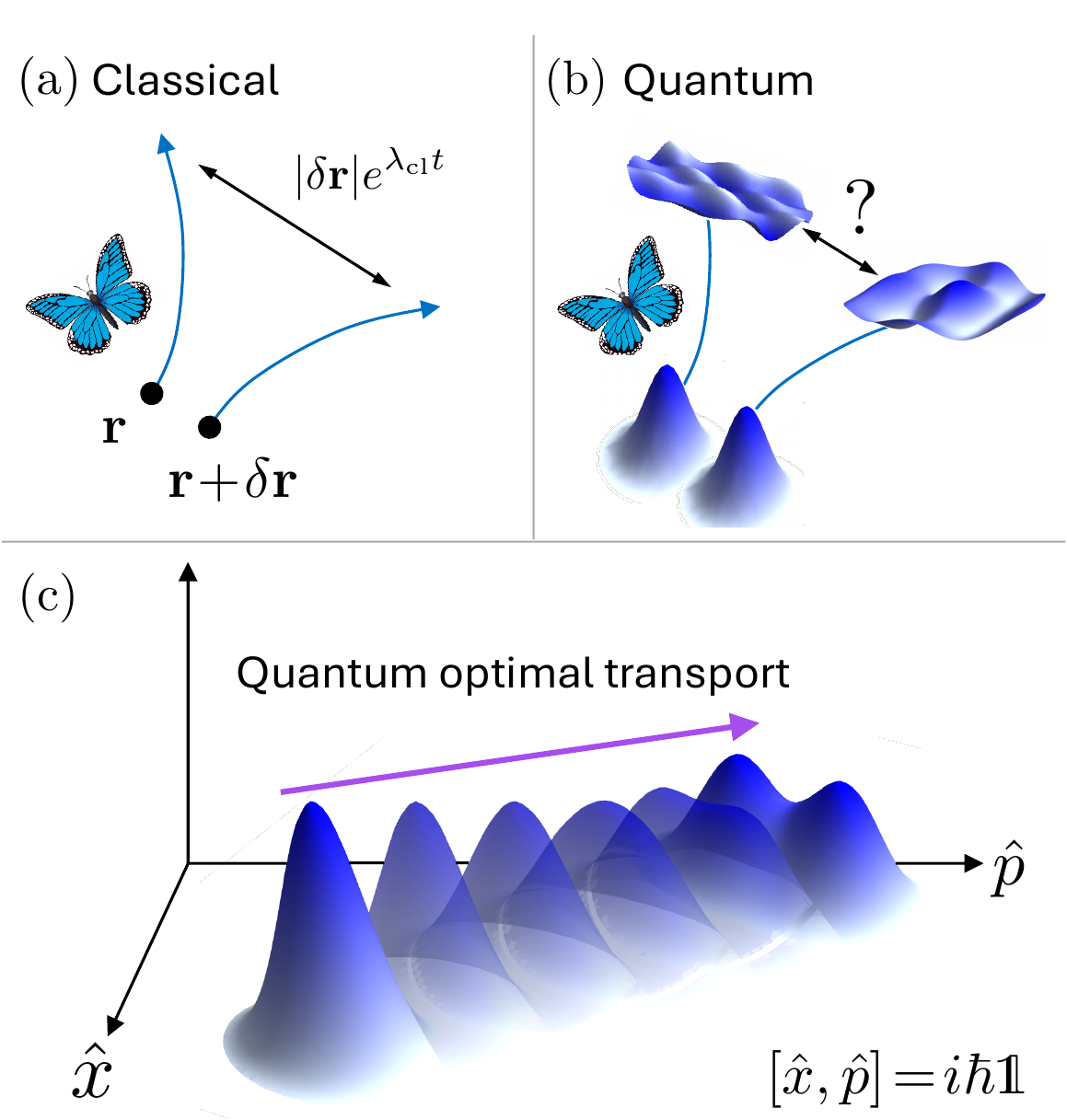}
    \caption{(a) In classical chaos, trajectories from initial conditions $\textbf{r}$ and $\textbf{r} + \delta \textbf{r}$ diverge as $|\delta \textbf{r}|\,e^{\lambda_{\rm cl} t}$, where $\lambda_{\rm cl}$ is the maximal Lyapunov exponent. (b) Quantum states are extended in phase space, making it unclear how to measure distances between evolved states. (c) Quantum optimal transport provides a natural distance metric by quantifying the cost of transporting between quantum configurations while respecting noncommutativity of quantum variables.}
    \label{fig:conceptual}
\end{figure}

In this paper, we define a notion of Lyapunov exponent for quantum systems that closely parallels the classical definition in terms of exponential divergence of trajectories. Our approach addresses a long-standing conceptual challenge in formulating quantum chaos: the absence of an appropriate distance measure on quantum states that incorporates both the geometry of Hilbert space and the geometry of quantum phase space. To achieve this, we apply tools from semiclassical analysis~\cite{zworski2012semiclassical} to the recently developed theory of quantum optimal transport~\cite{beatty2025wasserstein}, which provides precisely such a distance measure. Our work reveals subtle features of quantum phase space that make defining natural quantum Lyapunov exponents challenging. Indeed, a direct quantum generalization of the ordinary Lyapunov exponent can be infinite for reasons unrelated to dynamical instability. We show that our refined quantum Lyapunov exponent is finite and recovers the classical global expansion rate in the semiclassical limit, and therefore the usual maximal Lyapunov exponent in common settings where the two agree, for example when the relevant instability is confined to a compact region.

The essential difficulty in defining quantum chaos can be illustrated through a direct comparison with classical chaos. Classical chaotic dynamics are typically visualized in \textit{phase space}, where each point $\mathbf{r} = (x_1, \dots, x_D, p_1, \dots, p_D)$ represents the positions and momenta of particles. When a classical system evolves from slightly different initial conditions, separated by a small perturbation $\delta \mathbf{r}$, the resulting trajectories diverge exponentially, with separation growing as $|\delta \mathbf{r}| e^{\lambda_{\rm cl} t}$, where $\lambda_{\rm cl}$ is the (classical) maximal \textit{Lyapunov exponent} (Fig.~\ref{fig:conceptual}(a)).

Quantum mechanics introduces complications to this picture. The canonical commutation relation $[\hat{x}, \hat{p}] = i \hbar \mathds{1}$ renders quantum phase space inherently non-commutative, intertwining position and momentum. The Heisenberg uncertainty principle $\Delta x \Delta p \geq \hbar / 2$ prevents quantum states from being localized to precise phase space points. Instead, quantum states are represented by extended distributions, namely Wigner functions, that evolve into intricate patterns under chaotic dynamics~\cite{curtright2012quantum} (Fig.~\ref{fig:conceptual}(b)). The fact that quantum states exist as extended distributions leads to the question: how do we measure the `distance' between quantum states to quantify chaos? Unlike classical point particles whose trajectories clearly separate over time, the overlap and interference of quantum distributions defy straightforward geometric measures of distance.

These considerations reveal two differences between quantum and classical dynamics: quantum phase space forms a \textit{non-commutative geometry}~\cite{curtright2012quantum, connes2010noncommutative}, and quantum wavefunctions are inherently \textit{extended distributions} rather than points. Defining a meaningful distance between quantum states therefore requires synthesizing the geometric structures of both classical phase space and Hilbert space in a quantum-mechanically consistent framework. Classical optimal transport theory is naturally suited to this problem, as it provides a general framework for measuring distances between extended distributions. The \textit{optimal transport distance}, also known as the ``earthmover distance'' or ``Wasserstein distance'', elegantly quantifies distances between extended distributions by imagining distributions as piles of soil and measuring the minimum work required to rearrange one pile into another~\cite{villani2008optimal}. This approach has significantly impacted the theory of partial differential equations~\cite{jordan1998variational,otto2001geometry} and contemporary machine learning~\cite{cuturi2013sinkhorn,arjovsky2017wasserstein}, providing a natural bridge between physical metric geometry and probability distributions. Crucially, optimal transport distance reduces to conventional geometric distances when distributions shrink to points.

To study an appropriate distance between quantum states, we leverage recent advances in \textit{quantum optimal transport theory}~\cite{carlen2014analog, golse2016mean, carlen2020non, de2021quantum, cole2023quantum, lafleche2023quantum, beatty2025wasserstein}, which generalizes classical optimal transport to incorporate the non-commutative geometry of quantum phase space and the extended nature of quantum states (Fig.~\ref{fig:conceptual}(c)). Related earlier constructions defined optimal transport distances between quantum states through classical optimal transport of their Husimi phase-space distributions~\cite{zyczkowski1998monge,zyczkowski2001monge}. Here we instead build on the modern noncommutative quantum optimal transport framework, whose intrinsic geometry of quantum states provides the structure needed to define quantum Lyapunov exponents in direct analogy with their classical counterparts.

We address several technical challenges while ensuring that our main results apply to any quantum optimal transport distance satisfying natural axioms. First, we prove that the quantum optimal transport distance agrees with the classical optimal transport distance up to additive $O(\hbar^{1/2})$ corrections in the semiclassical regime. For the Carlen--Maas distances~\cite{carlen2020non}, we additionally show that they obey our axioms. Second, we construct a quantum Lyapunov exponent that avoids the divergences encountered in a na\"{i}ve infinitesimal definition. Third, we show that this exponent equals the classical global expansion coefficient in the semiclassical limit, and hence the usual maximal Lyapunov exponent in the common settings where the two agree. Finally, we compare the quantum optimal transport exponent with out-of-time-order correlators (OTOCs)~\cite{larkin1969quasiclassical, shenker2014black, maldacena2016bound, cotler2018out}. A strength of our approach is its axiomatic foundation: the main result holds for any quantum optimal transport distance satisfying a short list of natural properties.

These results represent a significant step toward a unified theory of quantum chaos. By providing rigorous mathematical foundations that seamlessly connect quantum and classical perspectives, our framework opens new possibilities for characterizing dynamical behavior in diverse quantum systems from few-body models to complex many-body phenomena, and establishes quantum optimal transport as an essential tool for understanding the nature of chaos in our quantum universe.

\section{Results}

In Section~\ref{subsec:define} we construct a well-defined notion of quantum Lyapunov exponents using the quantum optimal transport framework, carefully addressing technical subtleties that cause na\"{i}ve approaches to diverge. We also demonstrate our approach in an explicit example for the Carlen--Maas distances~\cite{carlen2020non}, computing the quantum Lyapunov exponent exactly for the inverted quantum harmonic oscillator. In Section~\ref{subsec:synthesis} we prove that the quantum Lyapunov exponent equals the classical global expansion coefficient in the semiclassical limit and establish a precise comparison with out-of-time-order correlators, showing that OTOC growth is bounded by the same classical instability and saturates this bound under an additional condition.

\subsection{Quantum optimal transport \\ \qquad and its classical limit}\label{subsec:classical}

We begin by establishing the mathematical framework for quantum optimal transport distances in phase-space dynamics. Let $\mathcal{D}$ denote the space of density matrices on $L^2(\mathbb{R}^D)$, and let $\hat{r} \equiv (\hat{x}_1,\ldots,\hat{x}_D,\hat{p}_1,\ldots,\hat{p}_D)$ denote the vector of canonical phase-space operators. For $\alpha \in \mathbb{R}^{2D}$, we define the phase-space translation $\tau_\alpha \equiv \exp(i\alpha^\intercal J\hat{r}/\hbar)$ and write $\mathcal{T}_\alpha[\rho] \equiv \tau_\alpha\rho\tau_\alpha^\dagger$. Moreover, let $|\alpha\rangle$ denote the coherent state centered at $\alpha$. Since coherent states resolve the identity, every quantum state $\rho$ defines a normalized and nonnegative phase-space distribution, called its Husimi distribution, given by $\Husimi_\rho(\alpha) \equiv (2\pi\hbar)^{-D}\langle\alpha|\rho|\alpha\rangle$. We denote the classical $p$-Wasserstein distance between the Husimi distributions of $\rho$ and $\sigma$ by $W_p(\Husimi_\rho,\Husimi_\sigma)$.

Quantum optimal transport distances are naturally defined on weighted spaces of density matrices. We denote by $\mcal{D}_p$ the states with finite $p$-th phase-space moment, namely those satisfying $\int_{\mathbb R^{2D}} |\alpha|^p \Husimi_\rho(\alpha)\diff\alpha < \infty$. We define a quantum distance $d_p : \mathcal{D}_p \times \mathcal{D}_p \to \mathbb{R}_+$ for $p \geq 1$ to be an \textit{admissible} generalization of the $p$-Wasserstein distance if it satisfies four natural axioms beyond those of a metric:
\begin{enumerate}[label=A\arabic*]
\item\textbf{Translation invariance}:
\label{ax:tr-invar}
\vspace{-.25cm}  \[d_p(\mathcal{T}_{\alpha}(\rho),\mathcal{T}_{\alpha}(\sigma)) = d_p(\rho,\sigma).\]
\item\textbf{Translation distance}: 
\label{ax:tr-id}
$d_p(\rho,\mathcal{T}_\alpha(\rho)) = |\alpha|$.
\item\label{ax:conv}\textbf{Double convexity}: If $\rho = \int \rho_\theta\,\mu(\theta)\diff\theta$ and $\sigma = \int \sigma_\phi\,\nu(\phi)\diff\phi$, where $\mu$ and $\nu$ are probability distributions and $\Gamma(\theta,\phi)$ marginalizes to $\mu(\theta)$ and $\nu(\phi)$, then
\begin{align}
d_p(\rho,\sigma)^p \leq \int d_p(\rho_\theta,\sigma_\phi)^p \diff\Gamma(\theta,\phi)\,.
\end{align}
\vspace{-.5cm} 
\item\label{ax:dp}\textbf{Husimi data-processing bound}: \begin{align}
W_p(\Husimi_\rho,\! \Husimi_\sigma) \leq d_p(\rho,\sigma)\,.
\end{align}
\end{enumerate}
\noindent These axioms encode basic physical requirements: (i) the distance should be invariant under phase space translations; (ii) the distance between a state and its translated version should equal the translation distance; (iii) classical mixing of quantum states should not artificially inflate distances; and (iv) the Husimi map sends quantum states to classical phase-space distributions, and the classical $p$-Wasserstein distance between the resulting distributions should not exceed the quantum optimal transport distance between the original states.\footnote{Axiom~(iv) could be weakened to $W_p(\Husimi_\rho,\Husimi_\sigma)\leq C\,d_p(\rho,\sigma)$ for any fixed $C$ without affecting our results, since such a constant disappears upon taking the exponential growth rate. The examples considered here satisfy the bound with $C=1$.} Examples of admissible distances include the Husimi--Monge distance of~\cite{zyczkowski1998monge}, corresponding to $p = 1$ in our notation, and the infinite-dimensional versions of the Carlen--Maas 1-Wasserstein and 2-Wasserstein distances for $p = 1$ and $2$, respectively~\cite{carlen2020non}. While the original Carlen--Maas distances were developed for finite-dimensional Hilbert spaces, in the Appendix we develop the appropriate generalization to $L^2(\mathbb{R}^D)$.

We would now like to show that admissible quantum distances are also bounded above by the classical optimal transport distance in the semiclassical limit. To understand this limit, note that coherent states become increasingly localized as $\hbar \to 0$, with their phase space width scaling as $\hbar^{1/2}$, while the Husimi distributions of semiclassical families of quantum states approach classical probability distributions on phase space (although they are in fact valid probability distributions for any $\hbar$). It is therefore natural to expect that a quantum optimal transport distance $d_p(\rho,\sigma)$ should be bounded above, up to corrections at the coherent-state scale, by the classical $p$-Wasserstein distance between the corresponding Husimi distributions. We prove this is indeed the case:
\begin{theorem}\label{thm:upperbd1}
For any admissible distance $d_p$ we have
\begin{align}
 W_p(\Husimi_\rho,\Husimi_\sigma) \leq d_p(\rho,\sigma) \leq W_p(\Husimi_\rho,\Husimi_\sigma) + O(\hbar^{\frac{1}{2}}),
\end{align}
where $W_p$ is the (classical) $p$-Wasserstein distance for any $p \geq 1$.
\end{theorem}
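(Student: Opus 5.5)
The lower bound is exactly the Husimi data-processing axiom~\ref{ax:dp}, so all the work is in the upper bound. The plan is to insert, via the triangle inequality, a ``smoothed'' version of each state that is a genuine mixture of coherent states weighted by the Husimi distribution. Concretely, for $\rho \in \mathcal{D}_p$ set
\begin{align}
\tilde\rho \equiv \int_{\mathbb{R}^{2D}} \Husimi_\rho(\alpha)\, |\alpha\rangle\langle\alpha| \diff\alpha ,
\end{align}
and similarly $\tilde\sigma$. Then
\begin{align}
d_p(\rho,\sigma) \leq d_p(\rho,\tilde\rho) + d_p(\tilde\rho,\tilde\sigma) + d_p(\tilde\sigma,\sigma),
\end{align}
and the goal is to show that the middle term is at most $W_p(\Husimi_\rho,\Husimi_\sigma)$ and that each outer term is $O(\hbar^{1/2})$.

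\textbf{Middle term.} Let $\Gamma$ be an optimal coupling of $\Husimi_\rho$ and $\Husimi_\sigma$ for the cost $|\alpha-\beta|^p$. Since $\tilde\rho$ and $\tilde\sigma$ are mixtures of the pure states $|\alpha\rangle\langle\alpha|$ and $|\beta\rangle\langle\beta|$ with weights $\Husimi_\rho$ and $\Husimi_\sigma$, double convexity~\ref{ax:conv} gives
\begin{align}
d_p(\tilde\rho,\tilde\sigma)^p \leq \int d_p\bigl(|\alpha\rangle\langle\alpha|, |\beta\rangle\langle\beta|\bigr)^p \diff\Gamma(\alpha,\beta).
\end{align}
Up to a phase, the coherent state at $\beta$ is $\tau_{\beta-\alpha}$ applied to the coherent state at $\alpha$, and the phase cancels in $\mathcal{T}$. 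Hence by~\ref{ax:tr-id} the integrand equals $|\alpha-\beta|^p$, and the right-hand side is $W_p(\Husimi_\rho,\Husimi_\sigma)^p$.

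\textbf{Outer terms.} I will use the standard semiclassical fact that the anti-Wick quantization of the Husimi function is a Gaussian phase-space average of $\rho$:
\begin{align}
\tilde\rho = \int g_\hbar(\alpha)\, \mathcal{T}_\alpha(\rho) \diff\alpha ,
\end{align}
where $g_\hbar$ is a centered Gaussian probability density of covariance $O(\hbar)$. In Wigner terms this is the statement that $\tilde\rho$ has Wigner function equal to that of $\rho$ convolved twice with the coherent-state Gaussian. I will verify it by comparing Weyl symbols, or by checking matrix elements against a dense set of coherent states, which only requires Gaussian integrals. Given this representation, apply~\ref{ax:conv} with $\rho$ written trivially as $\int \rho\, g_\hbar(\alpha)\diff\alpha$ and with the diagonal coupling $\alpha \leftrightarrow \alpha$. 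Axiom~\ref{ax:tr-id} then yields
\begin{align}
d_p(\rho,\tilde\rho)^p \leq \int |\alpha|^p g_\hbar(\alpha)\diff\alpha = O(\hbar^{p/2}),
\end{align}
so $d_p(\rho,\tilde\rho) = O(\hbar^{1/2})$, uniformly in $\rho$. The same holds for $\sigma$.

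\textbf{Main obstacle.} The step I expect to need the most care is the rigorous identification of $\tilde\rho$ with the Gaussian-averaged state, including the exact covariance, which is needed to get the explicit constant. Alongside it, I must check that $\tilde\rho \in \mathcal{D}_p$ so that $d_p$ is defined on it. This follows because the Husimi moments of $\tilde\rho$ exceed those of $\rho$ only by the finite Gaussian moments. I also need the mixtures in~\ref{ax:conv} to be meaningful as Bochner integrals in trace class, which is routine given the continuity of $\alpha \mapsto \mathcal{T}_\alpha(\rho)$.
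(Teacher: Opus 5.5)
Your proposal is correct and is essentially the paper's proof. Your $\tilde\rho$ is exactly the paper's coherent-state noising channel $\mathcal N[\rho]$, which the paper likewise writes as the Gaussian mixture $\int \gamma_\hbar(a)\,\mathcal T_a[\rho]\diff a$ with covariance $\hbar\,\Id_{2D}$ (consistent with your ``convolved twice'' count). The paper then bounds the outer terms by double convexity with the diagonal coupling and the middle term by double convexity with a coupling of the Husimi distributions, getting the explicit error $2M_{D,p}\hbar^{1/2}$; the side checks you flag (that $\tilde\rho\in\mathcal D_p$ and that the mixtures make sense in trace class) are not spelled out there and only add rigor.
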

\noindent This result establishes that admissible quantum optimal transport distances are well approximated by their classical counterparts, preserving the underlying geometry of phase space. While the lower bound comes from axiom (iv), the upper bound is a less obvious consequence of the other three axioms and is proved in the Appendix.

For any concrete quantum optimal transport distance, admissibility must be verified from its specific structure. In the Appendix, we establish admissibility for the Carlen--Maas distances $d_{\text{CM},1}$ and $d_{\text{CM},2}$, which also satisfy the following bounds:
\begin{lemma}
\label{lemma:CMsymp1}
Let $S \in \operatorname{Sp}(2D,\mathbb R)$ be a linear symplectic transformation, let $U_S$ be the corresponding metaplectic unitary, and define $\rho_S \equiv U_S \rho U_S^\dagger$ and $\sigma_S \equiv U_S \sigma U_S^\dagger$. Then we have
\begin{align*}
\begin{aligned}
\|S\|^{-1} d_{\text{\rm CM},p}(\rho,\sigma) &\leq d_{\text{\rm CM},p}(\rho_S,\sigma_S) \leq \|S\|\,d_{\text{\rm CM},p}(\rho,\sigma)
\end{aligned}
\end{align*}
for $p = 1,2$.
\end{lemma}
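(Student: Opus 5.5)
Only the upper bound needs proof: the lower bound follows by applying the upper bound to $S^{-1}$, whose metaplectic unitary is $U_S^\dagger$ (up to a phase) and which satisfies $\|S^{-1}\| = \|S\|$ because $S^{-1} = J^{-1}S^\intercal J$ for symplectic $S$.

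To prove $d_{\mathrm{CM},p}(\rho_S,\sigma_S)\leq \|S\|\,d_{\mathrm{CM},p}(\rho,\sigma)$, I will use the structure of the Carlen--Maas distances as extended to $L^2(\mathbb R^D)$ in the Appendix. There they are built from the noncommutative differential structure generated by the canonical operators $\hat r_j$:
\begin{itemize}
\item the quantum gradient is $\partial_j A = [\hat r_j, A]$ (up to the factor $i/\hbar$ and the $J$-rotation that turns $\hat r_j$ into translation generators);
\item $d_{\mathrm{CM},2}$ is a Benamou--Brenier infimum of $\int_0^1 \sum_j \|V_j(t)\|^2_{\rho_t}\,dt$ over paths $\dot\rho_t = -\sum_j \partial_j^\dagger V_j(t)$;
\item $d_{\mathrm{CM},1}$ is dual to the Lipschitz seminorm $\|\nabla A\|$, i.e. a supremum of $\Tr[A(\rho-\sigma)]$ over observables with $\|(\partial_1 A,\dots,\partial_{2D}A)\|\le 1$ in the appropriate operator sense.
\end{itemize}

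The key step is the metaplectic covariance $U_S^\dagger \hat r U_S = S\hat r$, which holds exactly for linear $S$. It gives
\[
U_S^\dagger [\hat r_j, A] U_S = \sum_k S_{jk}\,[\hat r_k, U_S^\dagger A U_S].
\]
Hence the conjugated gradient is $\nabla(U_S^\dagger A U_S) = S^{-1}$-mixed, i.e. a fixed linear recombination of the components of the gradient by the real matrix $S$ or its inverse.

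For $p=1$, I conjugate a test observable $A$ into $A' = U_S^\dagger A U_S$ and use the vector inequality $\|\sum_k M_{jk} X_k\| \le \|M\|\,\|X\|$. This inequality holds for operator-valued vectors, whether the norm is the column operator norm $\|\sum_k X_k^\dagger X_k\|^{1/2}$ or the Hilbert--Schmidt-type weighted norm, because $M$ has real scalar entries and acts as $M\otimes I$. It follows that the Lipschitz ball is distorted by at most a factor $\|S\|$, which yields the bound after taking the supremum.

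For $p=2$, I map a near-optimal path $(\rho_t, V_t)$ for $(\rho,\sigma)$ to the path $(U_S\rho_tU_S^\dagger,\; S^{-\intercal}\text{-recombined } U_S V_t U_S^\dagger)$ and check that it satisfies the continuity equation for $(\rho_S,\sigma_S)$. The weighted norm $\|V\|_\rho^2 = \sum_j\langle V_j, [\rho]^{-1}V_j\rangle$ involves the same operator mean on each component and is unitarily covariant. So mixing components by a real matrix $M$ changes the action by at most $\|M\|^2$, and the cost scales by $\|S\|$.

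The main obstacle is bookkeeping of which matrix appears. The gradients transform by $S$, while the dual velocities transform by $S^{-\intercal}$, and the $J$ in the translation generators enters as well. I need to confirm that in each case the relevant operator norm is $\|S\|$ or $\|S^{-1}\|$, and symplecticity makes these equal. I also need the Appendix's domain issues for unbounded $\hat r$ to be compatible with conjugation, i.e. that $U_S$ preserves the weighted spaces $\mathcal D_p$ and the core of test operators. That holds because $U_S$ maps Schwartz space to itself and $|S\alpha|\le\|S\||\alpha|$ controls moments.
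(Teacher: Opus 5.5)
Your reduction of the lower bound to the upper bound via $S^{-1}$ (using $U_{S^{-1}}=U_S^\dagger$ up to a phase and $\|S^{-1}\|=\|S\|$) is what the paper does, and so is your $p=1$ argument. The key fact is $\|\mathcal U_S^*[A]\|_{\rm Lip}\le\|S\|\,\|A\|_{\rm Lip}$. Your mixing inequality also covers the paper's seminorm $\sup_{|u|=1}\hbar^{-1}\|[u\cdot\hat r,A]\|_{\rm op}$, because recombining gradient components by a real matrix $M$ just replaces $u$ by $M^\intercal u$.

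For $p=2$ there is a genuine mismatch: your argument uses a definition the paper does not have. On $L^2(\mathbb R^D)$ the paper defines $d_{\mathrm{CM},2}$ only through the dual action $\mathsf a_{\rho,0}(\xi)=\sup_{A\in\mathcal B_0}\{2\Tr(A\xi)-Q_{\rho,0}(A)\}$. This action is integrated along $C^1$ curves and then replaced by its lower-semicontinuous envelope. The paper explicitly avoids the primal continuity equation $\dot\rho_t=-\sum_j\partial_j^\dagger V_j(t)$, because the noncommutative divergence equation is degenerate in infinite dimension. So there is no near-optimal pair $(\rho_t,V_t)$ to push forward, and your competitor-path step has nothing to act on.

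The repair is to run your bookkeeping on the cotangent side instead, which is exactly the paper's proof.
\begin{itemize}
\item At $\beta=0$, the parallelogram identity on the pairs $V_{j,\pm}\propto\hat x_j\mp i\hat p_j$ gives $Q_{\rho,0}(A)=\hbar^{-2}\sum_k\langle[\hat r_k,A],\Theta_{\rho,0}([\hat r_k,A])\rangle_{\rm HS}$. This is where $\beta=0$ is needed: for $\beta\neq0$ the two channels carry different means $\Theta_{\rho,\pm2\beta\hbar}$, so the "same mean on each component" you assumed fails.
\item The covariances $[\hat r,\mathcal U_S[A]]=\mathcal U_S([S\hat r,A])$ and $\Theta_{\mathcal U_S[\rho],0}\circ\mathcal U_S=\mathcal U_S\circ\Theta_{\rho,0}$ then give $Q_{\mathcal U_S[\rho],0}(\mathcal U_S[A])\le\|S\|^2Q_{\rho,0}(A)$.
\item Substitute $A\mapsto\mathcal U_S[A]$ in the supremum; this is where your remark about $\mathcal U_S$ preserving the test class is used. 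It yields $\mathsf a_{\mathcal U_S[\rho],0}(\mathcal U_S[\xi])\ge\|S\|^{-2}\mathsf a_{\rho,0}(\xi)$, and the reverse inequality follows from $S^{-1}$.
\item $\mathcal U_S$ is a trace-norm isometry mapping $C^1$ curves to $C^1$ curves. Hence both bounds pass through the time integral and the envelope to $\widetilde d_{0,2}$.
\end{itemize}
Note the direction flip relative to your route. On the dual side, transforming test observables gives the lower bound on the transformed action, whereas transforming paths gives the upper bound.
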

\noindent Thus linear canonical transformations stretch or compress the Carlen--Maas distances by at most the operator norm factor of the corresponding classical symplectic transformation.

\subsection{Defining quantum Lyapunov exponents}\label{subsec:define}

Having established our framework for quantum optimal transport distances, we now turn to defining quantum Lyapunov exponents. Our goal is to quantify how small perturbations in initial quantum states grow exponentially over time, mirroring the classical notion of trajectory divergence. For a quantum state $\rho$ evolving to $\rho(t)$ under unitary evolution or a more general quantum channel (with similar notation for $\sigma(t)$), a natural starting point would be to define the quantum Lyapunov exponent in direct analogy to the classical case\footnote{The direct classical counterpart of this finite-separation definition is the global expansion coefficient defined below. In many standard settings, this coincides with the usual maximal Lyapunov exponent.}:
\[
\lambda_{\text{q}} \overset{?}{=} \lim_{T\to\infty} \lim_{\hbar\to 0} \sup_{\rho,\sigma} \frac{1}{T} \log\!\left(\frac{d_p(\rho(T), \sigma(T))}{d_p(\rho(0),\sigma(0))}\right)
\]
However, this na\"{i}ve definition can be infinite for reasons unrelated to chaos. To see the issue, consider the Carlen--Maas 2-Wasserstein distance $d_{{\rm CM},2}$~\cite{carlen2020non}. In finite dimension, its geometry is smooth near full-rank states, but it becomes singular near states with zero eigenvalues; related singular behavior can occur in infinite dimension. Moreover, a family of nearby states $\rho_\varepsilon = \rho + \varepsilon\nu + O(\varepsilon^2)$ may initially satisfy $d_{{\rm CM},2}(\rho_\varepsilon,\rho) = O(\varepsilon)$, but a general Hamiltonian evolution need not preserve the perturbation directions with this linear behavior. A simple classical analogue already exhibits this singular behavior: near the boundary of the probability simplex, an $O(\varepsilon)$ perturbation of the probability distribution can have transport distance of order $\varepsilon\sqrt{\log(1/\varepsilon)}$, rather than $O(\varepsilon)$. Thus, if evolution carries an initially regular perturbation into such a singular direction, the ratio $d_{{\rm CM},2}(\rho_\varepsilon(T),\rho(T))/d_{{\rm CM},2}(\rho_\varepsilon,\rho)$ can grow as $\sqrt{\log(1/\varepsilon)}$ and diverge as $\varepsilon \to 0$. The supremum in the na\"{i}ve definition can therefore be infinite already at fixed time, even when the dynamics has no chaotic instability. This divergence reflects the microscopic geometry of the transport distance rather than genuine exponential sensitivity to initial conditions. We discuss this phenomenon further in Appendix~\ref{sec:finite-action-tangents}.

The singular behavior above occurs at separations comparable to or smaller than the intrinsic quantum resolution scale. We therefore restrict the Lyapunov problem to $\hbar$-dependent pairs of states whose initial transport distance is parametrically larger than $\hbar^{1/2}$. This scale has a simple phase-space interpretation: because $[\hat{x}_j,\hat{p}_k] = i\hbar\,\delta_{jk}$, coherent states resolve phase space only to a characteristic linear scale of order $\hbar^{1/2}$. Writing $d_{p,\hbar}$ to make the $\hbar$-dependence of the quantum optimal transport distance explicit, we use $d_{p,\hbar}(\rho_\hbar,\sigma_\hbar) = \omega(\hbar^{1/2})$ to mean that $(\rho_\hbar,\sigma_\hbar)_{\hbar>0}$ is a family satisfying $\lim_{\hbar \to 0}\hbar^{-1/2} d_{p,\hbar}(\rho_\hbar,\sigma_\hbar)=\infty$. At these separations, the $O(\hbar^{1/2})$ discrepancy between quantum and classical transport geometry is negligible relative to the initial distance, so the Lyapunov quotient does not probe the singular microscopic geometry described above.  These considerations motivate the following definition of the quantum Lyapunov exponent.

\begin{definition}[Quantum Lyapunov exponent]
\label{def:qlyap1}
For any admissible quantum optimal transport distance $d_{p,\hbar}$, we define the maximal quantum Lyapunov exponent by
\begin{widetext}
\begin{equation}
\label{eq:lambdaq-def}
\lambda_{\rm q} \equiv \lim_{T\to\infty}\sup_{d_{p,\hbar}(\rho_\hbar,\sigma_\hbar) = \omega(\hbar^{1/2})}\limsup_{\hbar\to 0}\frac{1}{T}\log\!\left(\frac{d_{p,\hbar}(\rho_\hbar(T),\sigma_\hbar(T))}{d_{p,\hbar}(\rho_\hbar(0),\sigma_\hbar(0))}\right).
\end{equation}
\end{widetext}
where the supremum is over $\hbar$-dependent families $(\rho_\hbar,\sigma_\hbar)_{\hbar>0}$ satisfying the asymptotic condition above. We suppress the explicit $\hbar$ subscript on $d_{p,\hbar}$ below.
\end{definition}
\noindent The natural classical counterpart of Eq.~\eqref{eq:lambdaq-def} is the global expansion coefficient
\[\lambda_{\rm exp} \equiv \lim_{T\to\infty}\sup_{x\neq y} \frac{1}{T}\log\!\left(\frac{|\Phi_T(x)-\Phi_T(y)|}{|x-y|}\right),\]
where $\Phi_T$ denotes the classical Hamiltonian flow for time $T$. This choice is dictated by the quantum definition: $d_{p}$ measures finite state separations, and the supremum over initial pairs is taken at fixed $T$ before the long-time limit. Equivalently, if $J(T) \equiv \sup_{x\neq y}|\Phi_T(x)-\Phi_T(y)|/|x-y|$ is the Lipschitz constant of the classical flow, then $\lambda_{\rm exp} = \lim_{T\to\infty} T^{-1}\log J(T)$, and $J(T)$ is precisely the factor controlling classical transport through $W_p((\Phi_T)_\#\mu,(\Phi_T)_\#\nu) \leq J(T)W_p(\mu,\nu)$. In general, $\lambda_{\rm exp}$ can differ from the usual maximal Lyapunov exponent $\lambda_{\rm cl} \equiv \sup_x \lim_{T\to\infty} T^{-1}\log\|D\Phi_T(x)\|$, because the initial pair maximizing finite-time expansion may depend on $T$. Thus $\lambda_{\rm exp}$ is the classical quantity naturally selected by the transport problem. In many physically natural settings, such as compact phase spaces or dynamics whose relevant instability is confined to a compact region, $\lambda_{\rm exp} = \lambda_{\rm cl}$. We will show below that $\lambda_{\rm q} = \lambda_{\rm exp}$, and hence that $\lambda_{\rm q} = \lambda_{\rm cl}$ in these settings.  In Appendix~\ref{sec:classical-ot-exponent}, we formulate a purely classical optimal transport analogue of Definition~\ref{def:qlyap1} and show that it likewise recovers $\lambda_{\rm exp}$.

Before proving general properties of the quantum Lyapunov exponent, we consider a concrete example. We examine the quantum inverted harmonic oscillator, a paradigmatic system exhibiting an exponential instability. The classical inverted oscillator, governed by $H(x,p) = \frac{1}{2}\,p^2 - \frac{1}{2} x^2$, features an unstable equilibrium at the origin with trajectories diverging exponentially at rate $\lambda_{\text{cl}} = 1$ (Fig.~\ref{fig:conceptual2}(b)). For the quantum system, we establish a precise result:
\begin{theorem}[Quantum Lyapunov exponent for the inverted harmonic oscillator]
\label{thm:inverted1}
Let $\rho(t) = e^{-i\hat H t/\hbar}\rho e^{i\hat H t/\hbar}$, and similarly for $\sigma(t)$, where $\hat H$ is the quantum inverted-harmonic-oscillator Hamiltonian. Then, for any admissible quantum optimal transport distance,
\[
\lambda_{\rm q} = \lambda_{\rm exp} = \lambda_{\rm cl} = 1.
\]
For the Carlen--Maas distances $d_{\rm CM,1}$ and $d_{\rm CM,2}$, the stronger finite-time bound
\[
d_{\text{\rm CM},p}(\rho(t),\sigma(t)) \leq e^t d_{\text{\rm CM},p}(\rho,\sigma)
\]
holds for $p = 1,2$ and is saturated by suitable translated pairs of states.
\end{theorem}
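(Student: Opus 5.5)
The inverted oscillator $\hat H=\tfrac12\hat p^2-\tfrac12\hat x^2$ is quadratic, so its propagator is the metaplectic unitary $U_{S_t}$ of the classical linear flow
\[
S_t=\begin{pmatrix}\cosh t & \sinh t\\ \sinh t & \cosh t\end{pmatrix}.
\]
This holds exactly, for every $\hbar$. The matrix $S_t$ is symmetric with eigenvalues $e^{\pm t}$, so $\|S_t\|=e^t$. The classical quantities are then immediate. Since $\Phi_T(x)-\Phi_T(y)=S_T(x-y)$, we get $J(T)=e^T$ and $\lambda_{\rm exp}=1$. Also $\|D\Phi_T(x)\|=e^T$ for every $x$, so $\lambda_{\rm cl}=1$.

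\emph{Carlen--Maas bound.} Applying Lemma~\ref{lemma:CMsymp1} with $S=S_t$ gives directly
\[
d_{{\rm CM},p}(\rho(t),\sigma(t))\le e^t\,d_{{\rm CM},p}(\rho,\sigma).
\]
For saturation, take $\sigma=\mathcal{T}_\alpha(\rho)$ with $\alpha$ along the unstable eigenvector $(1,1)/\sqrt2$. Metaplectic covariance gives $U_S\tau_\alpha U_S^\dagger=\tau_{S\alpha}$, so $\sigma(t)=\mathcal{T}_{S_t\alpha}(\rho(t))$. Axiom~\ref{ax:tr-id} then yields
\[
d(\rho(t),\sigma(t))=|S_t\alpha|=e^t|\alpha|=e^t d(\rho,\sigma).
\]
I need to check the covariance sign convention with $\tau_\alpha=\exp(i\alpha^\intercal J\hat r/\hbar)$. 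Since $S^\intercal J S=J$ it should come out as stated.

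\emph{General admissible $d_p$, lower bound.} The saturating pair above works for any admissible distance, because it only uses axiom~\ref{ax:tr-id}. Let $\alpha_\hbar$ have fixed size, which is $\omega(\hbar^{1/2})$. The quotient is then exactly $e^T$ for all $\hbar$, so $\lambda_{\rm q}\ge1$.

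\emph{General admissible $d_p$, upper bound.} Here we do not have the lemma, so I would use Theorem~\ref{thm:upperbd1} together with exact Husimi covariance. The Husimi function of $U_S\rho U_S^\dagger$ is not simply the pushforward of $\Husimi_\rho$ by $S$, because coherent states map to squeezed states. The fix is to write $\Husimi_{\rho(T)}=(S_T)_\#\Husimi_\rho * g_T$, where $g_T$ is a Gaussian whose covariance is of order $\hbar\,e^{2T}$.

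I would avoid that convolution and instead bound through the Wigner function, which is transported exactly. Alternatively, and more simply, I would use
\[
d_p(\rho(T),\sigma(T))\le W_p(\Husimi_{\rho(T)},\Husimi_{\sigma(T)})+O_T(\hbar^{1/2}).
\]
Then I would note that $\Husimi_{\rho(T)}=\Husimi^{(S_T)}_\rho\circ S_T^{-1}$, the Husimi function taken with respect to squeezed coherent states. Both Husimi-type distributions are Gaussian smoothings of the same Wigner function, so they are within $W_p$-distance $O_T(\hbar^{1/2})$ of each other. This gives
\[
W_p(\Husimi_{\rho(T)},\Husimi_{\sigma(T)})\le e^T\,W_p(\Husimi_\rho,\Husimi_\sigma)+O_T(\hbar^{1/2}).
\]
Axiom~\ref{ax:dp} gives $W_p(\Husimi_\rho,\Husimi_\sigma)\le d_p(\rho,\sigma)$. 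Dividing by $d_p(\rho,\sigma)=\omega(\hbar^{1/2})$, the error terms vanish in the $\limsup_{\hbar\to0}$ at fixed $T$. This gives $\lambda_{\rm q}\le1$.

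\emph{Main obstacle.} The hardest step is the smoothing comparison, namely that changing the Gaussian window costs only $O_T(\hbar^{1/2})$ in $W_p$. I would prove it by coupling. If $Z_1,Z_2$ are Gaussian noises of covariance $O(\hbar e^{2T})$, then $\Husimi^{(1)}=W*\gamma_1$ and $\Husimi^{(2)}=W*\gamma_2$. Wigner functions are not positive, so I would instead write $\gamma_2=\gamma_1*\eta$ whenever the covariance of $\gamma_2$ dominates that of $\gamma_1$. I would first rescale by a common larger Gaussian to arrange this. Convolving a probability measure with $\eta$ moves it by at most $\big(\mathbb E|Z|^p\big)^{1/p}=O(e^T\hbar^{1/2})$ in $W_p$, which is the required bound.
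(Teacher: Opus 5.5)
Your proof is correct. For the Carlen--Maas statements it coincides with the paper's argument: Lemma~\ref{lemma:CMsymp1} with $\|S_t\|=e^t$, and saturation by translating along $2^{-1/2}(1,1)$. Your sign check does work out. With $U_S^\dagger\hat r U_S=S\hat r$ and $S^\intercal JS=J$ one gets $U_S\tau_\alpha U_S^\dagger=\tau_{S\alpha}$, and $e^{-i\hat Ht/\hbar}$ is $U_{S_t}$ up to a phase.

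Where you differ is the general admissible case. The paper reads $\lambda_{\rm q}=1$ off its main theorem (Theorem~\ref{thm:precise-main}). Its upper bound there relies on the Wasserstein--Egorov estimate imported from~\cite{cotler2025egorov} (Theorem~\ref{thm:wass-egorov}, applicable since the inverted oscillator lies in $S_2(1)$). Its lower bound uses near-maximizing coherent states together with that same estimate.

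You use the quadratic structure directly instead. For the lower bound, the translated pair gives a quotient of exactly $e^T$ for every $\hbar$, using only axiom~\ref{ax:tr-id}. For the upper bound, you combine exact metaplectic covariance of the Wigner function with a coupling through a common coarser Gaussian. That coupling is legitimate because both windowed distributions are genuine probability densities. Your route makes the example self-contained and independent of the companion work, but it does not extend beyond quadratic Hamiltonians. The paper's route is a one-line corollary of a result valid for all $H\in S_2(1)$.

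Two small remarks. First, the decomposition you initially write, $\Husimi_{\rho(T)}=(S_T)_\#\Husimi_\rho*g_T$, is false: it would require $g_T$ to have covariance $\tfrac{\hbar}{2}(I-S_TS_T^\intercal)$, which is indefinite. You are right to discard it. Second, the squeezed Husimi functions are unnecessary. Let $\mathcal W_\rho$ denote the Wigner function. Then $\Husimi_{\rho(T)}$ and $(S_T)_\#\Husimi_\rho$ are both Gaussian smoothings of $(S_T)_\#\mathcal W_\rho$, with covariances $\tfrac{\hbar}{2}I$ and $\tfrac{\hbar}{2}S_TS_T^\intercal$. Both are dominated by $\tfrac{\hbar}{2}e^{2T}I$, so the same coupling trick gives $W_p(\Husimi_{\rho(T)},(S_T)_\#\Husimi_\rho)=O(e^{T}\hbar^{1/2})$ directly. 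This is exactly the quadratic case of Theorem~\ref{thm:wass-egorov}.
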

\noindent The equality $\lambda_{\rm q} = 1$ for a general admissible distance follows from the main result below, since the classical inverted oscillator has global expansion factor $J(t) = e^t$. For the Carlen--Maas distances, the finite-time estimate follows directly from Lemma~\ref{lemma:CMsymp1}. The classical flow is the symplectic transformation $S_t:(x,p)\mapsto(\cosh(t)x+\sinh(t)p,\sinh(t)x+\cosh(t)p)$, whose operator norm is $\|S_t\| = e^t$. To see that the bound is sharp, take the unstable unit vector $a = 2^{-1/2}(1,1)$, for which $S_t a = e^t a$. If $\sigma = \mcal T_a[\rho]$, then $\sigma(t) = \mcal T_{e^t a}[\rho(t)]$, and the translation-distance property gives $d_{\rm CM,p}(\rho(t),\sigma(t)) = e^t$.

While this inverted harmonic oscillator example provides valuable intuition, the quadratic Hamiltonian avoids the subtle sub-$\hbar$ structure of generic quantum phase space. In the following, we extend our analysis beyond quadratic Hamiltonians to a broad class of quantum dynamics.

\subsection{Synthesis of quantum and classical \\ \qquad Lyapunov exponents}\label{subsec:synthesis}

We now relate the quantum Lyapunov exponent in Definition~\ref{def:qlyap1} to the classical expansion coefficient $\lambda_{\rm exp}$ and to the exponent extracted from out-of-time-order correlators. Our main result shows that the quantum optimal transport exponent exactly recovers $\lambda_{\rm exp}$ in the semiclassical limit.

\begin{figure}[t!]
    \centering
    \includegraphics[width=\linewidth]{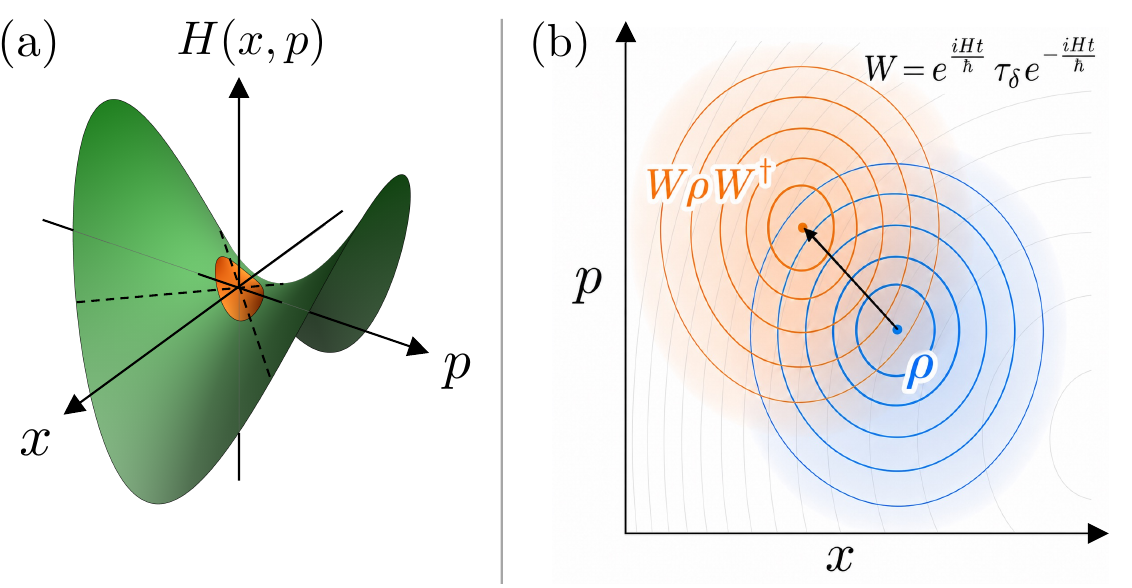}
    \caption{(a) Inverted harmonic oscillator Hamiltonian energy surface in phase space $x, p$. The initial distribution (orange) stretches exponentially along the diagonal and compresses exponentially perpendicular to it (dotted lines). (b) Part of the proof strategy for the $\lambda_{\rm q} \leq \lambda_{\rm exp}$ upper bound. A quantum state $\rho$ evolves under $W = e^{i H t} \tau_\delta e^{- i H t}$ (left).     The quantum optimal transport distance between $\rho$ and $W\rho W^\dagger$ is bounded in the companion work~\cite{cotler2025egorov} using classical coarse-graining together with phase-space localization estimates (right).}
    \label{fig:conceptual2}
\end{figure}

To place our framework in context, we compare our quantum Lyapunov exponent with the exponent extracted from out-of-time-order correlators (OTOCs), a widely studied diagnostic of quantum chaos~\cite{larkin1969quasiclassical, shenker2014black, maldacena2016bound, cotler2018out}.  OTOCs measure the growth of Heisenberg operator commutators and therefore provide an operator-level probe of sensitivity to perturbations.  In the phase space setting considered here, we define the associated exponent as follows:
\begin{definition}[OTOC Lyapunov exponent]
\label{def:OTOC}
Let $\{\rho_\hbar\}_{\hbar > 0}$ be a family of quantum states with a well-defined classical phase space limit, in the sense of Eq.~\eqref{eq:rho-sc-lim}. We define
\begin{equation*}
\lambda_{\mathrm{OTOC}} \equiv \limsup_{T \to \infty}\limsup_{\hbar \to 0} \frac{1}{2T}\log\!\left[\sigma_{\max}\!\left(\mathbf{C}_{\rho_\hbar}(T)\right)\right],
\end{equation*}
where $\mathbf{C}_{\rho_\hbar}(T)$ is the matrix with entries $C_{jk}(T) \equiv \frac{1}{\hbar^2}\operatorname{tr}\!\big(\rho_\hbar\left|[\hat r_j(T),\hat r_k(0)]\right|^2\big)$, and $\sigma_{\max}$ denotes the largest singular value.
\end{definition}

\noindent
\noindent
The OTOC probes a different aspect of the dynamics from the quantum optimal transport exponent. Our quantum optimal transport exponent $\lambda_{\rm q}$ optimizes the metric expansion over initial state perturbations, while the OTOC tracks the growth of specified phase-space operators against a fixed background state. For the class of Hamiltonians characterized in the Appendix, we show that the quantum optimal transport exponent equals the global classical expansion coefficient $\lambda_{\rm exp}$, while the OTOC exponent is bounded above by the same rate and saturates the bound when nearly maximally expanding trajectories are not exponentially suppressed by the limiting phase-space distribution.

\begin{theorem}[Main result, informal]
\label{thm:lambda-eq}
For dynamics obtained by quantizing a classical Hamiltonian system, and for any admissible quantum optimal transport distance,
\begin{equation}
\lambda_{\text{\rm q}} = \lambda_{\text{\rm exp}} \geq \lambda_{\text{\rm OTOC}}
\end{equation}
for every family of states $\rho_\hbar$ with a well-defined classical phase-space limit. The final inequality is saturated when the state assigns sufficient weight to trajectories exhibiting the maximal classical instability.
\end{theorem}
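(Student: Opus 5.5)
The proof splits into three inequalities: $\lambda_{\rm q} \geq \lambda_{\rm exp}$, $\lambda_{\rm q} \leq \lambda_{\rm exp}$, and $\lambda_{\rm OTOC} \leq \lambda_{\rm exp}$, plus the saturation statement. Throughout, Theorem~\ref{thm:upperbd1} is the bridge: at separations $\omega(\hbar^{1/2})$ we may replace $d_p$ by $W_p(\Husimi_\rho,\Husimi_\sigma)$ at relative cost $o(1)$. It therefore suffices to control Husimi distributions of evolved states, which an Egorov-type theorem relates to classically pushed-forward distributions.

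\emph{Lower bound.} Fix $T$ and a pair $x\neq y$ with $|\Phi_T(x)-\Phi_T(y)|/|x-y|\geq J(T)-\epsilon$. I would take $\rho_\hbar=|x\rangle\langle x|$ and $\sigma_\hbar=|y\rangle\langle y|$, coherent states at fixed $\hbar$-independent points. By axiom~A2, $d_p(\rho,\sigma)=|x-y|=\omega(\hbar^{1/2})$. By A4, $d_p(\rho(T),\sigma(T))\geq W_p(\Husimi_{\rho(T)},\Husimi_{\sigma(T)})$. Egorov's theorem on the fixed interval $[0,T]$ says $\Husimi_{\rho(T)}$ concentrates within $O(\hbar^{1/2})$ of $\Phi_T(x)$ in $W_p$, and similarly for $\sigma(T)$. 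Hence $W_p\to|\Phi_T(x)-\Phi_T(y)|$. Taking the $\limsup$ in $\hbar$, then the supremum over pairs and $\epsilon\to0$, gives $T^{-1}\log(J(T)-\epsilon)$ and so $\lambda_{\rm q}\geq\lambda_{\rm exp}$. The hypotheses on the Hamiltonian (symbol class, bounded derivatives of the flow) enter only through the Egorov and moment estimates.

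\emph{Upper bound.} Let $\rho,\sigma$ have $d_p(\rho,\sigma)=\delta_\hbar=\omega(\hbar^{1/2})$. By A4, $W_p(\Husimi_\rho,\Husimi_\sigma)\leq\delta_\hbar$, so there is an optimal coupling $\Gamma$ of the Husimi measures. My first attempt would decompose both states into coherent-state mixtures, but $\Husimi_\rho$ is not a $P$-representation, so I would instead use the anti-Wick/coarse-graining route suggested by Fig.~\ref{fig:conceptual2}(b).

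Write $\sigma$ approximately as a mixture of translates $\mathcal T_\alpha$ of pieces of $\rho$ localized in phase-space cells. Then apply double convexity (A3), using the coupling $\Gamma$, so that $d_p(\rho(T),\sigma(T))^p$ is bounded by an average of $d_p(\rho_\theta(T),\mathcal T_\alpha\rho_\theta(T))^p$-type terms plus $O(\hbar^{1/2})$ errors. Each term involves $W=e^{iHT}\tau_\alpha e^{-iHT}$. The companion estimate~\cite{cotler2025egorov} gives $d_p(\rho',W\rho'W^\dagger)\leq J(T)|\alpha|+O_T(\hbar^{1/2})$, via Egorov together with A1 and A2 applied to the classical flow's local translation. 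Summing yields
\[
d_p(\rho(T),\sigma(T))\leq J(T)\,\delta_\hbar+C_T\hbar^{1/2}.
\]
Since $\delta_\hbar\gg\hbar^{1/2}$, the error vanishes relative to $\delta_\hbar$, so the $\limsup$ of the ratio is at most $J(T)$, giving $\lambda_{\rm q}\leq\lambda_{\rm exp}$.

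I expect this step to be the main obstacle. Producing a decomposition compatible with A3 whose pieces are genuine quantum states, and controlling the error uniformly in the coupling, requires the localization estimates. Keeping $C_T$ finite (possibly growing like $e^{cT}$) at fixed $T$ is essential, and is exactly why the $\hbar\to0$ limit is taken before $T\to\infty$.

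\emph{OTOC.} By Egorov, $\hbar^{-1}[\hat r_j(T),\hat r_k]$ equals $-i\{r_j\circ\Phi_T,r_k\}$ quantized, up to $O_T(\hbar)$. That Poisson bracket is the matrix entry $\pm(D\Phi_T J)_{jk}$. Hence $C_{jk}(T)\to\int|(D\Phi_T(z)J)_{jk}|^2\,d\mu(z)$, where $\mu$ is the classical limit of $\rho_\hbar$, assuming moment bounds to justify the limit. Since $\|D\Phi_T(z)\|\leq J(T)$ for all $z$, we get $\sigma_{\max}(\mathbf C)\leq c\,J(T)^2$, and therefore $\lambda_{\rm OTOC}\leq\lambda_{\rm exp}=\lambda_{\rm q}$. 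For saturation, assume $\mu(\{z:\|D\Phi_T(z)\|\geq e^{(\lambda_{\rm exp}-\epsilon)T}\})\geq e^{-o(T)}$. The integral is then at least $e^{2(\lambda_{\rm exp}-\epsilon)T-o(T)}$ up to dimensional constants, which gives equality.
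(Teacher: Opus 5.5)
Your lower bound, OTOC bound, and saturation argument are sound and follow the paper closely. In the lower bound you use A4 directly, where the paper uses the upper half of Theorem~\ref{thm:upperbd1} plus the triangle inequality; both work. For the OTOC, no moment bounds are needed, since the bracket symbols lie in $S(1)$ for admissible $H$.

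The gap is in the upper bound $\lambda_{\rm q}\le\lambda_{\rm exp}$. You leave it as an unexecuted plan, and its key step does not follow from the axioms as you claim.

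\begin{itemize}
\item For a nonlinear flow, $W\rho'W^\dagger$ with $W=e^{-iT\hat H/\hbar}\tau_\alpha e^{iT\hat H/\hbar}$ is not a phase-space translate of $\rho'$. So A1 and A2 say nothing about $d_p(\rho',W\rho'W^\dagger)$.
\item A general $\sigma$ cannot be written as a mixture of translates of pieces of $\rho$. Take $\rho=|0\rangle\langle0|$: its only convex decomposition is trivial, so such mixtures are mixtures of coherent states, and they have all quadrature variances at least $\hbar/2$. A squeezed vacuum violates this.
\item The noising channel $\mathcal N$ does give an approximate decomposition with $O(\hbar^{1/2})$ error at time $0$. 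But nothing in A1--A4 lets you carry that error to time $T$.
\end{itemize}

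The axioms never compare $d_p$ before and after a nonlinear unitary; dynamics can only enter through Husimi distributions. Relatedly, the companion result the paper uses is the Husimi-level estimate $W_p(\Husimi_{\rho(t)},(\Phi_t)_\#\Husimi_\rho)\le e^{Ct}\hbar^{1/2}$, uniform in $\rho$. It is not a bound on $d_p$.

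The missing step is to apply the bridge you announced at the outset, at time $T$, and skip any decomposition. The chain below uses, in order: the upper half of Theorem~\ref{thm:upperbd1}; the Wasserstein--Egorov estimate for both states together with the triangle inequality for $W_p$; the classical Lipschitz pushforward bound; and A4 at time $0$.
\begin{align*}
d_p(\rho(T),\sigma(T)) &\le W_p(\Husimi_{\rho(T)},\Husimi_{\sigma(T)})+2M_{D,p}\hbar^{1/2}\\
&\le W_p\bigl((\Phi_T)_\#\Husimi_\rho,(\Phi_T)_\#\Husimi_\sigma\bigr)+C_T\hbar^{1/2}\\
&\le J(T)\,W_p(\Husimi_\rho,\Husimi_\sigma)+C_T\hbar^{1/2}\\
&\le J(T)\,d_p(\rho,\sigma)+C_T\hbar^{1/2}.
\end{align*}
This is exactly the inequality you were after, and your $\omega(\hbar^{1/2})$ argument then finishes the proof.

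The estimate $d_p(\rho',W\rho'W^\dagger)\le J(T)|\alpha|+C_T\hbar^{1/2}$ that you hoped to get from the companion work is in fact true. However, its only available proof is this same chain, applied to $\rho''=e^{iT\hat H/\hbar}\rho'e^{-iT\hat H/\hbar}$ and $\mathcal T_\alpha[\rho'']$, whose time-$T$ evolutions are $\rho'$ and $W\rho'W^\dagger$. So the double-convexity decomposition adds nothing.
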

\noindent The precise version of this theorem, with complete assumptions on admissible Hamiltonians, is stated as Theorem~\ref{thm:precise-main} in the Appendix.

The equality $\lambda_{\rm q} = \lambda_{\rm exp}$ follows from the semiclassical comparison between quantum optimal transport and classical Wasserstein transport. Theorem~\ref{thm:upperbd1}, together with the Wasserstein--Egorov estimate proved by the authors in~\cite{cotler2025egorov} and stated in the Appendix, implies at fixed $T$ that
\[
d_p(\rho_\hbar(T),\sigma_\hbar(T)) \leq J(T)\,d_p(\rho_\hbar,\sigma_\hbar) + O(e^{CT}\hbar^{1/2}).
\]
The idea behind the proof of this inequality is to consider the time-evolved phase space translation operator $W_{\delta,t} := e^{-it\hat{H}/\hbar}\tau_\delta e^{it\hat{H}/\hbar}$,
and show that it is approximately $e^{Ct}\hbar^{1/2}$-local in phase space (see Figure~\ref{fig:conceptual2}).
Because the admissible initial separation satisfies $d_p(\rho_\hbar,\sigma_\hbar) \gg \hbar^{1/2}$, the correction vanishes relative to the initial distance as $\hbar\to0$, giving $\lambda_{\rm q}\leq\lambda_{\rm exp}$. For the reverse inequality, choose coherent states centered at phase-space points whose finite-time expansion approaches $J(T)$. Their initial quantum optimal transport distance equals the classical separation by the translation axiom, while semiclassical propagation shows that their evolved distance approaches the separation of the corresponding classical trajectories. This gives $\lambda_{\rm q}\geq\lambda_{\rm exp}$.

For the OTOC, Egorov's theorem gives at fixed $T$ the classical limit $C_{jk}^{\rm cl}(T) = \int |\{\Phi_T(\alpha)_j,r_k\}_{\rm PB}|^2\diff\mu(\alpha)$. Since the tangent map is bounded by the global Lipschitz factor $J(T)$, one has $\sigma_{\max}(\mathbf C^{\rm cl}(T)) \leq 2D\,J(T)^2$, and therefore $\lambda_{\rm OTOC}\leq\lambda_{\rm exp}$. The inequality is saturated when regions with nearly maximal finite-time expansion carry non-exponentially-small weight under the measure $\mu$.

Our result shows that quantum optimal transport captures the exponential metric instability of the underlying classical dynamics while remaining well defined for extended quantum states. The equality $\lambda_{\rm q} = \lambda_{\rm exp}$ identifies the global classical expansion rate as the semiclassical limit of the quantum optimal transport exponent; in the settings where $\lambda_{\rm exp} = \lambda_{\rm cl}$, this gives $\lambda_{\rm q} = \lambda_{\rm cl}$. By contrast, $\lambda_{\rm OTOC}$ is a state-dependent second-moment exponent. The OTOC exponent can be strictly smaller than $\lambda_{\rm exp}$ when nearly maximally expanding trajectories have exponentially small weight under the limiting phase-space distribution. Conversely, when such trajectories contribute without exponential suppression, $\lambda_{\rm OTOC} = \lambda_{\rm exp} = \lambda_{\rm q}$, and all three exponents coincide with $\lambda_{\rm cl}$ whenever $\lambda_{\rm exp} = \lambda_{\rm cl}$. Thus OTOCs provide a complementary operator-level probe of semiclassical instability, while quantum optimal transport measures the corresponding metric expansion of quantum states. This unified result demonstrates that quantum optimal transport may provide a rigorous foundation that connects diverse characterizations of quantum chaos.

\section{Discussion}

Our results establish a transport-geometric framework for quantum chaos. The admissible quantum distances incorporate the underlying phase-space geometry while remaining sensitive to genuinely quantum state structure, and the resulting quantum Lyapunov exponent recovers the classical global expansion coefficient in the semiclassical limit. In settings where $\lambda_{\rm exp} = \lambda_{\rm cl}$, the quantum Lyapunov exponent therefore also reproduces the usual maximal classical Lyapunov exponent. The accompanying comparison with OTOCs distinguishes this optimized metric instability from the state-weighted second-moment growth measured by operator commutators.

While our methodology leverages semiclassical reasoning so as to ensure conceptual consistency, it will be important going forward to explore phenomena that are more fully quantum. For example, future work might investigate quantum corrections to Lyapunov exponents through a systematic $\hbar$ expansion. Such corrections could reveal how quantum interference and entanglement modify chaotic dynamics beyond the leading semiclassical behavior. Understanding these corrections is particularly relevant for mesoscopic systems where $\hbar$ effects are non-negligible but the classical limit still provides a useful organizing principle~\cite{haake1991quantum}. More broadly, our framework suggests an approach for adapting classical chaos theory concepts~\cite{robinson1998dynamical}, such as attractors, mixing, and topological dynamics, to the quantum domain, potentially revealing entirely new structures that are unique to quantum mechanics.  For instance, it would be interesting to develop genuinely quantum analogues of Pesin's theorem, relating suitable notions of quantum Lyapunov growth to entropy production beyond the semiclassical regime.  Another natural direction is to formulate a quantum theory of uniformly hyperbolic dynamics, including quantum analogues of structural stability and shadowing.

A particularly promising direction involves quantum field theories and many-body spin systems, where fully characterizing chaos is challenging. Extending quantum optimal transport techniques to systems comprising qudit-based phase spaces~\cite{antonopoulos2025grand} could provide useful analytic tools for probing chaos and operator scrambling, and establish a more direct link to OTOCs.

Our work demonstrates that quantum optimal transport provides the missing geometric foundation for understanding chaos across quantum and classical regimes. To understand quantum chaos, we need to lean in to the counterintuitive nature of non-commutative quantum phase space and internalize its geometry.

\subsection*{Acknowledgements}

We would like to thank Luke Coffman and Semon Rezchikov for valuable conversations.  The authors used ChatGPT 5.6 Pro to discuss approaches to the functional-analytic subtleties involved in defining the Carlen–Maas distance on $L^2(\bbR^d)$, including the use of Petz monotonicity in the proof of Proposition~\ref{prp:Q-ub} in Appendix~\ref{sec:petz}.  The authors wrote the manuscript, carefully checked the mathematical arguments for correctness (as usual), and had ChatGPT 5.6 Pro assist with edits and typos. This work is supported by the U.S.~Department of Energy, Office of Science, under Award Number DE-SC0021013 titled `Emergent Phenomena in Quantum Dynamics: From Chaos to Spacetime'. JC is also supported by a fellowship from the Alfred P.~Sloan Foundation. This material is based upon work supported by the National Science Foundation under Grant No. DMS-2606659.

\bibliographystyle{apsrev4-1}
\bibliography{references}

\newpage
\onecolumngrid

\makeatletter
\let\set@footnotewidth\set@footnotewidth@one
\makeatother

\appendix

\makeatletter
\renewcommand{\thesubsection}{\thesection.\arabic{subsection}}
\renewcommand{\p@subsection}{}
\makeatother

\numberwithin{equation}{section}
\renewcommand{\theequation}{\Alph{section}.\arabic{equation}}
\renewcommand{\theHequation}{\Alph{section}.\arabic{equation}}

\appendixtableofcontents
\appendixtoctrue

\section{Notation and features of admissibility axioms}
\label{app:notation}

\subsection{General quantum notation}

Throughout the Appendix, $D$ denotes the number of degrees of freedom, so that the classical phase space is $\mathbb R^{2D}$. We work on the Hilbert space
\begin{equation}
\mathcal H := L^2(\mathbb R^D).
\end{equation}
We write $\mathcal B(\mathcal H)$ for the bounded operators on $\mathcal H$, $\mathfrak S_1(\mathcal H)$ for the trace-class operators, and $\mathfrak S_2(\mathcal H)$ for the Hilbert--Schmidt operators. The canonical trace is denoted by $\Tr(\cdot)$. If $A\in\mathcal B(\mathcal H)$ and $\xi\in\mathfrak S_1(\mathcal H)$, we use the duality pairing
\begin{equation}
\langle A,\xi\rangle := \Tr(A^\dagger\xi).
\end{equation}
When both entries are Hilbert--Schmidt, this agrees with the Hilbert--Schmidt inner product
\begin{equation}
\langle A,B\rangle_{\rm HS} := \Tr(A^\dagger B).
\end{equation}
We define
\begin{align}
\mathcal M_{\rm sa} &:= \{\xi\in\mathfrak S_1(\mathcal H):\xi=\xi^\dagger\},\\
\mathcal M_0 &:= \{\xi\in\mathcal M_{\rm sa}:\Tr\xi=0\},\\
\mathcal D &:= \{\rho\in\mathcal M_{\rm sa}:\rho\geq0,\ \Tr\rho=1\},\\
\mathcal D_+ &:= \{\rho\in\mathcal D:\ker\rho=\{0\}\}.
\end{align}
Thus, $\mathcal D$ is the space of density matrices and $\mathcal M_0$ is the space of trace-zero tangent vectors. The position and momentum operators are denoted by $\hat x_j$ and $\hat p_j$, and we use $\mathcal S(\mathbb R^D)$ as a common invariant core for the unbounded operators appearing below. The vector of phase-space operators is
\begin{equation}
\hat r := (\hat x_1,\ldots,\hat x_D,\hat p_1,\ldots,\hat p_D).
\end{equation}
Whenever a commutator with an unbounded operator appears, it is first understood on $\mathcal S(\mathbb R^D)$. If the commutator is used in an operator norm, a Lipschitz seminorm, or the Carlen--Maas Dirichlet form~\cite{carlen2020non}, we require that it extend to a bounded operator on $\mathcal H$.

\subsection{Symplectic conventions and Weyl translations}

Identifying phase space with $\mathbb R^{2D}$, a phase-space point is denoted by $\alpha=(q,p)$, with $q,p\in\mathbb R^D$. The Euclidean inner product and norm are denoted by $\alpha\cdot\beta$ and $|\alpha|$. The symplectic matrix is
\begin{equation}
J := \begin{pmatrix} 0 & -I_D \\ I_D & 0 \end{pmatrix}\,,
\end{equation}
and with this convention,
\begin{equation}
[\hat r_j,\hat r_k] = -i\hbar J_{jk}\mathds 1.
\end{equation}
For $\alpha\in\mathbb R^{2D}$, define the self-adjoint linear observable
\begin{equation}
\hat X_\alpha := \alpha^\intercal J\hat r = p\cdot\hat x-q\cdot\hat p
\end{equation}
which satisfies
\begin{equation}
[\hat X_\alpha,\hat r_j] = i\hbar\alpha_j\mathds 1.
\end{equation}
The Weyl unitary associated with $\alpha$ is
\begin{equation}
\tau_\alpha := \exp\!\left(\frac{i}{\hbar}\hat X_\alpha\right)
\end{equation}
and it satisfies
\begin{equation}
\tau_\alpha^\dagger = \tau_{-\alpha}\,,
\qquad
\tau_\alpha\tau_\beta = \exp\!\left(\frac{i}{2\hbar}\alpha^\intercal J\beta\right)\tau_{\alpha+\beta}\,.
\end{equation}
We define conjugation by the Weyl unitary as
\begin{equation}
\mathcal T_\alpha[X] := \tau_\alpha X\tau_\alpha^\dagger\,,
\qquad
\mathcal T_\alpha^*[A] := \tau_\alpha^\dagger A\tau_\alpha\,.
\end{equation}
On states, $\mathcal T_\alpha$ is the translation channel, while $\mathcal T_\alpha^*$ is its Heisenberg adjoint. With these conventions,
\begin{equation}
\mathcal T_\alpha^*[\hat r] = \hat r+\alpha\mathds 1\,,
\end{equation}
and consequently
\begin{equation}
\Tr\!\left(\mathcal T_\alpha[\rho]\hat r\right) = \Tr(\rho\hat r)+\alpha
\end{equation}
whenever the first moment is finite. We denote the corresponding classical translation by
\begin{equation}
\vartheta_\alpha(\beta) := \beta+\alpha\,,
\end{equation}
and thus $(\vartheta_\alpha)_\#\mu$ denotes the pushforward of a classical measure under translation by $\alpha$.

Given $S\in\operatorname{Sp}(2D,\mathbb R)$, let $U_S$ be a metaplectic unitary satisfying
\begin{equation}
\label{eq:US-unitary}
U_S^\dagger\hat r\,U_S = S\hat r.
\end{equation}
We define
\begin{equation}
\label{eq:US-map}
\mathcal U_S[X] := U_S XU_S^\dagger\,,
\qquad
\mathcal U_S^*[A] := U_S^\dagger A U_S\,,
\end{equation}
and so $\mathcal U_S$ acts on states by Schr\"{o}dinger-picture conjugation and $\mathcal U_S^*$ is its Heisenberg adjoint. They satisfy
\begin{equation}
\Tr\!\left(\mathcal U_S[\rho]A\right) = \Tr\!\left(\rho\,\mathcal U_S^*[A]\right).
\end{equation}
For matrices acting on $\mathbb R^{2D}$, $\|S\|$ denotes the operator norm induced by the Euclidean norm. We reserve $\|\cdot\|_{\rm op}$ for the operator norm on $\mathcal H$.

\subsection{Harmonic oscillator, number basis, and coherent states}

To define the quantum analogue of the $W_2$ distance, we need the analogue of the squared displacement from the origin of a state.  The natural quantum analogue of squared phase-space displacement is provided by the harmonic oscillator Hamiltonian, which also supplies the ladder operators used in the Carlen--Maas construction. We take a reference harmonic oscillator Hamiltonian to be
\begin{align}
\hat H_0 & := \frac{1}{2}\sum_{j = 1}^D(\hat x_j^2 + \hat p_j^2). \nonumber
\end{align}
For each spatial index $j = 1, \ldots, D$, we define the ladder operators
\begin{align}
\hat a_j & := (2\hbar)^{-1/2}(\hat x_j + \rmi \hat p_j)\,, \quad
\hat a_j^\dagger := (2\hbar)^{-1/2}(\hat x_j - \rmi \hat p_j)\,. \nonumber
\end{align}
These satisfy the usual relations
\begin{align}
[\hat a_j, \hat a_k^\dagger] & = \delta_{jk}\One\,, \quad
[\hat a_j, \hat a_k] = 0\,, \quad
[\hat a_j^\dagger, \hat a_k^\dagger] = 0\,. \nonumber
\end{align}
Equivalently,
\begin{align}
\hat x_j & = \sqrt{\frac{\hbar}{2}}(\hat a_j + \hat a_j^\dagger)\,, \quad
\hat p_j = - \rmi\sqrt{\frac{\hbar}{2}}(\hat a_j - \hat a_j^\dagger)\,. \nonumber
\end{align}
The number operators are then
\begin{align}
\hat N_j := \hat a_j^\dagger \hat a_j\,,\,\quad
\hat N := \sum_{j = 1}^D \hat N_j\,. \nonumber
\end{align}
With these definitions we can write $\hat{H}_0$ as
\begin{align}
\hat H_0 & = \hbar\left(\hat N + \frac{D}{2}\One\right)\,. \nonumber
\end{align}

The joint number basis is denoted by $|\mathbf n\rangle=|n_1,\ldots,n_D\rangle$ for $\mathbf n\in\mathbb N^D$, with $\hat N_j|\mathbf n\rangle=n_j|\mathbf n\rangle$. We write $|\mathbf n|:=n_1+\cdots+n_D$, and let $|0\rangle:=|0,\ldots,0\rangle$ denote the oscillator vacuum. The coherent state centered at $\alpha\in\mathbb R^{2D}$ is
\begin{equation}
|\alpha\rangle := \tau_\alpha|0\rangle.
\end{equation}
With the translation convention above,
\begin{equation}
\langle\alpha|\hat r|\alpha\rangle = \alpha.
\end{equation}
The coherent states resolve the identity:
\begin{equation}
\int_{\mathbb R^{2D}}|\alpha\rangle\langle\alpha|\,\frac{\diff\alpha}{(2\pi\hbar)^D} = \mathds 1.
\end{equation}
For a trace-class operator $\xi$, define its Husimi distribution by
\begin{equation}
\Husimi_\xi(\alpha) := \frac{1}{(2\pi\hbar)^D}\langle\alpha|\xi|\alpha\rangle\,,
\end{equation}
and so if $\rho\in\mathcal D$, then we have
\begin{equation}
\int_{\mathbb R^{2D}}\Husimi_\rho(\alpha)\diff\alpha = 1.
\end{equation}
The Husimi distribution is covariant under translations:
\begin{equation}
\Husimi_{\mathcal T_a[\rho]}(\alpha) = \Husimi_\rho(\alpha-a).
\end{equation}
Whenever the first moment exists,
\begin{equation}
\Tr(\rho\hat r) = \int_{\mathbb R^{2D}}\alpha\,\Husimi_\rho(\alpha)\diff\alpha.
\end{equation}
We denote the Husimi measurement map by
\begin{equation}
\mathcal C[\rho] := \Husimi_\rho\,,
\end{equation}
and its adjoint is the so-called anti-Wick quantization,
\begin{equation}
\label{eq:anti-Wick}
\mathcal C^*[f] := \int_{\mathbb R^{2D}}f(\alpha)|\alpha\rangle\langle\alpha|\,\frac{\diff\alpha}{(2\pi\hbar)^D}\,,
\end{equation}
and the two maps satisfy
\begin{equation}
\label{eq:C*Husimi-dual}
\Tr\!\left(\rho\,\mathcal C^*[f]\right) = \int_{\mathbb R^{2D}}f(\alpha)\Husimi_\rho(\alpha)\diff\alpha.
\end{equation}
The coherent-state noising channel is
\begin{equation}
\label{eq:noising-channel}
\mathcal N[\rho] := \int_{\mathbb R^{2D}}\Husimi_\rho(\alpha)|\alpha\rangle\langle\alpha|\diff\alpha.
\end{equation}

Finally, for $p\geq1$, define
\begin{equation}
\mathcal D_p := \left\{\rho\in\mathcal D:\int_{\mathbb R^{2D}}|\alpha|^p\Husimi_\rho(\alpha)\diff\alpha<\infty\right\}.
\end{equation}
For probability measures $\mu$ and $\nu$ on $\mathbb R^{2D}$ with finite $p$th moments, let $\Pi(\mu,\nu)$ denote the set of couplings of $\mu$ and $\nu$. The classical $p$-Wasserstein distance is
\begin{equation}
W_p(\mu,\nu)^p := \inf_{\pi\in\Pi(\mu,\nu)} \int_{\mathbb R^{2D}\times\mathbb R^{2D}} |\alpha-\alpha'|^p\diff\pi(\alpha,\alpha').
\end{equation}

\subsection{Quantization and symbol classes}
The analysis of $\lambda_{\rm OTOC}$ uses general results from semiclassical analysis.  
For smooth functions $a\in C^\infty(\bbR^{2D})$, we define the Weyl quantization
\begin{equation}
\bigl(\Op_\hbar(a)f\bigr)(x) := \frac{1}{(2\pi\hbar)^D}\int_{\mathbb R^{2D}}e^{i(x-y)\cdot p/\hbar}a\!\left(\frac{x+y}{2},p\right)f(y)\diff y\diff p.
\end{equation}
More precisely, we consider quantization of functions with bounded derivatives of sufficiently high order, and define
\begin{equation}
\label{eq:Sk-def}
S_k(1) := \{a\in C^\infty(\bbR^{2D}) \mid \text{For all }|\alpha|\geq k, \sup_{r\in\bbR^{2D}} |\partial^\alpha a(r)| < C_\alpha\}.
\end{equation}
Thus functions $a\in S_k(1)$ have uniformly bounded derivatives of order at least $k$.   We use simply $S(1)$ below to mean $S_0(1)$,
and we also allow for symbols that have an implicit dependence on $\hbar$, so long as the constants $C_\alpha$ are uniform in $\hbar$.
The Calder\'on-Vaillancourt theorem ensures that $\Op_\hbar(a)$ is bounded as an operator on $L^2(\bbR^D)$ 
when $a\in S_0(1)$.   

For $a\in S_k(1)$ and $b\in S_j(1)$, the product $\Op_\hbar(a)\Op_\hbar(b)$ has symbol expansion
\[
a\#b = ab + \frac{i\hbar}{2} \{a,b\}_{\rm PB} + O(\hbar^2),
\]
where $\{\cdot,\cdot\}_{\rm PB}$ is the Poisson bracket and the error is measured in $S_{\max\{0,k-2,j-2\}}(1)$.  In particular, we have the product formula
\begin{equation}
\Op_\hbar(a)\Op_\hbar(b) = \Op_\hbar(ab) + O(\hbar),
\end{equation}
and the commutator rule
\begin{equation}
\frac{1}{i\hbar}[\Op_\hbar(a),\Op_\hbar(b)] = \Op_\hbar(\{a,b\}_{\rm PB}) + O(\hbar).
\end{equation}
In fact, this sharpens to an exact identity if $a$ or $b$ is a linear function in phase space.  

Egorov's theorem describes the time evolution of such operators.  The relevant assumption on the 
Hamiltonians is $H\in S_2(1)$, which corresponds to the Hamiltonian vector field having bounded derivatives.  This class of Hamiltonians is used throughout the rest of this work, so we define it here.
\begin{definition}[Admissible Hamiltonians]
\label{def:admissible-H}
A Hamiltonian function $H\in C^\infty(\bbR^{2D})$ is said to be \emph{admissible} if $H\in S_2(1)$, that is if for all multi-indices $|\alpha|\geq 2$, there is a constant $C_\alpha$ such that
\[
\sup_{\alpha\in\bbR^{2D}} |\partial^\alpha H(\alpha)| < C_\alpha.
\]
\end{definition}
Physically, this condition excludes singular or arbitrarily rapidly varying forces: since the Hessian of $H$ is uniformly bounded, the Hamiltonian vector field $X_H = J\nabla H$ is globally Lipschitz and nearby trajectories have a finite expansion factor at every fixed time. The bounds on higher derivatives provide the regularity needed for the semiclassical estimates below. This class includes, for example, quadratic Hamiltonians and smooth compactly supported perturbations thereof.

We take the observables to lie in $S_1(1)$, which is sufficiently broad to include the phase-space coordinate observables used in the OTOC definition. We state the version of Egorov's  theorem we need which is due to Bouzouina and Robert~\cite{BouzouinaRobert2002}.  
\begin{theorem}[Consequence of Theorem 1.2 of~\cite{BouzouinaRobert2002}]
Let $H\in S_2(1)$ be a Hamiltonian with bounded derivatives of order 2 and higher, and let $A\in S_1(1)$
have bounded gradients of first order and higher.  Then
\[
e^{it\hat{H}/\hbar}\hat{A} e^{-it\hat{H}/\hbar} = \Op_\hbar(\Phi_t A) + \hbar \,S_1(1),
\]
where the term on the right indicates that this is up to an error that is uniformly bounded by $\hbar$
in the $S_1(1)$ seminorms.
In particular, for $A,B\in S_1(1)$, we have
\[
-i\hbar^{-1}[\hat{A}(t), \hat{B}(0)] = \Op_\hbar(\{\Phi_t A, B\}_{\rm PB}) + \hbar \,S_0(1).
\]
\end{theorem}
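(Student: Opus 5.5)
The statement is a repackaging of Theorem~1.2 of~\cite{BouzouinaRobert2002}. I would nonetheless organize the argument as the standard Duhamel proof of Egorov's theorem, so that the symbol-class bookkeeping is transparent.

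\textbf{Step 1: classical regularity.} Set $U(s):=e^{-is\hat H/\hbar}$ and $a_s:=\Phi_s A = A\circ\Phi_s$. I first show that $a_s\in S_1(1)$, uniformly for $s\in[0,t]$, with seminorms growing at most like $e^{C_k s}$.
\begin{itemize}
\item Since $H\in S_2(1)$, the vector field $J\nabla H$ has bounded derivatives of every order $\geq1$.
\item Differentiating the flow equation repeatedly gives the variational equations for $\partial^\gamma\Phi_s$.
\item An induction on $|\gamma|$ with Gr\"onwall's inequality shows $\sup|\partial^\gamma\Phi_s|\leq C_\gamma e^{C_\gamma s}$ for $|\gamma|\geq1$.
\item The chain rule (Fa\`a di Bruno) then bounds all derivatives of order $\geq1$ of $A\circ\Phi_s$.
\end{itemize}

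\textbf{Step 2: Duhamel identity.} For fixed $t$, define
\[
B(s):=U(s)^\dagger\,\Op_\hbar(a_{t-s})\,U(s).
\]
This is understood on $\mathcal S(\mathbb R^D)$, which is preserved because $\hat H$ has at most quadratically growing symbol. Its derivative is
\[
B'(s)=U(s)^\dagger\Big(\tfrac{i}{\hbar}[\hat H,\Op_\hbar(a_{t-s})]-\Op_\hbar(\{H,a_{t-s}\}_{\rm PB})\Big)U(s).
\]
By the Moyal expansion, the Weyl symbol of $\frac{i}{\hbar}[\hat H,\Op_\hbar(a)]$ is the Poisson bracket $\{H,a\}_{\rm PB}$ plus a remainder $\hbar^2 r(a)$; only odd orders appear in the commutator expansion. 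The remainder $r$ is built from derivatives of order $\geq3$ of $H$ and of $a$, so by Step~1 it lies in $S_0(1)$ uniformly in $s$. Integrating from $0$ to $t$ gives
\[
\hat A(t)-\Op_\hbar(\Phi_tA)=\hbar^2\int_0^t U(s)^\dagger\,\Op_\hbar(r_s)\,U(s)\,ds.
\]

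\textbf{Step 3: the error is a symbol.} The right-hand side of Step~2 is $O(\hbar^2)$ in operator norm by Calder\'on--Vaillancourt. To express it as $\hbar\,S_1(1)$ in symbol seminorms, I need that conjugation by $U(s)$ maps $\Op_\hbar(S_0(1))$ into itself, uniformly in $\hbar$ and $s\in[0,t]$. I would prove this with Beals' characterization: iterated commutators with the linear observables $\hat r_j$ correspond to Heisenberg-evolved linear symbols, which lie in $S_1(1)$ by Step~1, and they produce $\hbar$ times bounded operators. This step is the main obstacle. It is exactly the content imported from Bouzouina--Robert, and the first thing I would try is to cite it directly rather than reprove Beals' theorem. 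The proof even yields $\hbar^2$, which is stronger than the $\hbar$ claimed.

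\textbf{Step 4: the commutator.} Write $\hat A(t)=\Op_\hbar(a_t)+\hbar\,\Op_\hbar(e)$ with $e\in S_1(1)$.
\begin{itemize}
\item For $a_t,B\in S_1(1)$, the Moyal formula gives $\frac{1}{i\hbar}[\Op_\hbar(a_t),\hat B]=\Op_\hbar(\{a_t,B\}_{\rm PB})+\hbar^2 S_0(1)$. The remainder involves third derivatives, which are bounded for symbols in $S_1(1)$.
\item Likewise $\frac{1}{i\hbar}[\hbar\,\Op_\hbar(e),\hat B]=\hbar\,\Op_\hbar(\{e,B\}_{\rm PB})+O(\hbar^3)$. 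Here $\{e,B\}_{\rm PB}\in S_0(1)$, since gradients of $S_1(1)$ symbols are bounded.
\end{itemize}
Since $-i\hbar^{-1}[\cdot,\cdot]=\frac{1}{\hbar}\cdot\frac{1}{i}[\cdot,\cdot]$, this is the claimed identity up to the sign convention for $\{\cdot,\cdot\}_{\rm PB}$, which I would fix against the paper's $[\hat r_j,\hat r_k]=-i\hbar J_{jk}$.

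Throughout, symbols in $S_1(1)$ may grow linearly, so the corresponding operators are unbounded. All identities are first verified on the Schwartz core, and only commutators and remainders, which are $S_0(1)$ objects, are extended to bounded operators.
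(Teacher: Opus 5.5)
Your proposal is correct and ultimately rests on the same input as the paper. The paper gives no argument of its own: it quotes Theorem~1.2 of Bouzouina--Robert for both displayed identities. What you add is the Duhamel/Moyal skeleton of that theorem, and Steps~1, 2 and~4 are sound. The Duhamel remainder really is $\hbar^2 S_0(1)$, because the Weyl commutator expansion beyond the Poisson bracket has only odd orders and involves only third and higher derivatives of $H$ and $a_{t-s}$. The resulting $\hbar^2$ is indeed what Weyl quantization gives. You also derive the commutator identity from the conjugation identity by symbolic calculus rather than asserting it. This isolates the one genuinely hard input: uniform invariance of $\Op_\hbar(S_0(1))$ under conjugation by $e^{-is\hat H/\hbar}$.

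If you want to prove that input via Beals rather than cite it, be careful with the phrasing of Step~3. It treats $U(s)\hat r_jU(s)^\dagger$ as $\Op_\hbar$ of the $S_1(1)$ symbol $r_j\circ\Phi_{-s}$, which is the theorem itself for $A=r_j$, so as stated it is circular. What Step~2 and Calder\'on--Vaillancourt actually give is
$U(s)\hat\ell U(s)^\dagger=\Op_\hbar(\ell\circ\Phi_{-s})+\hbar^2\int_0^sU(\tau)\Op_\hbar(q_\tau)U(\tau)^\dagger\,d\tau$ with $q_\tau\in S_0(1)$. Write $E_\sigma(B):=U(\sigma)^\dagger BU(\sigma)$ and $[\Op_\hbar(\ell\circ\Phi_{-s}),\Op_\hbar(x)]=\hbar\Op_\hbar(x_1)$ with $x_1\in S_0(1)$. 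Then
\[
[\hat\ell,E_s(\Op_\hbar(x))]=\hbar\,E_s(\Op_\hbar(x_1))+\hbar^2\int_0^s\bigl[E_{s-\tau}(\Op_\hbar(q_\tau)),E_s(\Op_\hbar(x))\bigr]\,d\tau .
\]
The leading term handles a single commutator, but iterated commutators also hit the integral term. The Beals bounds $O(\hbar^N)$ must therefore be obtained by induction on $N$. At each stage you apply the Leibniz rule to the integral term and use the bounds already established for fewer commutators.

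With that induction your argument is self-contained. Without it, citing Bouzouina--Robert as you propose is exactly what the paper does.
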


For an admissible Hamiltonian $H$, let $\Phi_T: \bbR^{2D} \to \bbR^{2D}$ denote the corresponding classical flow. Egorov's theorem states that for fixed $T$ and $a\in S(1)$,
\begin{equation}
e^{iT\hat H/\hbar}\Op_\hbar(a)e^{-iT\hat H/\hbar} = \Op_\hbar(a\circ\Phi_T) + O(\hbar).
\end{equation}
The other ingredient we need is the notion of a semiclassical limit. Let $\{\rho_\hbar\}_{\hbar>0}$ be a family of quantum states. We say that this family has the classical phase-space limit $\mu$ if, for every $a\in S(1)$,
\begin{equation}
\label{eq:rho-sc-lim}
\lim_{\hbar\to0}\Tr\!\left[\Op_\hbar(a)\rho_\hbar\right] = \int_{\mathbb R^{2D}}a(\alpha)\diff\mu(\alpha).
\end{equation}

Another quantization we use is the ``coherent state'' quantization $\Op_\hbar^{\rm coh}$,
\begin{equation}
\label{eq:coh-quant}
\Op_\hbar^{\rm coh}(a) := (2\pi \hbar)^{-D} \int \ket{\alpha}\bra{\alpha}a(\alpha)\diff\alpha,
\end{equation}
which is the same as the anti-Wick quantization $\mcal{C}^*$ described above.  
In this notation, the duality~\eqref{eq:C*Husimi-dual} is written
\begin{equation}
\label{eq:coh-duality}
\Tr [ \Op_\hbar^{\rm coh}(a) \rho] = \int a(\alpha) \Husimi_\rho(\alpha)\diff\alpha.
\end{equation}
This quantization has the boundedness property
\begin{equation}
\label{eq:coh-bd}
\|\Op_\hbar^{\rm coh}(a)\|_{\rm op} \leq \|a\|_{L^\infty},
\end{equation}
and also the exact commutator identity for testing against linear functions,
\begin{equation}
\label{eq:coh-comm}
-\frac{i}{\hbar}
[\hat X_\alpha,\Op_\hbar^{\rm coh}(a)] = \Op_\hbar^{\rm coh}(\alpha\cdot\nabla a).
\end{equation}

\subsection{Test observables and Lipschitz classes}
\label{sec:Lipsec}
We reserve hats for the canonical unbounded operators, such as $\hat x_j$, $\hat p_j$, $\hat r_j$, $\hat a_j$, and $\hat a_j^\dagger$. General observables are denoted by $A$, $B$, or $F$. The $W_1$ distance uses the canonical Lipschitz seminorm
\begin{equation}
\label{eq:Lipnorm-def}
\|A\|_{\rm Lip} := \sup_{|u| = 1}\frac{1}{\hbar}\|[u \cdot \hat r, A]\|_{\rm op}
\end{equation}
whenever the commutators extend to bounded operators. The factor $\hbar^{-1}$ is included so that linear observables have their usual classical Lipschitz constants. Indeed, if $A = v \cdot \hat r$, then
\begin{align}
\|A\|_{\rm Lip} & = |v|. \nonumber
\end{align}

For the Carlen--Maas 2-Wasserstein construction~\cite{carlen2020non}, the relevant noncommutative derivatives are commutators with the jump operators $V_\ell$ defined below. We write $\mcal B_0$ for a real vector space of self-adjoint test observables $A$ such that $A$ preserves the Schwartz core and each commutator $[V_\ell, A]$ extends to a bounded operator. Constants are harmless in the dual action because tangent vectors have trace zero. The class $\mcal B_0$ is chosen large enough to contain the linear observables $\hat X_\alpha$ and stable enough under the finite-energy cutoffs $A \mapsto \Pi_N A\Pi_N$ used in approximation arguments.

\subsection{Ornstein--Uhlenbeck and Carlen--Maas jump notation}

For a jump operator $L$, we write the Lindblad dissipator as
\begin{align}
\msf D[L]\rho & := L\rho L^\dagger - \frac{1}{2}\bigl(L^\dagger L\rho + \rho L^\dagger L\bigr). \nonumber
\end{align}
The symbol $\msf D[L]$ denotes a dissipator and should not be confused with the state space $\mcal D$.  The Carlen--Maas transport geometry we use is built from the harmonic-oscillator ladder operators. We introduce the index set
\begin{align}
\mcal I & := \{(j, -), (j, +) : j = 1, \ldots, D\}. \nonumber
\end{align}
For $\ell = (j, \pm) \in \mcal I$, define
\begin{align}
V_{j, -} & := \hbar^{-1/2}\hat a_j\,,\quad 
V_{j, +}  := \hbar^{-1/2}\hat a_j^\dagger\,. \nonumber
\end{align}
We also define the corresponding frequencies
\begin{align}
\omega_{j, -} & := - 2\beta\hbar\, \quad \omega_{j, +} := 2\beta\hbar\,. \nonumber
\end{align}
When the index is not important, we write $V_\ell$ and $\omega_\ell$.

With these conventions, the predual Ornstein--Uhlenbeck generator acting on states is
\begin{align}
\mcal L_\beta^\dagger\rho & := \frac{2}{\hbar}\sum_{j = 1}^D\left(e^{\beta\hbar}\msf D[\hat a_j]\rho + e^{-\beta\hbar}\msf D[\hat a_j^\dagger]\rho\right) = 2\sum_{\ell \in \mcal I} e^{-\omega_\ell/2}\msf D[V_\ell]\rho. \nonumber
\end{align}
The parameter convention here is tied to the choice $\omega_{j, \pm} = \pm 2\beta\hbar$. With this convention, the Ornstein--Uhlenbeck generator has the stationary Gibbs state
\begin{align}
\sigma_\beta &:= \frac{1}{Z_\beta}e^{-2\beta\hat H_0}, \qquad Z_\beta := \Tr(e^{-2\beta\hat H_0})\,, \nonumber
\end{align}
for $\beta>0$, so that $\mcal L_\beta^\dagger[\sigma_\beta]=0$. If one instead wants the Gibbs state to be written as $\sigma_\beta = e^{-\beta\hat H_0}/\Tr(e^{-\beta\hat H_0})$, then all frequencies $\omega_{j, \pm}$ above should be replaced by $\pm\beta\hbar$.

\subsection{Some examples of quantum optimal transport distances and their relation to the axioms}
\label{subsec:elementary-examples}

Before turning to the Carlen--Maas distances, we record several elementary constructions that help clarify the role of each of the four axioms from the main text. These examples distinguish the structural requirements of translation covariance and convexity from the information-processing condition, and illustrate what the latter excludes.

Recall that we use the coherent-state normalization for which $\int_{\bbR^{2D}} \Husimi_\rho(\alpha) \diff\alpha = 1$, where $\Husimi_\rho(\alpha) = (2\pi\hbar)^{-D}\langle\alpha|\rho|\alpha\rangle$. We restrict throughout to states for which the moments below are finite. Define the phase-space first moment by
\begin{equation}
\bar r(\rho) := \Tr(\rho\hat r).
\end{equation}
With our translation convention,
\begin{align}
\bar r(\mcal T_a[\rho]) = \bar r(\rho) + a, \qquad \Husimi_{\mcal T_a[\rho]}(\alpha) = \Husimi_\rho(\alpha-a). \nonumber \end{align}
The norm $|\cdot|$ below is the same Euclidean phase-space norm appearing in the translation-distance axiom. If one instead formulates the axioms using an $\ell^p$ ground norm on phase space, the same discussion applies after replacing $|\cdot|$ by $\|\cdot\|_{\ell^p}$.

\begin{example}[First-moment pseudodistance]
For $p \geq 1$, define
\begin{align} \label{eq:first-moment-pseudodistance} d_{\mathrm{mean},p}(\rho,\sigma) := |\bar r(\rho) - \bar r(\sigma)|. \end{align}
This is generally only a pseudometric, since distinct states with the same first moments have zero distance. Nevertheless, it satisfies the first three structural axioms.

Translation invariance follows from
\begin{align} \bar r(\mcal T_a[\rho]) - \bar r(\mcal T_a[\sigma]) = \bar r(\rho) - \bar r(\sigma). \nonumber \end{align}
The translation normalization follows from
\begin{align} d_{\mathrm{mean},p}(\rho,\mcal T_a[\rho]) = |\bar r(\rho) - \bar r(\rho) - a| = |a|. \nonumber \end{align}
Finally, suppose $\rho = \int \rho_\theta \diff\mu(\theta)$ and $\sigma = \int \sigma_\phi \diff\nu(\phi)$, and let $\Gamma$ be any coupling of $\mu$ and $\nu$. Since
\begin{align} \bar r(\rho) - \bar r(\sigma) = \int \big(\bar r(\rho_\theta) - \bar r(\sigma_\phi)\big) \diff\Gamma(\theta,\phi), \nonumber \end{align}
Jensen's inequality for the convex function $v \mapsto |v|^p$ gives
\begin{align} d_{\mathrm{mean},p}(\rho,\sigma)^p & = \left|\int \big(\bar r(\rho_\theta) - \bar r(\sigma_\phi)\big) \diff\Gamma(\theta,\phi)\right|^p \nonumber\\ &\leq \int |\bar r(\rho_\theta) - \bar r(\sigma_\phi)|^p \diff\Gamma(\theta,\phi) \nonumber\\ & = \int d_{\mathrm{mean},p}(\rho_\theta,\sigma_\phi)^p \diff\Gamma(\theta,\phi). \nonumber \end{align}

The information-processing axiom, however, fails in general. Since the Husimi distribution reproduces first moments,
\begin{align}
\bar r(\rho) = \int_{\bbR^{2D}}\alpha\,\Husimi_\rho(\alpha)\diff\alpha. \nonumber
\end{align}
Every coupling $\pi$ of $\Husimi_\rho$ and $\Husimi_\sigma$ therefore satisfies
\begin{align}
d_{\mathrm{mean},p}(\rho,\sigma)^p &= \left|\int_{\bbR^{2D}\times\bbR^{2D}}(\alpha-\alpha')\diff\pi(\alpha,\alpha')\right|^p \nonumber\\
&\leq \int_{\bbR^{2D}\times\bbR^{2D}}|\alpha-\alpha'|^p\diff\pi(\alpha,\alpha'). \nonumber
\end{align}
Infimizing over $\pi$ gives
\begin{align}
\label{eq:mean-below-Husimi}
d_{\mathrm{mean},p}(\rho,\sigma) \leq W_p(\Husimi_\rho,\Husimi_\sigma),
\end{align}
whereas the information-processing axiom requires the reverse inequality.

For example, let $\rho = |0\rangle\langle 0|$ be the oscillator vacuum and fix a nonzero vector $a \in \mathbb R^{2D}$. Define
\begin{equation}
\sigma := \frac{1}{2}\left(\mathcal T_a[\rho] + \mathcal T_{-a}[\rho]\right).
\end{equation}
Then $\bar r(\rho) = \bar r(\sigma) = 0$, and hence $d_{\mathrm{mean},p}(\rho,\sigma) = 0$. Nevertheless, $\rho \neq \sigma$, since $\rho$ is pure whereas $\sigma$ is a nontrivial mixture of two distinct coherent states. Their Husimi distributions are therefore distinct, and
\begin{equation}
W_p(\Husimi_\rho,\Husimi_\sigma) > 0.
\end{equation}
Thus, the first-moment construction fails both metric nondegeneracy and the information-processing axiom.
\end{example}

\begin{example}[Husimi--Wasserstein distance]
Let $W_p$ denote the classical $p$-Wasserstein distance on $\bbR^{2D}$ with cost $|\alpha-\alpha'|^p$. Define
\begin{align}
\label{eq:Husimi-Wasserstein-distance}
d_{\mathrm{H},p}(\rho,\sigma) := W_p(\Husimi_\rho,\Husimi_\sigma).
\end{align}
This is the pullback of the classical Wasserstein distance along the Husimi measurement map, closely related to the Husimi-based optimal transport distances introduced in~\cite{zyczkowski1998monge,zyczkowski2001monge}. On the finite-$p$-moment domain, it is a genuine metric, since the coherent-state POVM is informationally complete.

The first three axioms follow from the corresponding elementary properties of classical Wasserstein distance. Translation covariance of the Husimi distribution gives
\begin{align} d_{\mathrm{H},p}\big(\mcal T_a[\rho],\mcal T_a[\sigma]\big) & = W_p\big((\vartheta_a)_\#\Husimi_\rho,(\vartheta_a)_\#\Husimi_\sigma\big) \nonumber\\ & = W_p(\Husimi_\rho,\Husimi_\sigma), \nonumber \end{align}
where $\vartheta_a(\alpha) = \alpha+a$. Moreover, for any probability measure $\mu$ with finite $p$-moment,
\begin{align} \label{eq:Wp-translation-exact} W_p\big(\mu,(\vartheta_a)_\#\mu\big) = |a|. \end{align}
Indeed, the coupling $\alpha\mapsto\alpha+a$ gives $W_p\bigl(\mu,(\vartheta_a)_\#\mu\bigr)\leq|a|$. Conversely, for any coupling $\pi$ of $\mu$ and $(\vartheta_a)_\#\mu$, Jensen's inequality gives
\begin{align}
\int_{\bbR^{2D}\times\bbR^{2D}}|\alpha-\alpha'|^p\diff\pi(\alpha,\alpha') &\geq \left|\int_{\bbR^{2D}\times\bbR^{2D}}(\alpha-\alpha')\diff\pi(\alpha,\alpha')\right|^p \nonumber\\
&= |a|^p. \nonumber
\end{align}
Taking the infimum over $\pi$ proves Eq.~\eqref{eq:Wp-translation-exact}. Hence
\begin{align}
d_{\mathrm{H},p}(\rho,\mcal T_a[\rho]) = |a|. \nonumber 
\end{align}

It remains to check double convexity. Let $\rho = \int \rho_\theta \diff\mu(\theta)$ and $\sigma = \int \sigma_\phi \diff\nu(\phi)$, and let $\Gamma$ be a coupling of $\mu$ and $\nu$. By linearity of the Husimi map,
\begin{align} 
\Husimi_\rho & = \int \Husimi_{\rho_\theta} \diff\mu(\theta), \qquad \Husimi_\sigma = \int \Husimi_{\sigma_\phi} \diff\nu(\phi). \nonumber \end{align}
For finite mixtures, choose optimal couplings between each pair $\Husimi_{\rho_\theta}$ and $\Husimi_{\sigma_\phi}$ and average these couplings with respect to $\Gamma$. This produces a coupling between $\Husimi_\rho$ and $\Husimi_\sigma$, and therefore
\begin{align} d_{\mathrm{H},p}(\rho,\sigma)^p & = W_p(\Husimi_\rho,\Husimi_\sigma)^p \nonumber\\ &\leq \int W_p\big(\Husimi_{\rho_\theta},\Husimi_{\sigma_\phi}\big)^p \diff\Gamma(\theta,\phi) \nonumber\\ & = \int d_{\mathrm{H},p}(\rho_\theta,\sigma_\phi)^p \diff\Gamma(\theta,\phi). \nonumber \end{align}
The general case follows by approximation by simple functions and lower semicontinuity of $W_p$.

Finally, the information-processing axiom holds with equality:
\begin{align} W_p(\Husimi_\rho,\Husimi_\sigma) = d_{\mathrm{H},p}(\rho,\sigma). \nonumber \end{align}
Thus, $d_{\mathrm{H},p}$ satisfies all four axioms.
\end{example}

\begin{remark}[Interpolating between the two examples]
The previous two constructions may be interpolated. For $0 \leq \eta \leq 1$, define
\begin{align} \label{eq:elementary-interpolation} d_{\eta,p}(\rho,\sigma) := \left((1 - \eta)\,d_{\mathrm{mean},p}(\rho,\sigma)^p + \eta \,d_{\mathrm{H},p}(\rho,\sigma)^p\right)^{1/p}. \end{align}
For $\eta = 0$, this reduces to the first-moment pseudodistance. For $0 < \eta \leq 1$, it is a genuine metric, since $d_{\mathrm{H},p}$ is a genuine metric. The triangle inequality follows from the usual Minkowski inequality for the $\ell^p$-sum of two pseudometrics.

Translation invariance is immediate from translation invariance of the two summands. The translation normalization is exact because both summands give $|a|$:
\begin{align} d_{\eta,p}(\rho,\mcal T_a[\rho]) = \left((1 - \eta)|a|^p + \eta|a|^p\right)^{1/p} = |a|. \nonumber \end{align}
Double convexity follows by applying double convexity to the two summands separately:
\begin{align} d_{\eta,p}(\rho,\sigma)^p & = (1 - \eta)d_{\mathrm{mean},p}(\rho,\sigma)^p + \eta d_{\mathrm{H},p}(\rho,\sigma)^p \nonumber\\ &\leq \int \left((1 - \eta)d_{\mathrm{mean},p}(\rho_\theta,\sigma_\phi)^p + \eta d_{\mathrm{H},p}(\rho_\theta,\sigma_\phi)^p\right) \diff\Gamma(\theta,\phi) \nonumber\\ & = \int d_{\eta,p}(\rho_\theta,\sigma_\phi)^p \diff\Gamma(\theta,\phi). \nonumber \end{align}

Equation~\eqref{eq:mean-below-Husimi} implies
\begin{align} d_{\eta,p}(\rho,\sigma) \leq d_{\mathrm{H},p}(\rho,\sigma)\,,\nonumber
\end{align}
although the information-processing axiom requires the reverse inequality. Consequently, the two inequalities can hold for every pair of states only if $d_{\eta,p} = d_{\mathrm{H},p}$. In particular, for any pair of distinct states with the same first moments,
\begin{align} d_{\eta,p}(\rho,\sigma) = \eta^{1/p}d_{\mathrm{H},p}(\rho,\sigma). \nonumber \end{align}
Thus, $d_{\eta,p}$ fails the information-processing axiom whenever $0 \leq \eta < 1$. Only the endpoint $\eta = 1$, corresponding to the Husimi--Wasserstein distance, satisfies all four axioms.
\end{remark}
\vspace{3pt}
These examples clarify the roles of the four axioms. The first three axioms are compatibility conditions with affine phase-space geometry and convex mixing, but they permit distances that retain only a small part of the phase-space information. The information-processing axiom rules out such artificial transport shortcuts by requiring the quantum distance to dominate the classical Wasserstein distance between Husimi distributions. It does not, however, enforce a genuinely noncommutative transport geometry, since the Husimi--Wasserstein distance satisfies all four axioms while being obtained by first applying a commutative measurement and then using ordinary classical optimal transport.

\begin{proposition}[Classical analogue]
Let $\delta_p$ be a distance on probability measures on $\mathbb R^{2D}$ with finite $p$th moments. Suppose that $\delta_p$ satisfies the classical analogues of translation invariance, exact normalization on translations, and double convexity. Then
\begin{equation}
\delta_p(\mu,\nu)\leq W_p(\mu,\nu).
\end{equation}
If in addition
\begin{equation}
W_p(\mu,\nu)\leq\delta_p(\mu,\nu),
\end{equation}
which is analogous to the fourth axiom, then $\delta_p = W_p$.
\end{proposition}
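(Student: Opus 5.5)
The plan is to decompose $\mu$ and $\nu$ into point masses and then apply the axioms to the optimal coupling. Write $\mu=\int\delta_\theta\,\diff\mu(\theta)$ and $\nu=\int\delta_\phi\,\diff\nu(\phi)$. Let $\pi$ be an optimal coupling for $W_p(\mu,\nu)$; it exists by the standard compactness argument for measures with finite $p$th moments. The classical double convexity axiom, applied with $\Gamma=\pi$, gives
\begin{equation*}
\delta_p(\mu,\nu)^p\leq\int\delta_p(\delta_\theta,\delta_\phi)^p\,\diff\pi(\theta,\phi).
\end{equation*}
For the integrand, note that $\delta_\phi=(\vartheta_{\phi-\theta})_\#\delta_\theta$. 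The normalization axiom then gives $\delta_p(\delta_\theta,\delta_\phi)=|\phi-\theta|$. Substituting this, the right-hand side becomes $\int|\theta-\phi|^p\,\diff\pi=W_p(\mu,\nu)^p$, which proves $\delta_p\leq W_p$. The second claim follows immediately: if also $W_p\leq\delta_p$, the two inequalities give equality.

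A few technical points need checking. First, Dirac masses must lie in the domain of $\delta_p$. They do, since they have all moments finite. Second, the convexity axiom must apply to decompositions indexed by a continuum, with the mixtures understood as Bochner or weak integrals of measures. The axiom is stated in integral form, so this is allowed, and the identity $\mu=\int\delta_\theta\,\diff\mu(\theta)$ holds weakly. Third, if one prefers to use only finite mixtures, the argument can be run on discrete approximations $\mu_n\to\mu$ and $\nu_n\to\nu$ in $W_p$. Passing to the limit then requires continuity of $\delta_p$. That continuity follows from the triangle inequality together with the bound $\delta_p\leq W_p$ already established for discrete measures: $\delta_p(\mu,\mu_n)\leq W_p(\mu,\mu_n)\to0$.

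Translation invariance is not actually needed in this argument; only normalization on translations is used.

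The main obstacle is the measurability and integrability bookkeeping in the third point. Specifically, when $\mu$ is non-discrete, one must either justify the continuum decomposition directly or show that the discrete-approximation limit is legitimate. I would handle this with the triangle-inequality argument described above.
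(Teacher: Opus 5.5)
Your argument is correct and is essentially the paper's proof: decompose $\mu$ and $\nu$ into Dirac masses, apply double convexity along a coupling, use $\delta_p(\delta_\theta,\delta_\phi)=|\theta-\phi|$ from the normalization axiom, and get equality from the reverse bound. The paper takes an arbitrary coupling and infimizes, which avoids needing an optimal coupling to exist.

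One caution: drop the finite-mixture fallback in your third point, because it is circular. The bound $\delta_p(\mu,\mu_n)\le W_p(\mu,\mu_n)$ with $\mu$ non-discrete is exactly what is being proved, while your discrete argument only covers pairs of finitely supported measures. Rest the proof on the integral form of the double-convexity axiom, as the paper does.
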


\begin{proof}
Let $\pi$ be any coupling of $\mu$ and $\nu$. Using the decompositions of $\mu$ and $\nu$ into Dirac masses and applying double convexity,
\begin{align}
\delta_p(\mu,\nu)^p \leq \int_{\bbR^{2D}\times\bbR^{2D}}\delta_p(\delta_\alpha,\delta_{\alpha'})^p\diff\pi(\alpha,\alpha'). \nonumber
\end{align}
Since
\begin{equation}
\delta_{\alpha'} = (\vartheta_{\alpha'-\alpha})_\#\delta_\alpha,
\end{equation}
exact normalization on translations gives
\begin{equation}
\delta_p(\delta_\alpha,\delta_{\alpha'}) = |\alpha-\alpha'|.
\end{equation}
Therefore,
\begin{align}
\delta_p(\mu,\nu)^p \leq \int_{\bbR^{2D}\times\bbR^{2D}}|\alpha-\alpha'|^p\diff\pi(\alpha,\alpha'). \nonumber
\end{align}
Infimizing over $\pi$ proves $\delta_p(\mu,\nu)\leq W_p(\mu,\nu)$. Combining this with the assumed reverse inequality gives $\delta_p = W_p$.
\end{proof}

The first three classical axioms do not by themselves determine $W_p$. For example, the barycenter pseudodistance
\begin{align} \label{eq:classical-barycenter-pseudodistance} d_{\mathrm{bar},p}^{\rm cl}(\mu,\nu) := \left|\int_{\bbR^{2D}} \alpha\diff\mu(\alpha) - \int_{\bbR^{2D}} \alpha\diff\nu(\alpha)\right| \end{align}
satisfies the classical analogues of translation invariance, translation normalization, and double convexity, but is plainly not the Wasserstein distance. As above, Jensen's inequality gives
\begin{align} d_{\mathrm{bar},p}^{\rm cl}(\mu,\nu) \leq W_p(\mu,\nu). \nonumber \end{align}
Thus, $W_p$ is maximal among classical distances satisfying the first three axioms. Imposing the classical analogue of the fourth axiom supplies the reverse inequality and therefore uniquely selects $W_p$.

The situation is different in the quantum setting. The fourth axiom provides a lower bound relative to the fixed classical Husimi--Wasserstein geometry, but it does not determine how distances should behave in genuinely noncommutative directions. Accordingly, an intrinsic characterization of noncommutative quantum optimal transport would require an additional structural condition beyond the four axioms. A natural candidate is a noncommutative convex-duality requirement: the distance should arise from a cotangent Dirichlet form $Q_\rho(A)$, built from commutators with distinguished quantum gradients, whose Legendre transform defines a tangent action and whose induced length distance is the transport metric. We do not attempt to formulate such an axiom here, since it is not presently clear how to do so without building a particular metric into the definition. Instead, in the remainder of the Appendix we mostly study the Carlen--Maas distances~\cite{carlen2020non}, which come equipped with such a noncommutative convex-duality structure and therefore enjoy stronger properties than those forced by the four axioms alone.  Our main result about quantum Lyapunov exponents will apply to any quantum optimal transport distances satisfying our axioms, although of course the Carlen--Maas distances are particularly nice special cases.

\subsection{Upper bound on admissible quantum $p$-Wasserstein distances}

The key object in the proof is the coherent-state noising channel $\mcal N$ defined in Eq.~\eqref{eq:noising-channel}, which we recall here:
\begin{equation}
\label{eq:N-Husimi}
\mcal N[\rho] := \int_{\bbR^{2D}} \Husimi_\rho(\alpha)\ket\alpha\bra\alpha\diff\alpha = \int_{\bbR^{2D}} \langle\alpha|\rho|\alpha\rangle\ket\alpha\bra\alpha\,\frac{\diff\alpha}{(2\pi\hbar)^D}.
\end{equation}
This channel may equivalently be represented as a Gaussian mixture of phase-space translations:
\begin{equation}
\label{eq:N-translations}
\mcal N[\rho] = \int_{\bbR^{2D}} \gamma_\hbar(a)\mcal T_a[\rho]\diff a,
\qquad
\gamma_\hbar(a) := \frac{1}{(2\pi\hbar)^D}\exp\!\left(-\frac{|a|^2}{2\hbar}\right).
\end{equation}
Thus, $\gamma_\hbar$ is the centered Gaussian distribution on $\bbR^{2D}$ with covariance $\hbar\,\Id_{2D}$. The two representations of $\mcal N$ express the same channel: the first measures the state using the coherent-state POVM and prepares the corresponding coherent state, while the second adds a Gaussian random phase-space displacement.

Define
\begin{equation}
M_{D,p} := \left(\int_{\mathbb R^{2D}}|a|^p\gamma_1(a)\diff a\right)^{1/p} = \left(2^{p/2}\frac{\Gamma(D+p/2)}{\Gamma(D)}\right)^{1/p}.
\end{equation}

\begin{proposition}
\label{prp:ub}
Let $p\geq1$, let $d_p$ be an admissible quantum optimal transport distance, and let $\rho,\sigma\in\mathcal D_p$. Then
\begin{equation}
\label{eq:admissible-Husimi-sandwich}
W_p(\Husimi_\rho,\Husimi_\sigma)\leq d_p(\rho,\sigma)\leq W_p(\Husimi_\rho,\Husimi_\sigma)+2M_{D,p}\hbar^{1/2}.
\end{equation}
\end{proposition}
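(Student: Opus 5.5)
The lower bound in \eqref{eq:admissible-Husimi-sandwich} is exactly axiom~\ref{ax:dp}, so only the upper bound needs an argument. We route it through the coherent-state noising channel $\mcal N$, using the triangle inequality
\begin{equation*}
d_p(\rho,\sigma)\leq d_p(\rho,\mcal N[\rho]) + d_p(\mcal N[\rho],\mcal N[\sigma]) + d_p(\mcal N[\sigma],\sigma).
\end{equation*}
The outer two terms should each cost $M_{D,p}\hbar^{1/2}$. The middle term should be bounded by $W_p(\Husimi_\rho,\Husimi_\sigma)$.

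For the outer terms we use the translation representation \eqref{eq:N-translations}. We write $\rho=\int \rho\,\gamma_\hbar(a)\diff a$ as a trivial mixture and $\mcal N[\rho]=\int \mcal T_a[\rho]\gamma_\hbar(a)\diff a$. We then apply double convexity (axiom~\ref{ax:conv}) with the diagonal coupling $\Gamma(a,a')=\gamma_\hbar(a)\delta(a-a')$. Combined with the translation-distance axiom~\ref{ax:tr-id}, this gives
\begin{equation*}
d_p(\rho,\mcal N[\rho])^p\leq\int |a|^p\gamma_\hbar(a)\diff a=\hbar^{p/2}M_{D,p}^p,
\end{equation*}
by the scaling $a=\hbar^{1/2}b$. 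The same bound holds for $\sigma$.

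For the middle term we use the first representation \eqref{eq:N-Husimi}: $\mcal N[\rho]=\int \Husimi_\rho(\alpha)\ket\alpha\bra\alpha\diff\alpha$, and similarly for $\sigma$. Let $\pi$ be any coupling of $\Husimi_\rho$ and $\Husimi_\sigma$. Double convexity then gives
\begin{equation*}
d_p(\mcal N[\rho],\mcal N[\sigma])^p\leq\int d_p(\ket\alpha\bra\alpha,\ket{\alpha'}\bra{\alpha'})^p\diff\pi(\alpha,\alpha').
\end{equation*}
Since $\ket{\alpha'}\bra{\alpha'}=\mcal T_{\alpha'-\alpha}[\ket\alpha\bra\alpha]$ (the phase in $\tau_{\alpha'-\alpha}\tau_\alpha$ cancels under conjugation), axiom~\ref{ax:tr-id} makes the integrand $|\alpha-\alpha'|^p$. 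Infimizing over $\pi$ bounds the middle term by $W_p(\Husimi_\rho,\Husimi_\sigma)$. Translation invariance (axiom~\ref{ax:tr-invar}) is not even needed here.

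The main obstacle is technical rather than conceptual. We must check that the intermediate states $\mcal N[\rho]$ and the mixtures lie in $\mathcal D_p$ so that $d_p$ is defined. This holds because the Husimi function of $\mcal N[\rho]$ is $\Husimi_\rho*\gamma_\hbar$, which has a finite $p$-th moment by Minkowski's inequality. We must also confirm that axiom~\ref{ax:conv} applies to continuous mixtures indexed by $\mathbb R^{2D}$ with the stated coupling measures, and that the Bochner integrals of operators converge in trace norm. If needed, one can reduce to finite mixtures by approximation, but we take the axiom as stated for general measures.
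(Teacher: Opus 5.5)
Your proposal is correct and matches the paper's proof essentially step for step: the lower bound is axiom~\ref{ax:dp}, and the upper bound uses the same three-term triangle inequality through $\mathcal N$, with double convexity under the diagonal Gaussian coupling for the outer terms and under a coupling of $\Husimi_\rho,\Husimi_\sigma$ over coherent-state projectors for the middle term. Your remark that translation invariance is not needed is also made in the paper, and your check that $\Husimi_{\mathcal N[\rho]}=\Husimi_\rho*\gamma_\hbar$ has finite $p$-th moment (so the intermediate states lie in $\mathcal D_p$) is a technical point the paper leaves implicit.
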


\begin{proof}
The lower bound in Eq.~\eqref{eq:admissible-Husimi-sandwich} is exactly the information-processing axiom, so it remains to prove the upper bound.

We first estimate the distance between a state and its noised version. Using the decompositions
\begin{equation}
\rho = \int_{\bbR^{2D}} \gamma_\hbar(a)\rho\diff a,
\qquad
\mcal N[\rho] = \int_{\bbR^{2D}} \gamma_\hbar(a)\mcal T_a[\rho]\diff a,
\end{equation}
double convexity, applied using the diagonal coupling of $\gamma_\hbar$ with itself, gives
\begin{align}
d_p(\rho,\mcal N[\rho])^p
&\leq \int_{\bbR^{2D}} d_p(\rho,\mcal T_a[\rho])^p\gamma_\hbar(a)\diff a
\nonumber\\
&= \int_{\bbR^{2D}} |a|^p\gamma_\hbar(a)\diff a
\nonumber\\
&= M_{D,p}^p\hbar^{p/2},
\nonumber
\end{align}
where the second line uses the translation-distance axiom. Therefore,
\begin{equation}
\label{eq:noising-close}
d_p(\rho,\mcal N[\rho]) \leq M_{D,p}\hbar^{1/2}.
\end{equation}
The same estimate holds with $\rho$ replaced by $\sigma$.

We next compare the two noised states. Let $\Gamma\in\Pi(\Husimi_\rho,\Husimi_\sigma)$ be any coupling of the Husimi distributions. Applying double convexity to the coherent-state representation~\eqref{eq:N-Husimi} gives
\begin{align}
d_p(\mcal N[\rho],\mcal N[\sigma])^p &\leq \int_{\bbR^{2D}\times\bbR^{2D}}d_p(\ket\alpha\bra\alpha,\ket{\alpha'}\bra{\alpha'})^p\diff\Gamma(\alpha,\alpha'). \nonumber
\end{align}
Since coherent-state projectors are related by phase-space translations,
\begin{equation}
\ket{\alpha'}\bra{\alpha'} = \mcal T_{\alpha'-\alpha}[\ket\alpha\bra\alpha],
\end{equation}
the translation-distance axiom gives
\begin{equation}
d_p(\ket\alpha\bra\alpha,\ket{\alpha'}\bra{\alpha'}) = |\alpha-\alpha'|.
\end{equation}
Consequently,
\begin{equation}
d_p(\mcal N[\rho],\mcal N[\sigma])^p \leq \int_{\bbR^{2D}\times\bbR^{2D}}|\alpha-\alpha'|^p\diff\Gamma(\alpha,\alpha').
\end{equation}
Infimizing over $\Gamma$ yields
\begin{equation}
\label{eq:noised-Wasserstein}
d_p(\mcal N[\rho],\mcal N[\sigma]) \leq W_p(\Husimi_\rho,\Husimi_\sigma).
\end{equation}

Finally, the triangle inequality, together with Eqs.~\eqref{eq:noising-close} and~\eqref{eq:noised-Wasserstein}, gives
\begin{align}
d_p(\rho,\sigma)
&\leq d_p(\rho,\mcal N[\rho]) + d_p(\mcal N[\rho],\mcal N[\sigma]) + d_p(\mcal N[\sigma],\sigma)
\nonumber\\
&\leq W_p(\Husimi_\rho,\Husimi_\sigma) + 2M_{D,p}\hbar^{1/2}.
\nonumber
\end{align}
\end{proof}

The upper bound uses only the metric triangle inequality, double convexity, and the translation-distance axiom. Translation invariance is not needed for this particular argument, while the information-processing axiom gives the complementary lower bound.

\subsection{The $W_1$ quantum optimal transport distance}

The $W_1$ transport distance is constructed by the quantum analogue of the duality against Lipschitz functions:
\[
W_1(\mu_0,\mu_1) = 
\sup_{\|f\|_{\rm Lip} \leq 1}\!\Big\{
\int f \diff \mu_0 - \int f \diff \mu_1\Big\}.
\]
The quantum analogue of a Lipschitz function is a Lipschitz \textit{operator}, which we defined in Section~\ref{sec:Lipsec}.
We recall now the definition of the Lipschitz seminorm of an operator~\eqref{eq:Lipnorm-def}
\begin{equation}
\|A\|_{\rm Lip} := \sup_{|u|=1}\frac{1}{\hbar}\|[u\cdot\hat r,A]\|_{\rm op}.
\end{equation}
The quantum $W_1$ distance is then given by
\begin{equation}
d_{\mathrm{CM},1}(\rho,\sigma) := \sup_{\|A\|_{\rm Lip}\leq1}\Tr\!\left[A(\rho-\sigma)\right],
\end{equation}
where the supremum is understood to be over self-adjoint observables.
That this distance satisfies our axioms is nearly immediate.
\begin{lemma}
The distance $d_{\mathrm{CM},1}$ satisfies the axioms~\ref{ax:tr-invar}-\ref{ax:dp}.
\end{lemma}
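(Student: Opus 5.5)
We verify the four axioms in turn, working directly from the dual formula and using only the covariance of the Lipschitz seminorm under Weyl translations, linearity of the trace, and the coherent-state quantization $\Op_\hbar^{\rm coh}$ with its exact commutator identity~\eqref{eq:coh-comm}.

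\textbf{Translation invariance (A1).} We have $\Tr[A\,\mcal T_\alpha(\rho-\sigma)]=\Tr[\mcal T_\alpha^*[A](\rho-\sigma)]$. Since $\mcal T_\alpha^*[u\cdot\hat r]=u\cdot\hat r+(u\cdot\alpha)\One$, it follows that $[u\cdot\hat r,\mcal T_\alpha^*A]=\mcal T_\alpha^*[u\cdot\hat r,A]$. Conjugation by a unitary preserves the operator norm, so $\|\mcal T_\alpha^*A\|_{\rm Lip}=\|A\|_{\rm Lip}$, and $A\mapsto \mcal T_\alpha^* A$ is a bijection of the unit Lipschitz ball. This gives invariance of the supremum.

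\textbf{Translation distance (A2).} For the upper bound, write $\rho-\mcal T_\alpha\rho=\int_0^1\frac{d}{ds}(\ldots)$, or more simply use $\Tr[A(\rho-\mcal T_\alpha\rho)]=\Tr[\rho(A-\mcal T_\alpha^*A)]$. We have
\[
A-\tau_\alpha^\dagger A\tau_\alpha=-\int_0^1\tau_{s\alpha}^\dagger\,\tfrac{i}{\hbar}[\hat X_{\alpha},A]^\dagger\ldots\,\tau_{s\alpha}\,ds .
\]
Since $\hat X_\alpha=(J^\intercal\alpha)\cdot\hat r$ with $|J^\intercal\alpha|=|\alpha|$, this operator has norm at most $|\alpha|\,\|A\|_{\rm Lip}$, so $d_{\rm CM,1}\le|\alpha|$. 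For the lower bound, take the test observable $A=v\cdot\hat r$ with $v=-\alpha/|\alpha|$, which satisfies $\|A\|_{\rm Lip}=1$. Then $\Tr[A(\rho-\mcal T_\alpha\rho)]=-v\cdot\alpha=|\alpha|$, using the first-moment identity $\Tr(\mcal T_\alpha[\rho]\hat r)=\Tr(\rho\hat r)+\alpha$. Finite moments on $\mcal D_1$ justify pairing an unbounded $A$ with $\rho$; if the supremum is restricted to bounded $A$, we instead use truncations $A_R=\Op_\hbar^{\rm coh}(f_R)$ with $f_R$ a $1$-Lipschitz truncation of $v\cdot\alpha$ and pass $R\to\infty$.

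\textbf{Double convexity (A3).} For $p=1$, fix $A$ with $\|A\|_{\rm Lip}\le1$. By linearity,
\[
\Tr[A(\rho-\sigma)]=\int\Tr[A(\rho_\theta-\sigma_\phi)]\,d\Gamma\le\int d_{\rm CM,1}(\rho_\theta,\sigma_\phi)\,d\Gamma,
\]
since the marginals of $\Gamma$ reproduce $\rho$ and $\sigma$. Taking the supremum over $A$ gives the claim.

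\textbf{Husimi data processing (A4).} This is the main point. By Kantorovich--Rubinstein duality, it suffices to show that for every bounded $1$-Lipschitz $f$ on $\bbR^{2D}$, the operator $A=\Op_\hbar^{\rm coh}(f)$ satisfies $\|A\|_{\rm Lip}\le1$. Granting this, \eqref{eq:coh-duality} yields $\int f(\Husimi_\rho-\Husimi_\sigma)=\Tr[A(\rho-\sigma)]\le d_{\rm CM,1}$; bounded $f$ suffice by a standard truncation argument. The commutator identity~\eqref{eq:coh-comm} gives
\[
\tfrac1\hbar[u\cdot\hat r,A]=\pm i\,\Op_\hbar^{\rm coh}(w\cdot\nabla f),\qquad w=Ju\ \text{(up to sign)},\quad |w|=1,
\]
and then \eqref{eq:coh-bd} gives norm at most $\|w\cdot\nabla f\|_\infty\le1$. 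The expected obstacle is that $f$ is only Lipschitz, not smooth. We handle this by mollifying $f$ to $f_\epsilon$, which remains $1$-Lipschitz and converges uniformly, and by checking that the commutator on the Schwartz core extends boundedly. These are routine technicalities.
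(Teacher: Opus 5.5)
Your proposal is correct and follows the paper's proof essentially step for step: translation covariance of $\|\cdot\|_{\rm Lip}$ for A1; the linear test observable plus the Duhamel bound $\|A-\mcal T_\alpha^*[A]\|_{\rm op}\leq|\alpha|\,\|A\|_{\rm Lip}$ for A2; linearity of the trace for A3; and testing against $\Op_\hbar^{\rm coh}(f)$ using~\eqref{eq:coh-bd} and~\eqref{eq:coh-comm} for A4. The extra care you take addresses points the paper passes over silently without changing the argument, namely noting $|J^\intercal\alpha|=|\alpha|$, truncating the unbounded test observable $v\cdot\hat r$, and mollifying non-smooth Lipschitz $f$.
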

\begin{proof}
The translation invariance axiom~\ref{ax:tr-invar} is a consequence of the identity
\[
[u\cdot\hat r,\mathcal T_\alpha^*[A]] = \mathcal T_\alpha^*\!\left([u\cdot\hat r,A]\right),
\]
which immediately implies 
\[
\|\mathcal T_\alpha^*[A]\|_{\rm Lip} = \|A\|_{\rm Lip}.
\]
Axiom~\ref{ax:tr-id} is the identity
\begin{equation}
\label{eq:dist-trans}
d_{\mathrm{CM},1}\bigl(\rho,\mathcal T_\alpha[\rho]\bigr) = |\alpha|.
\end{equation}
First we prove the lower bound. Set $u:=|\alpha|^{-1}\alpha$ and $X_u:=u\cdot\hat r$. Then
\begin{align*}
d_{\mathrm{CM},1}\bigl(\rho,\mathcal T_\alpha[\rho]\bigr)
&\geq \Tr\!\left(-X_u(\rho-\mathcal T_\alpha[\rho])\right) \\
&= \Tr\!\left((-X_u-\mathcal T_\alpha^*[-X_u])\rho\right) \\
&= |\alpha|.
\end{align*}
The upper bound comes from the bound
\begin{align*}
\|A-\mathcal T_\alpha^*[A]\|_{\rm op}
&\leq \int_0^1\|\partial_s(\mathcal T_{s\alpha}^*[A])\|_{\rm op}\diff s \\
&= \int_0^1\frac{1}{\hbar}\|[\hat X_\alpha,A]\|_{\rm op}\diff s \\
&\leq |\alpha|.
\end{align*}

Now we check the convexity axiom~\ref{ax:conv}.  Let $\rho=\int\rho_\theta \mu(\theta)\diff \theta$ and 
$\sigma = \int\sigma_\phi \nu(\phi)\diff \phi$, and let $\Gamma$ be a coupling between $\mu$ and $\nu$.
\begin{align*}
\Tr[A(\rho-\sigma)]
&= \int \Tr[A(\rho_\theta-\sigma_\phi)]\diff\Gamma(\theta,\phi) \\
&\leq \int d_{{\rm CM},1}(\rho_\theta,\sigma_\phi)\diff\Gamma(\theta,\phi).
\end{align*}

Finally, we check the data processing inequality~\ref{ax:dp}.  For classical Lipschitz functions $f\in \Lip(\bbR^{2D})$, we have by 
the coherent-state quantization duality~\eqref{eq:coh-duality}
\[
\int f (\Husimi_\rho - \Husimi_\nu) = 
\Tr[ \Op_\hbar^{\rm coh}(f) (\rho-\nu)] \leq \|\Op_{\hbar}^{\rm coh}(f)\|_{\rm Lip}d_{\mathrm{CM},1}(\rho,\nu).
\]
Combining the operator norm bound for the coherent state quantization~\eqref{eq:coh-bd} with the exact commutator identity~\eqref{eq:coh-comm}, we have $\|\Op_{\hbar}^{\rm coh}(f)\|_{\rm Lip} \leq\|f\|_{\rm Lip}$.   Thus, taking a supremum over Lipschitz $f$ with $\|f\|_{\rm Lip}=1$ proves the desired lower bound.
\end{proof}

To complete the verification of the claimed properties of $d_{\mathrm{CM},1}$ in Lemma 1, we check how $d_{\mathrm{CM},1}$ changes
under linear symplectic transformations
\begin{lemma}
For the channel $\mathcal U_S$ associated with $S\in\operatorname{Sp}(2D,\mathbb R)$,
\begin{equation}
d_{\mathrm{CM},1}\bigl(\mathcal U_S[\rho],\mathcal U_S[\sigma]\bigr)\leq\|S\|d_{\mathrm{CM},1}(\rho,\sigma).
\end{equation}
\end{lemma}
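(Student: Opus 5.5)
The plan is to work entirely on the dual side, using the definition of $d_{\mathrm{CM},1}$ as a supremum over Lipschitz observables. By the duality $\Tr(\mathcal U_S[\rho]A)=\Tr(\rho\,\mathcal U_S^*[A])$ we have
\[
d_{\mathrm{CM},1}\bigl(\mathcal U_S[\rho],\mathcal U_S[\sigma]\bigr)=\sup_{\|A\|_{\rm Lip}\leq1}\Tr\!\left[\mathcal U_S^*[A](\rho-\sigma)\right].
\]
The claim therefore reduces to the operator Lipschitz estimate $\|\mathcal U_S^*[A]\|_{\rm Lip}\leq\|S\|\,\|A\|_{\rm Lip}$. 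Given that estimate, the observable $\|S\|^{-1}\mathcal U_S^*[A]$ is admissible in the supremum defining $d_{\mathrm{CM},1}(\rho,\sigma)$ whenever $\|A\|_{\rm Lip}\leq1$. Self-adjointness is preserved by unitary conjugation, so this gives the bound immediately.

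For the Lipschitz estimate, fix a unit vector $u$ and use that conjugation by $U_S$ is a $*$-automorphism. This gives
\[
[u\cdot\hat r,\,U_S^\dagger AU_S]=U_S^\dagger\,[\,U_S(u\cdot\hat r)U_S^\dagger,\,A\,]\,U_S .
\]
By Eq.~\eqref{eq:US-unitary}, $U_S^\dagger\hat r\,U_S=S\hat r$, hence $U_S\hat r\,U_S^\dagger=S^{-1}\hat r$. It follows that $U_S(u\cdot\hat r)U_S^\dagger=u\cdot S^{-1}\hat r=(S^{-\intercal}u)\cdot\hat r$. Writing $w=S^{-\intercal}u$, linearity of the commutator in the linear observable gives
\[
\|[w\cdot\hat r,A]\|_{\rm op}\leq|w|\,\hbar\,\|A\|_{\rm Lip},
\]
and unitary invariance of the operator norm then yields $\|\mathcal U_S^*[A]\|_{\rm Lip}\leq\sup_{|u|=1}|S^{-\intercal}u|\,\|A\|_{\rm Lip}=\|S^{-1}\|\,\|A\|_{\rm Lip}$.

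It remains to identify $\|S^{-1}\|$ with $\|S\|$. For a symplectic matrix, $S^{-1}=-JS^{\intercal}J$ with $J$ orthogonal, so $\|S^{-1}\|=\|S^\intercal\|=\|S\|$. This proves the lemma. Applying the lemma to $S^{-1}$ and to the states $\mathcal U_S[\rho],\mathcal U_S[\sigma]$ (using $U_{S^{-1}}=U_S^\dagger$ up to a phase) gives the reverse inequality in Lemma~\ref{lemma:CMsymp1} for $p=1$.

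The main obstacle is the bookkeeping with unbounded operators. The commutator identity must first be established on the Schwartz core. This is legitimate because metaplectic unitaries preserve $\mathcal S(\mathbb R^D)$, so $U_S^\dagger AU_S$ still preserves the core. Boundedness of the resulting commutator then follows from the identity above, together with the convention that the Lipschitz seminorm refers to the bounded extension. I would also check the sign conventions for $S$ versus $S^{-1}$ and for the transpose. These only affect which of $\|S\|$, $\|S^{-1}\|$ or $\|S^{\intercal}\|$ appears, and all of them coincide for symplectic $S$, so the final bound is robust to those conventions.
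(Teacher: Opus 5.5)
Your proposal is correct and takes the same approach as the paper. The paper also reduces the claim, via the duality $\Tr(\mathcal U_S[\rho]A)=\Tr(\rho\,\mathcal U_S^*[A])$, to the estimate $\|\mathcal U_S^*[A]\|_{\rm Lip}\leq\|S\|\,\|A\|_{\rm Lip}$. The paper only asserts that estimate, and you supply its verification: you correctly obtain the factor $\|S^{-1}\|$ and identify it with $\|S\|$ through $S^{-1}=-JS^{\intercal}J$.
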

\begin{proof}
This follows from
\begin{equation}
\|\mathcal U_S^*[A]\|_{\rm Lip}\leq\|S\|\|A\|_{\rm Lip}.
\end{equation}
\end{proof}

\section{The $W_2$ quantum optimal transport distance}

\subsection{Preliminaries}

The Carlen--Maas 2-Wasserstein distance~\cite{carlen2020non} is constructed by analogy to the classical Benamou-Brenier formula~\cite{benamou2000computational}:
\begin{equation}
\label{eq:BB-formula}
W_2(\mu_0,\mu_1)^2 = \inf\left\{ \int_0^1\int_{\mathbb R^{2D}} |v_t(\alpha)|^2\mu_t(\alpha)\diff\alpha\diff t : \partial_t\mu_t+\nabla\cdot(\mu_t v_t)=0,\  \mu_{t=0}=\mu_0,\  \mu_{t=1}=\mu_1 \right\}.
\end{equation}
In the formula above, $\mu_t$ is a trajectory along classical densities.  To generalize this formula to the quantum setting, we need appropriate analogues of the following operations: (1) integration over phase space, (2) multiplication by the classical density $\mu$, and (3) differentiation. The basic classical--quantum correspondence is summarized in Table~\ref{tab:classical-quantum-basic}. We will instead work with a dual formulation of the Benamou--Brenier formula, as explained below in Section~\ref{sec:dualaction}, but the primal formulation already motivates the basic ingredients of the quantum construction.

\begin{table*}[t]
\centering
\footnotesize
\setlength{\tabcolsep}{10pt}
\renewcommand{\arraystretch}{1.3}
\begin{tabular}{|p{0.45\textwidth}|p{0.45\textwidth}|}
\hline
\multicolumn{1}{|c|}{\textbf{Classical object}} & \multicolumn{1}{c|}{\textbf{Quantum object}} \\
\hline
Probability density $\mu(\alpha)$ & Density matrix $\rho\in\mathcal D$ \\
\hline
Reference density $\mu_\beta(x)$ & Gibbs state $\sigma_\beta$ \\
\hline
Smooth test function $\phi$ & Self-adjoint observable $A=A^\dagger$ \\
\hline
Bounded function $f$ & Bounded operator $F\in\mathcal B(\mathcal H)$ \\
\hline
Integration $\int f(\alpha)\diff \alpha$ & Trace $\Tr F$ \\
\hline
Expectation $\int f(x)\mu(x)\diff x$ & State--observable pairing $\Tr(F\rho)$ \\
\hline
Signed tangent density $\eta$, with $\int\eta=0$ & Trace-zero tangent $\xi\in\mathcal M_0$ \\
\hline
Dual pairing $\langle\phi,\eta\rangle=\int\phi\,\eta$ & Dual pairing $\Tr(A\xi)$ \\
\hline
Constant function $1$ & Identity operator $\mathds 1$ \\
\hline
\end{tabular}
\caption{Classical and quantum counterparts of the basic objects appearing in the transport formulation.}
\label{tab:classical-quantum-basic}
\end{table*}

Our first task is to clarify the quantum notions of these objects -- multiplication by $\rho$ being the most nontrivial 
given the noncommutativity of the quantum setting.  
For $\omega\in\bbR$ and $\rho\in\mcal{D}$ we define the map
$\Theta_{\rho,\omega}:\mcal{B}_0 \to \mcal{B}$ by
\[
\Theta_{\rho,\omega}(B) := \int_0^1 e^{\omega(1-2s)/2} \rho^{1-s} B \rho^s \diff s\,,
\]
where $\rho^s := \sum_j r_j^s P_j$ is defined spectrally.
We note that
\begin{equation}
\label{eq:Theta-Id}
\Theta_{\rho,\omega}(\Id) =
\int_0^1 e^{\omega(1-2s)/2} \diff s \rho = \kappa_\omega \rho\,,
\end{equation}
with
\begin{equation}
\kappa_\omega := \tfrac2\omega \sinh(\tfrac\omega2)\,,\qquad \kappa_{\beta,\hbar} := \kappa_{2\beta\hbar} = \frac{\sinh(\beta\hbar)}{\beta\hbar},
\end{equation}
with the convention $\kappa_{0,\hbar}=1$.
The identity~\eqref{eq:Theta-Id} reveals that $\Theta_{\rho,\omega}$ is
not in fact a perfect analogy to ``multiplication by $\rho$'' because of
the factor $\kappa_\omega$ which comes from the non-commutativity of $\hat{a}$
and $\hat{a}^\dagger$.  Note however that at fixed $\beta$, $\omega = O(\hbar)$,
so that $\kappa_\omega = 1 + O(\hbar)$ and thus the classical
limit is indeed compatible with the intuition that $\Theta_{\rho,\omega}$
is ``multiplication by $\rho$''.

We note also that $\Theta_{\rho,\omega}$ can be rewritten
\[
\Theta_{\rho,\omega} = \Lambda(e^{\omega/2} L_\rho, e^{-\omega/2}R_\rho),
\]
where $L_\rho[A] = \rho A$ is left multiplication by $\rho$, $R_\rho[A] = A\rho$ is right multiplication
by $\rho$, and $\Lambda$ is the mean
\[
\Lambda(A,B) := \int_0^1 A^{1-s} B^s\diff s.
\]
The point of this formulation is that $\Lambda$ is a Kubo-Ando mean, and is jointly concave in
$A$ and $B$.  Thus, $\Theta_{\rho,\omega}[A]$ is concave in $\rho$ (for fixed $A$).  That is, for fixed $A\in\Lip$, density matrices $\rho_0$ and $\rho_1$,
and $\rho_s := (1-s)\rho_0 + s\rho_1$ for $s\in[0,1]$ we have
\begin{equation}
\label{eq:Theta-concavity}
\Theta_{\rho_s, \omega}(A) \geq (1-s)\Theta_{\rho_0,\omega}(A)
+ s \Theta_{\rho_1,\omega}(A).
\end{equation}

The analogue of the integral for
states is the Hilbert-Schmidt inner product,
\[
\langle A, B\rangle_{\rm HS} := \Tr[A^\dag B],
\]
which is always defined for $A\in\mcal{B}$ and $B\in\mathfrak S_1(\mathcal H)$.
The analogue of a derivative is a commutator with one of the jump operators. For $\ell\in\mathcal I$, define
\begin{equation}
\partial_\ell A := [V_\ell,A].
\end{equation}
The quantum analogue of the classical Dirichlet energy is the quadratic form
\begin{equation}
\label{eq:Q-def}
Q_{\rho,\beta}(A) := \sum_{\ell\in\mathcal I}\left\langle\partial_\ell A,\Theta_{\rho,\omega_\ell}(\partial_\ell A)\right\rangle_{\rm HS}.
\end{equation}
Equivalently, if $X_\ell:=\partial_\ell A$ and $\rho=\sum_n\lambda_nP_n$, then
\begin{equation}
Q_{\rho,\beta}(A) = \sum_{\ell\in\mathcal I}\int_0^1e^{\omega_\ell(1-2s)/2}\sum_{m,n}\lambda_m^{1-s}\lambda_n^s\left|(X_\ell)_{mn}\right|^2\diff s.
\end{equation}
In particular, $Q_{\rho,\beta}(A)\geq0$. Its polarization is
\begin{equation}
\Gamma_{\rho,\beta}(A,B) := \sum_{\ell\in\mathcal I}\left\langle\partial_\ell A,\Theta_{\rho,\omega_\ell}(\partial_\ell B)\right\rangle_{\rm HS},
\end{equation}
so that $Q_{\rho,\beta}(A)=\Gamma_{\rho,\beta}(A,A)$. The bilinearity of $\Gamma_{\rho,\beta}(A,B)$ and the nonnegativity of $\Gamma_{\rho,\beta}(A,A)$ then imply the following Cauchy-Schwarz inequality:
\begin{equation}
\label{eq:Gamma-CS}
|\Gamma_{\rho,\beta}(A,B)| \leq \sqrt{Q_{\rho,\beta}(A) Q_{\rho,\beta}(B)}\,.
\end{equation}

The quadratic form $Q_{\rho,\beta}(A)$ is the quantum analogue of the energy $\int |\nabla \phi|^2 \rho$.  In this analogy, the observable $A$ plays the role of the classical function $\phi$, while the gradient is represented by the commutators $[V_\ell,A]$.

The concavity of $\Theta$ implies the following concavity result for
the quadratic form $Q$.
\begin{lemma}
\label{lem:Q-convexity}
For fixed $A\in\mathcal B_0$, the map $\rho\mapsto Q_{\rho,\beta}(A)$ is concave on $\mathcal D_+$. In particular, if $\rho_\theta=(1-\theta)\rho_0+\theta\rho_1$, then
\begin{equation}
Q_{\rho_\theta,\beta}(A) \geq (1-\theta)\,Q_{\rho_0,\beta}(A)+\theta \, Q_{\rho_1,\beta}(A).
\end{equation}
\end{lemma}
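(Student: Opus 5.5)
The plan is to reduce the statement to the concavity of $\Theta$ recorded in Eq.~\eqref{eq:Theta-concavity}, applied termwise to the sum defining $Q_{\rho,\beta}(A)$ in Eq.~\eqref{eq:Q-def}. Fix $A\in\mathcal B_0$. For each $\ell\in\mathcal I$, put $X_\ell:=\partial_\ell A=[V_\ell,A]$; by the definition of $\mathcal B_0$ it extends to a bounded operator that does not depend on $\rho$. Then
\[
Q_{\rho,\beta}(A)=\sum_{\ell\in\mathcal I}\Tr\!\bigl(X_\ell^\dagger\,\Theta_{\rho,\omega_\ell}(X_\ell)\bigr),
\]
so it suffices to show that for each fixed bounded $X$ and fixed $\omega\in\mathbb R$, the scalar map $\rho\mapsto\Tr(X^\dagger\Theta_{\rho,\omega}(X))$ is concave on $\mathcal D_+$. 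Summing over the finitely many $\ell$ preserves concavity.

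The key step is to use the representation $\Theta_{\rho,\omega}=\Lambda(e^{\omega/2}L_\rho,e^{-\omega/2}R_\rho)$. Here $L_\rho$ and $R_\rho$ are commuting positive operators on the Hilbert--Schmidt space $\mathfrak S_2(\mathcal H)$. Both are linear in $\rho$, so $L_{\rho_\theta}=(1-\theta)L_{\rho_0}+\theta L_{\rho_1}$, and likewise for $R$. The logarithmic mean $\Lambda(a,b)=\int_0^1a^{1-s}b^s\diff s$ is an operator-monotone/Kubo--Ando mean, and is therefore jointly concave in the positive-operator sense. This gives Eq.~\eqref{eq:Theta-concavity} as an inequality of positive superoperators on $\mathfrak S_2$, which reads
\[
\Lambda(L_{\rho_\theta},R_{\rho_\theta})\ge(1-\theta)\Lambda(L_{\rho_0},R_{\rho_0})+\theta\Lambda(L_{\rho_1},R_{\rho_1}),
\]
where I have absorbed the constants $e^{\pm\omega/2}$, which is harmless since scaling commutes with the mean. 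Taking the Hilbert--Schmidt expectation $\langle X,\cdot\,X\rangle_{\rm HS}$ of this superoperator inequality yields the scalar inequality. Concretely, I will interpret Eq.~\eqref{eq:Theta-concavity} in the sense of quadratic forms on $\mathfrak S_2$, i.e.\ $\langle X,\Theta_{\rho_\theta,\omega}(X)\rangle_{\rm HS}\ge(1-\theta)\langle X,\Theta_{\rho_0,\omega}(X)\rangle_{\rm HS}+\theta\langle X,\Theta_{\rho_1,\omega}(X)\rangle_{\rm HS}$, which is exactly what is needed.

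The main obstacle is infinite dimensionality. $X_\ell$ is merely bounded, not Hilbert--Schmidt, so $L_\rho$ and $R_\rho$ acting on $X_\ell$ must be handled with care, and joint concavity of Kubo--Ando means is most cleanly stated for bounded positive operators. I would first give a direct spectral proof, avoiding abstract Kubo--Ando theory, for finite-rank compressions. Let $P_N$ be a spectral projection onto finitely many eigenvectors, or use the cutoffs $\Pi_N$, and replace $X$ by $P_NXP_N$, which is Hilbert--Schmidt; on $\mathfrak S_2$ the superoperators $L_\rho$ and $R_\rho$ are bounded and commuting, so joint concavity applies verbatim. Each term $\lambda_m^{1-s}\lambda_n^s|X_{mn}|^2$ is nonnegative, so the truncated quantity $\langle P_NXP_N,\Theta_{\rho,\omega}(P_NXP_N)\rangle_{\rm HS}$ increases to $Q$-type sums as $N\to\infty$ by monotone convergence in the eigenbasis expansion. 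Concavity is preserved under pointwise limits. The subtlety is that the truncation basis must not depend on $\rho$ when comparing $\rho_0$, $\rho_1$ and $\rho_\theta$. I would therefore truncate with a fixed $\rho$-independent family, such as $\Pi_N$ from the oscillator basis, and argue that $\Tr(\Pi_NX^\dagger\Pi_N\,\Theta_\rho(\Pi_NX\Pi_N))\to\Tr(X^\dagger\Theta_\rho(X))$ for each $\rho\in\mathcal D_+$. This limit holds because $\rho^{1-s}$ and $\rho^s$ are trace-class-compatible ($\rho^{1-s}X\rho^s$ is trace class by Hölder with exponents $1/(1-s)$ and $1/s$) and because $\Pi_N\to\Id$ strongly.
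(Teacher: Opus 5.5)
Your proposal is correct and follows the paper's route: the paper gives no separate proof and deduces the lemma from the joint concavity of the Kubo--Ando logarithmic mean $\Lambda(e^{\omega/2}L_\rho,e^{-\omega/2}R_\rho)$, applied to each term of the sum over $\ell$. Two of your additions are infinite-dimensional details the paper leaves implicit: reading Eq.~\eqref{eq:Theta-concavity} as a quadratic-form inequality on $\mathfrak S_2$, and the $\rho$-independent truncation $\Pi_N X_\ell \Pi_N$ with the H\"older and strong-convergence limit.
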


The Dirichlet energy $Q_{\rho,\beta}$ has an additional translation-invariance property that will imply the translation invariance axiom of the distance.

\begin{lemma}
\label{lem:Q-translation}
The quadratic form $Q_{\rho,\beta}[A]$ satisfies the translation invariance
\[
Q_{\mathcal T_\alpha[\rho],\beta}[\mathcal T_\alpha[A]] = Q_{\rho,\beta}[A].
\]
\end{lemma}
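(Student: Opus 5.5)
The proof is a direct computation: translation conjugation commutes, up to an additive multiple of the identity, with every ingredient of $Q_{\rho,\beta}$. First I will record how the jump operators transform. Since $\mathcal T_\alpha^*[\hat r]=\hat r+\alpha\mathds 1$ and $\hat a_j,\hat a_j^\dagger$ are linear combinations of the $\hat r_k$, we get
\begin{equation*}
\tau_\alpha^\dagger V_\ell\tau_\alpha = V_\ell + c_\ell(\alpha)\mathds 1
\end{equation*}
for explicit complex constants $c_\ell(\alpha)$. Equivalently, $\tau_\alpha V_\ell\tau_\alpha^\dagger = V_\ell - c_\ell(\alpha)\mathds 1$.

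Next I will show that derivatives are covariant. Write $\mathcal T_\alpha[A]=\tau_\alpha A\tau_\alpha^\dagger$. Because scalars drop out of commutators,
\begin{equation*}
\partial_\ell\mathcal T_\alpha[A] = [V_\ell,\tau_\alpha A\tau_\alpha^\dagger] = \tau_\alpha[\tau_\alpha^\dagger V_\ell\tau_\alpha,A]\tau_\alpha^\dagger = \mathcal T_\alpha[\partial_\ell A].
\end{equation*}
This identity is first checked on the Schwartz core, which $\tau_\alpha$ preserves. It then extends to bounded operators, so that $\mathcal T_\alpha$ maps $\mathcal B_0$ into itself.

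Then I will show that the multiplication operator is covariant. Functional calculus commutes with unitary conjugation, so $(\tau_\alpha\rho\tau_\alpha^\dagger)^s=\tau_\alpha\rho^s\tau_\alpha^\dagger$. Substituting this into the definition of $\Theta$ gives
\begin{equation*}
\Theta_{\mathcal T_\alpha[\rho],\omega}(\mathcal T_\alpha[B])=\mathcal T_\alpha[\Theta_{\rho,\omega}(B)].
\end{equation*}
Finally, the Hilbert--Schmidt pairing is invariant under simultaneous unitary conjugation, by cyclicity of the trace: $\langle \tau X\tau^\dagger,\tau Y\tau^\dagger\rangle_{\rm HS}=\langle X,Y\rangle_{\rm HS}$. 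Summing over $\ell\in\mathcal I$ yields $Q_{\mathcal T_\alpha[\rho],\beta}(\mathcal T_\alpha[A])=Q_{\rho,\beta}(A)$.

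The only step needing care is the functional-analytic one. Cyclicity must be justified when $X=\partial_\ell A$ is bounded and $\Theta_{\rho,\omega}(X)$ is trace class. The commutator identity must also be justified on the invariant core before extending by boundedness. Both points are routine, because $\tau_\alpha$ is unitary and preserves $\mathcal S(\mathbb R^D)$.
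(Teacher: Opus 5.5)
Your proof is correct and takes the same approach as the paper. The paper's one-line proof rests on exactly the commutator covariance $[V_\ell,\mathcal T_\alpha[A]]=\mathcal T_\alpha\bigl[[V_\ell,A]\bigr]$, which you derive from $\tau_\alpha^\dagger V_\ell\tau_\alpha = V_\ell + c_\ell(\alpha)\mathds 1$; you also spell out the covariance of $\Theta$ under unitary conjugation and the unitary invariance of the Hilbert--Schmidt pairing, both of which the paper leaves implicit in ``the definition of $Q_{\rho,\beta}$.''
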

\begin{proof}
This follows from the definition of $Q_{\rho,\beta}$ and the identity
\[
[V_\ell,\mathcal T_\alpha[A]] = \mathcal T_\alpha[V_\ell,A]\,.
\]
\end{proof}

We also record how $Q_{\rho,\beta}$ transforms under linear symplectomorphisms. 
\begin{lemma}
\label{lem:Q-symplecto}
Let $S\in\operatorname{Sp}(2D,\mathbb R)$. Then
\begin{equation}
Q_{\mathcal U_S[\rho],0}\bigl(\mathcal U_S[A]\bigr)\leq\|S\|^2Q_{\rho,0}(A).
\end{equation}
\end{lemma}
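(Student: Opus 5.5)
The plan is to rewrite $Q_{\rho,0}$ in a form where the jump operators $V_\ell$ are replaced by the canonical phase-space operators $\hat r_k$, on which $U_S$ acts linearly through $S$, and then reduce the claim to a trace inequality for a positive semidefinite Gram matrix.

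\emph{Step 1: rewrite $Q_{\rho,0}$ in terms of $\hat r$.} At $\beta=0$ all frequencies vanish, so $\Theta_{\rho,0}=\Lambda(L_\rho,R_\rho)$ for every $\ell$. The form $q_\rho(X,Y):=\langle X,\Theta_{\rho,0}(Y)\rangle_{\rm HS}$ is then a positive semidefinite Hermitian sesquilinear form (complex-linear in $Y$, conjugate-linear in $X$). For $k=1,\dots,2D$, set $Y_k:=\hbar^{-1}[\hat r_k,A]$. From $\hat a_j=(2\hbar)^{-1/2}(\hat x_j+\rmi\hat p_j)$ one gets
\begin{equation*}
[V_{j,\mp},A]=2^{-1/2}\bigl(Y_{x_j}\pm\rmi Y_{p_j}\bigr).
\end{equation*}
By sesquilinearity, $q_\rho(u+\rmi v)+q_\rho(u-\rmi v)=2q_\rho(u)+2q_\rho(v)$, since the cross terms cancel. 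Hence
\begin{equation*}
Q_{\rho,0}(A)=\sum_{k=1}^{2D}q_\rho(Y_k)=\operatorname{tr}G,\qquad G_{mn}:=q_\rho(Y_m,Y_n).
\end{equation*}
Here $G$ is a $2D\times 2D$ positive semidefinite Hermitian matrix, because $\sum_{m,n}\bar c_m c_n G_{mn}=q_\rho\bigl(\sum_m c_mY_m\bigr)\geq 0$.

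\emph{Step 2: transform under $\mathcal U_S$.} Using Eq.~\eqref{eq:US-unitary},
\begin{equation*}
[\hat r_k,U_SAU_S^\dagger]=U_S\,[U_S^\dagger\hat r_kU_S,A]\,U_S^\dagger=U_S\Bigl(\sum_m S_{km}[\hat r_m,A]\Bigr)U_S^\dagger,
\end{equation*}
so the new gradients are $Y_k'=U_S\bigl(\sum_m S_{km}Y_m\bigr)U_S^\dagger$. The functional calculus gives $(U\rho U^\dagger)^s=U\rho^sU^\dagger$, hence the unitary covariance $\Theta_{U\rho U^\dagger,0}(UXU^\dagger)=U\,\Theta_{\rho,0}(X)\,U^\dagger$. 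Combined with cyclicity of the trace, this yields $q_{\mathcal U_S[\rho]}(UXU^\dagger)=q_\rho(X)$. Since $S$ is real, it follows that
\begin{equation*}
Q_{\mathcal U_S[\rho],0}(\mathcal U_S[A])=\sum_k q_\rho\Bigl(\sum_m S_{km}Y_m\Bigr)=\operatorname{tr}\bigl(SGS^\intercal\bigr)=\operatorname{tr}\bigl(G\,S^\intercal S\bigr).
\end{equation*}

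\emph{Step 3: conclude.} For $G\geq0$ and $M=S^\intercal S\geq0$ we have $\operatorname{tr}(GM)=\operatorname{tr}(G^{1/2}MG^{1/2})\leq\|M\|\operatorname{tr}G$. With $\|S^\intercal S\|=\|S\|^2$, this gives the lemma.

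\emph{Main obstacle.} I expect the hardest part to be the functional-analytic bookkeeping rather than the algebra. One must check that $\mathcal U_S[A]$ is again an admissible test observable, i.e.\ that it preserves the Schwartz core with bounded commutators; this holds because metaplectic operators preserve $\mathcal S(\mathbb R^D)$ and the commutators are finite linear combinations of bounded ones. One must also justify the identities above for bounded $Y_k$ with $\rho$ possibly not full rank. If needed, I would first prove the statement for $\rho\in\mathcal D_+$, or with the cutoffs $\Pi_N$, using the spectral formula for $Q$ (which is manifestly finite and nonnegative for bounded $X_\ell$), and then pass to the limit.
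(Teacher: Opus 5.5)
Your proposal is correct and follows essentially the same route as the paper. It uses unitary covariance of $\Theta_{\rho,0}$ together with $[\hat r,\mathcal U_S[A]]=\mathcal U_S([S\hat r,A])$, and then bounds by $\|S\|^2$ via your estimate $\operatorname{tr}(G\,S^\intercal S)\leq\|S\|^2\operatorname{tr}G$, which is the same inequality as the paper's pointwise-in-$s$ bound $\sum_j\|\sum_k S_{jk}\rho^{(1-s)/2}X_k\rho^{s/2}\|_{\rm HS}^2\leq\|S\|^2\sum_k\|\rho^{(1-s)/2}X_k\rho^{s/2}\|_{\rm HS}^2$. Your Step 1 also makes explicit the passage from the $V_\ell$-gradients to the $\hat r_k$-gradients, valid because all $\omega_\ell=0$ at $\beta=0$, which the paper uses without comment.
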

\begin{proof}
First, we use
\begin{equation}
\Theta_{\mathcal U_S[\rho],0}\bigl(\mathcal U_S[X]\bigr) = \mathcal U_S\!\left[\Theta_{\rho,0}(X)\right],
\end{equation}
and
\begin{equation}
[\hat r,\mathcal U_S[A]] = \mathcal U_S\!\left([S\hat r,A]\right).
\end{equation}
Set $X_k=[\hat r_k,A]$. By the definition of $\Theta_{\rho,0}$,
\begin{align*}
Q_{\mathcal U_S[\rho],0}(\mathcal U_S[A])
&=
\int_0^1 \sum_j
\left\|
\rho^{(1-s)/2}
\left(\sum_k S_{jk}X_k\right)
\rho^{s/2}
\right\|_{\mathrm{HS}}^2\diff s \\
&=
\int_0^1 \sum_j
\left\|
\sum_k S_{jk}
\rho^{(1-s)/2}X_k\rho^{s/2}
\right\|_{\mathrm{HS}}^2\diff s \\
&\leq
\|S\|^2
\int_0^1 \sum_k
\left\|
\rho^{(1-s)/2}X_k\rho^{s/2}
\right\|_{\mathrm{HS}}^2\diff s \\
&=
\|S\|^2Q_{\rho,0}(A).
\end{align*}
\end{proof}

\subsection{The quantum optimal transport distance}
\label{sec:dualaction}

\begin{table*}[t]
\centering
\footnotesize
\setlength{\tabcolsep}{10pt}
\renewcommand{\arraystretch}{1.3}
\begin{tabular}{|p{0.45\textwidth}|p{0.45\textwidth}|}
\hline
\multicolumn{1}{|c|}{\textbf{Classical transport object}} & \multicolumn{1}{c|}{\textbf{Quantum optimal transport object}} \\
\hline
Gradient $\nabla\phi$ & Noncommutative gradients $\partial_\ell A:=[V_\ell,A]$ \\
\hline
Product $f\mu$ & Operator mean $\Theta_{\rho,\omega_\ell}(F)$ \\
\hline
Weighted gradient $\mu\nabla\phi$ & Noncommutative flux $\Theta_{\rho,\omega_\ell}(\partial_\ell A)$ \\
\hline
Cotangent quadratic form $Q_\mu^{\rm cl}(\phi):=\int|\nabla\phi|^2\mu\,\diff x$ & Cotangent quadratic form \newline $Q_{\rho,\beta}(A):=\sum_\ell\Tr\!\left[(\partial_\ell A)^\dagger\Theta_{\rho,\omega_\ell}(\partial_\ell A)\right]$ \\
\hline
Dual tangent action $\mathsf a_\mu^{\rm cl}(\eta):=\sup_\phi\{2\langle\phi,\eta\rangle-Q_\mu^{\rm cl}(\phi)\}$ & Dual tangent action $\mathsf a_{\rho,\beta}(\xi):=\sup_A\{2\Tr(A\xi)-Q_{\rho,\beta}(A)\}$ \\
\hline
Benamou--Brenier action $\int_0^1\mathsf a_{\mu_t}^{\rm cl}(\dot\mu_t)\diff t$ & Quantum action $\int_0^1\mathsf a_{\rho_t,\beta}(\dot\rho_t)\diff t$ \\
\hline
\end{tabular}
\caption{Classical and quantum counterparts of the differential and variational objects defining the transport geometry.}
\label{tab:classical-quantum-transport}
\end{table*}

Having introduced the ingredients of the quantum transport geometry, we summarize their classical counterparts in Table~\ref{tab:classical-quantum-transport}. We can now formulate the quantum analogue of the Benamou--Brenier construction~\eqref{eq:BB-formula}.  In the infinite-dimensional setting, however, the primal formulation of~\eqref{eq:BB-formula} is difficult to make precise.  The difficulty is that the usual way to use the primal formulation is to construct test trajectories $\rho_s$, and recover the velocity field $v_s$ as the solution to the equation
\[ \nabla \cdot (v_s \rho_s) = -\partial_s \rho_s. \]
In the infinite-dimensional quantum setting, the corresponding non-commutative divergence equation can be highly degenerate and difficult to formulate, let alone solve. Instead, it is more natural to work with the dual formulation. To dualize, define the classical primal tangent action by
\begin{equation}
\label{eq:primal-a-classical}
\mathsf a_{\mu}^{\rm cl,primal}(\eta) := \inf\left\{ \int_{\mathbb R^{2D}}|v(\alpha)|^2\mu(\alpha)\diff\alpha : \eta=-\nabla\cdot(\mu v) \right\}.
\end{equation}
The corresponding dual action is
\begin{equation}
\label{eq:dual-a-classical}
\mathsf a_{\mu}^{\rm cl}(\eta) := \sup_\phi \left\{ 2\langle\phi,\eta\rangle - \int_{\mathbb R^{2D}}|\nabla\phi(\alpha)|^2\mu(\alpha)\diff\alpha
\right\}.
\end{equation}
Integrating by parts and using $2a\cdot b-|a|^2\leq|b|^2$ gives
\begin{equation}
\mathsf a_{\mu}^{\rm cl}(\eta) \leq \mathsf a_{\mu}^{\rm cl,primal}(\eta).
\end{equation}
Conversely, when $\eta=-\nabla\cdot(\mu\nabla\psi)$, choosing $\phi=\psi$ saturates the inequality. Thus we have
\begin{equation}
\label{eq:dual-classical-def}
W_2(\mu_0,\mu_1)^2 = \inf_{\mu_t} \int_0^1 \mathsf a_{\mu_t}^{\rm cl}(\dot\mu_t) \diff t.
\end{equation}

The characterization of the classical $W_2$ distance in terms of the dual action~\eqref{eq:dual-a-classical} is more natural to adapt to the quantum setting because it uses only the structures defined above and avoids solving the divergence equation appearing in the primal formulation~\eqref{eq:primal-a-classical}.

We now proceed with the definition of the quantum optimal transport distance in stages. The first quantity to define is the quantum analogue of the dual action $\msf{a}_\rho$.  We note a few small points: (1) we define the action for tangent vectors which should be trace-class and traceless, (2) the dual action takes as input the additional parameter $\beta$ which is not unique to the quantum setting, and (3) we take the supremum over only Lipschitz observables so that the quadratic form $Q_{\rho,\beta}$
is well defined.
\begin{definition}[Quantum dual action]
For tangent vectors $\xi\in\mcal{M}_0$, we define the quantum action
\begin{equation}
\label{eq:quantum-dual-action}
\mathsf a_{\rho,\beta}(\xi) := \sup_{A\in\mathcal B_0}\left\{2\Tr(A\xi)-Q_{\rho,\beta}(A)\right\}.
\end{equation}
\end{definition}

\noindent
We observe the following transformation rules for the action under translations and linear
symplectomorphisms.

\begin{lemma}
The action satisfies
\begin{equation}
\label{eq:action-invariance}
\mathsf a_{\mathcal T_\alpha[\rho],\beta}\bigl(\mathcal T_\alpha[\xi]\bigr) = \mathsf a_{\rho,\beta}(\xi).
\end{equation}
Moreover, at $\beta=0$,
\begin{equation}
\label{eq:action-symplectic}
\|S\|^{-2}\mathsf a_{\rho,0}(\xi) \leq \mathsf a_{\mathcal U_S[\rho],0}\bigl(\mathcal U_S[\xi]\bigr) \leq \|S\|^2\mathsf a_{\rho,0}(\xi).
\end{equation}
\end{lemma}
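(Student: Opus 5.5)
The plan is to prove both statements by a change of variables in the supremum defining $\mathsf a$. The key observation is that the maps $A\mapsto\mathcal T_\alpha[A]$ and $A\mapsto\mathcal U_S[A]$ are bijections of the test class $\mathcal B_0$ (we assume $\mathcal B_0$ is chosen stable under Weyl and metaplectic conjugation, which is consistent with its containing the linear observables $\hat X_\alpha$ and with its defining property that commutators with $V_\ell$ be bounded). They also preserve the pairing with tangent vectors.

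\textbf{Translation invariance.} For any $A\in\mathcal B_0$, cyclicity of the trace gives $\Tr(\mathcal T_\alpha[A]\,\mathcal T_\alpha[\xi])=\Tr(A\xi)$. Lemma~\ref{lem:Q-translation} gives $Q_{\mathcal T_\alpha[\rho],\beta}(\mathcal T_\alpha[A])=Q_{\rho,\beta}(A)$. Hence
\[
2\Tr(A\xi)-Q_{\rho,\beta}(A)=2\Tr(\mathcal T_\alpha[A]\,\mathcal T_\alpha[\xi])-Q_{\mathcal T_\alpha[\rho],\beta}(\mathcal T_\alpha[A]).
\]
Taking the supremum over $A$, and using that $A\mapsto\mathcal T_\alpha[A]$ is a bijection of $\mathcal B_0$ with inverse $\mathcal T_{-\alpha}$, yields Eq.~\eqref{eq:action-invariance}.

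\textbf{Upper symplectic bound.} First note that $\mathsf a$ is a Legendre-type transform, so it is order-reversing in $Q$. If $Q'(\tilde A)\geq c^{-1}Q(A)$ with $\tilde A$ corresponding to $A$, then the sup defining the action scales as follows. Write $B=\mathcal U_S[A]$. Then $\Tr(B\,\mathcal U_S[\xi])=\Tr(A\xi)$, and by Lemma~\ref{lem:Q-symplecto} applied to $S^{-1}$ (noting $\|S^{-1}\|=\|S\|$ for symplectic $S$, since the singular values come in reciprocal pairs) we get
\[
Q_{\mathcal U_S[\rho],0}(B)\geq\|S\|^{-2}Q_{\rho,0}(A).
\]
Therefore
\[
\mathsf a_{\mathcal U_S[\rho],0}(\mathcal U_S[\xi])\leq\sup_A\{2\Tr(A\xi)-\|S\|^{-2}Q_{\rho,0}(A)\}.
\]
Rescale $A=\|S\|^{2}A'$. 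Since $Q$ is quadratic, the right-hand side becomes
\[
\sup_{A'}\{2\|S\|^2\Tr(A'\xi)-\|S\|^{2}Q_{\rho,0}(A')\}=\|S\|^2\mathsf a_{\rho,0}(\xi).
\]
This uses that $\mathcal B_0$ is a real vector space, so it is closed under scaling.

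\textbf{Lower symplectic bound.} Apply the upper bound with $S^{-1}$, the state $\mathcal U_S[\rho]$, and the tangent $\mathcal U_S[\xi]$. Since $\mathcal U_{S^{-1}}\mathcal U_S$ is the identity (metaplectic phases cancel under conjugation), we get
\[
\mathsf a_{\rho,0}(\xi)\leq\|S\|^2\mathsf a_{\mathcal U_S[\rho],0}(\mathcal U_S[\xi]),
\]
which is the left inequality of Eq.~\eqref{eq:action-symplectic}.

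\textbf{Main obstacle.} The main technical point is that the conjugations preserve $\mathcal B_0$. Boundedness of $[V_\ell,\mathcal U_S[A]]$ follows because the identity $[\hat r,\mathcal U_S[A]]=\mathcal U_S([S\hat r,A])$ expresses it as a linear combination of bounded commutators $[V_{\ell'},A]$. Preservation of the Schwartz core follows from the metaplectic group preserving $\mathcal S(\mathbb R^D)$. If $\mathcal B_0$ is only stable under cutoffs rather than genuinely invariant, I would instead take the sup over the invariant hull and argue by density, which does not change $\mathsf a$.
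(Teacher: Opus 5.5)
Your proposal is correct and follows essentially the paper's argument: the change of variables $B=\mathcal T_\alpha[A]$ (resp.\ $B=\mathcal U_S[A]$) in the variational formula, combined with Lemma~\ref{lem:Q-translation} (resp.\ Lemma~\ref{lem:Q-symplecto}), the quadratic rescaling of $A$, and the $S\mapsto S^{-1}$ trick with $\|S^{-1}\|=\|S\|$. The only differences are that you prove the upper bound first, by applying Lemma~\ref{lem:Q-symplecto} to $S^{-1}$, whereas the paper proves the lower bound first, and that you state explicitly the invariance of $\mathcal B_0$ under these conjugations, which the paper uses without comment.
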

\begin{proof}
The first follows directly from Lemma~\ref{lem:Q-translation}, using also that $\mathcal T_\alpha$ is unitary with respect to the Hilbert-Schmidt inner product. For the symplectic transformation, use the change of variables $B=\mathcal U_S[A]$ in the variational formula: \begin{align*} \mathsf a_{\mathcal U_S[\rho],0}\bigl(\mathcal U_S[\xi]\bigr) &= \sup_{A\in\mathcal B_0}\left\{2\Tr\!\left(\mathcal U_S[A]\mathcal U_S[\xi]\right)-Q_{\mathcal U_S[\rho],0}\bigl(\mathcal U_S[A]\bigr)\right\}\\ &\geq \sup_{A\in\mathcal B_0}\left\{2\Tr(A\xi)-\|S\|^2Q_{\rho,0}(A)\right\}\\ &= \|S\|^{-2}\mathsf a_{\rho,0}(\xi). \end{align*} The upper bound follows by applying the lower bound to $S^{-1}$ and using $\|S^{-1}\|=\|S\|$.
\end{proof}

To define the quantum optimal transport distance we need to resolve one additional subtlety in the definition of~\eqref{eq:dual-classical-def}, which is that we want to take an infimum over \textit{continuous} curves $\rho_t$. We do this by simply taking the lower-semicontinuous envelope over the \textit{differentiable} curves.
\begin{definition}[Quantum optimal transport distance]
For a differentiable curve $\rho_t\in C^1([0,1]; \mcal{D})$, define
\[
\msf{A}_\beta^0(\rho) := \int_0^1 \msf{a}_{\rho_t,\beta}(\dot{\rho}_t)\diff t.
\]
Then define $\msf{A}_\beta$ to be the lower-semicontinuous envelope of $\msf{A}_\beta^0$ under trace-norm convergence of curves $\rho_t$.  That is, for $\rho\in C^0([0,1];\mcal{D})$ we define
\[
\msf{A}_\beta(\rho) := \inf\Big\{ \liminf_{n\to\infty} \msf{A}_\beta^0(\tilde{\rho}_n) \mid \tilde{\rho}_n\in C^1([0,1];\mcal{D}) \text{ and } \lim_{n\to\infty} \|\tilde{\rho}_n(s)-\rho(s)\|_{\Tr} \to 0 \text{ for all }s\in [0,1]\Big\}.
\]
The quantum optimal transport distance is then given by
\begin{align}
\label{eq:CM-raw-def}
\widetilde d_{\beta,2}(\rho_0,\rho_1)^2 &:= \inf\left\{\mathsf A_\beta(\rho):\rho\in C^0([0,1];\mathcal D),\ \rho(0)=\rho_0,\ \rho(1)=\rho_1\right\} \\
\label{eq:CM-def}
d_{\mathrm{CM},2}^{(\beta)}(\rho_0,\rho_1) &:= \sqrt{\kappa_{\beta,\hbar}}\,\widetilde d_{\beta,2}(\rho_0,\rho_1)\,.
\end{align}
\end{definition}
\noindent At $\beta=0$, we abbreviate
\begin{equation}
d_{\mathrm{CM},2} := d_{\mathrm{CM},2}^{(0)}.
\end{equation}

Now we state an important convexity result for the distance.  The proof follows the same basic argument as the analogous result, Theorem 9.7 of Carlen and Maas~\cite{carlen2020non}.  Indeed, our task is made much easier because we have defined the distance in the dual formulation, which is simpler to work with (whereas the difficult part of Theorem 9.7 is to establish that the dual formulation is equivalent to the primal formulation).
\begin{theorem}
\label{thm:distance-convexity}
Let $\theta\in(0,1)$, and $\rho_j,\nu_j\in\mcal{D}$ for $j\in\{0,1\}$.  Set
$\rho_\theta := (1-\theta)\rho_0 + \theta \rho_1$ and likewise define $\nu_\theta$.  Then
\begin{equation}
\label{eq:distance-convexity}
\widetilde d_{\beta,2}(\rho_\theta,\nu_\theta)^2 \leq (1-\theta)\,\widetilde d_{\beta,2}(\rho_0,\nu_0)^2
+ \theta \, \widetilde d_{\beta,2}(\rho_1,\nu_1)^2.
\end{equation}
In general, for mixtures of $n$ states of the form $\tilde{\rho} = \sum_{j=1}^n \theta_j\rho_j$ and
$\tilde{\nu} = \sum_{j=1}^n \theta_j \nu_j$
with $0\leq \theta_j$ and $\sum \theta_j=1$,
we have
\begin{equation}
\label{eq:multipoint-convexity}
\widetilde d_{\beta,2}(\tilde{\rho},\tilde{\nu})^2 \leq \sum_{j=1}^n \theta_j\widetilde d_{\beta,2}(\rho_j,\nu_j)^2.
\end{equation}
\end{theorem}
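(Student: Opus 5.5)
The plan follows the standard argument (cf.\ Carlen--Maas, Theorem 9.7): mix near-optimal curves, and use joint convexity of the action, which comes from the concavity of $Q$ in Lemma~\ref{lem:Q-convexity}.

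\textbf{Step 1 (pointwise convexity of the action).} Fix $\rho_0,\rho_1\in\mathcal D$, tangents $\xi_0,\xi_1\in\mathcal M_0$, and set $\rho_\theta=(1-\theta)\rho_0+\theta\rho_1$, $\xi_\theta=(1-\theta)\xi_0+\theta\xi_1$. For any $A\in\mathcal B_0$, Lemma~\ref{lem:Q-convexity} gives
\begin{equation*}
2\Tr(A\xi_\theta)-Q_{\rho_\theta,\beta}(A)\leq (1-\theta)\bigl[2\Tr(A\xi_0)-Q_{\rho_0,\beta}(A)\bigr]+\theta\bigl[2\Tr(A\xi_1)-Q_{\rho_1,\beta}(A)\bigr].
\end{equation*}
Taking the supremum over $A$ on the left, and bounding each bracket on the right by its own supremum, yields
\begin{equation*}
\mathsf a_{\rho_\theta,\beta}(\xi_\theta)\leq(1-\theta)\mathsf a_{\rho_0,\beta}(\xi_0)+\theta\,\mathsf a_{\rho_1,\beta}(\xi_1).
\end{equation*}
Lemma~\ref{lem:Q-convexity} is stated on $\mathcal D_+$. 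I would first extend concavity to all of $\mathcal D$. The most direct route is that the operator-mean representation $\Theta_{\rho,\omega}=\Lambda(e^{\omega/2}L_\rho,e^{-\omega/2}R_\rho)$ and the concavity \eqref{eq:Theta-concavity} make sense for $\rho\geq0$. Alternatively, approximate by $(1-\epsilon)\rho+\epsilon\sigma_\beta$ and pass to the limit using monotonicity in $\epsilon$.

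\textbf{Step 2 (mixing curves).} Fix $\varepsilon>0$. For $j=0,1$, choose $\rho^{(j)}\in C^0([0,1];\mathcal D)$ joining $\rho_j$ to $\nu_j$ with $\mathsf A_\beta(\rho^{(j)})\leq\widetilde d_{\beta,2}(\rho_j,\nu_j)^2+\varepsilon$. By the definition of the envelope, there are $C^1$ curves $\tilde\rho^{(j)}_n$ converging pointwise in trace norm to $\rho^{(j)}$ with $\liminf_n\mathsf A^0_\beta(\tilde\rho_n^{(j)})\leq\mathsf A_\beta(\rho^{(j)})+\varepsilon$. Pass to subsequences, with a common index $n$, along which both actions converge to their liminfs. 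Then set $\eta_n(t):=(1-\theta)\tilde\rho^{(0)}_n(t)+\theta\tilde\rho^{(1)}_n(t)$. This is $C^1$ with $\dot\eta_n=(1-\theta)\dot{\tilde\rho}^{(0)}_n+\theta\dot{\tilde\rho}^{(1)}_n$, and it converges pointwise in trace norm to $\eta(t):=(1-\theta)\rho^{(0)}(t)+\theta\rho^{(1)}(t)$. That limit curve is continuous and runs from $\rho_\theta$ to $\nu_\theta$. Step 1, integrated in $t$, gives
\begin{equation*}
\mathsf A^0_\beta(\eta_n)\leq(1-\theta)\mathsf A^0_\beta(\tilde\rho_n^{(0)})+\theta\,\mathsf A^0_\beta(\tilde\rho_n^{(1)}).
\end{equation*}
Taking $\liminf$ and using the definition of $\mathsf A_\beta$ as an infimum over approximating sequences, we get $\mathsf A_\beta(\eta)\leq(1-\theta)\widetilde d(\rho_0,\nu_0)^2+\theta\widetilde d(\rho_1,\nu_1)^2+2\varepsilon$. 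Since $\widetilde d_{\beta,2}(\rho_\theta,\nu_\theta)^2\leq\mathsf A_\beta(\eta)$, letting $\varepsilon\to0$ proves \eqref{eq:distance-convexity}. Measurability of $t\mapsto\mathsf a$ is not an issue, since it is a supremum of continuous functions and hence lower semicontinuous.

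\textbf{Step 3 (multipoint version).} For \eqref{eq:multipoint-convexity}, either induct on $n$ by writing $\tilde\rho=\theta_n\rho_n+(1-\theta_n)\sum_{j<n}\frac{\theta_j}{1-\theta_n}\rho_j$ and applying the two-point case, or repeat Steps 1--2 directly with $n$ curves, since Lemma~\ref{lem:Q-convexity} extends to finite convex combinations.

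\textbf{Main obstacle.} The delicate point is the extension of concavity of $Q_{\rho,\beta}(A)$ from $\mathcal D_+$ to all of $\mathcal D$, including non-faithful states along the curves, in infinite dimension. I would handle it via joint concavity of the Kubo--Ando mean for positive (not necessarily invertible) operators. This uses monotone continuity of $\Lambda(A+\epsilon,B+\epsilon)\downarrow\Lambda(A,B)$, applied to the positive operators $L_\rho,R_\rho$ on Hilbert--Schmidt space. The bounded commutators $\partial_\ell A$ make each pairing finite.
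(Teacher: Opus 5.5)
Your proposal is correct and follows essentially the same route as the paper: concavity of $Q_{\rho,\beta}(A)$ in $\rho$ gives convexity of the action $\mathsf a_{\rho,\beta}(\xi)$, mixing two near-optimal curves gives the two-point bound, and induction gives the $n$-point bound. You are more careful than the paper about the lower-semicontinuous envelope and about extending concavity to non-faithful states. For the $\sigma_\beta$-regularization alternative, note that what you actually need is continuity in $\epsilon$, not monotonicity, which need not hold.
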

\begin{proof}
This follows from the following statement of convexity of the action $\msf{a}_{\rho,\beta}(\xi)$: for
$\xi_\theta := (1-\theta)\xi_0 + \theta\xi_1$ and $\rho_\theta$ as above, we have
\begin{equation}
\label{eq:action-convexity}
\msf{a}_{\rho_\theta,\beta}(\xi_\theta) \leq (1-\theta)\,\msf{a}_{\rho_0,\beta}(\xi_0) + \theta \, \msf{a}_{\rho_1,\beta}(\xi_1).
\end{equation}
This in turn follows from the concavity of $Q_{\rho,\beta}(A)$ in $\rho$. To deduce~\eqref{eq:distance-convexity}, apply~\eqref{eq:action-convexity} to the convex combination $(1-\theta)\gamma_0 + \theta\gamma_1$ of two almost-minimizing curves $\gamma_j$ connecting $\rho_j$ to $\nu_j$. The bound~\eqref{eq:multipoint-convexity} follows from~\eqref{eq:distance-convexity} by induction on $n$.
\end{proof}

Several of the axioms of quantum optimal transport distance can  now be verified for the distance $\widetilde d_{\beta,2}$ by appropriately integrating properties of the action.
\begin{corollary}
For every $\beta$, the normalized distance $d_{\mathrm{CM},2}^{(\beta)}$ satisfies translation invariance and double convexity. At $\beta=0$, the distance $d_{\mathrm{CM},2}:=d_{\mathrm{CM},2}^{(0)}$ also satisfies
\begin{equation}
\|S\|^{-1}d_{\mathrm{CM},2}(\rho,\nu) \leq d_{\mathrm{CM},2}\bigl(\mathcal U_S[\rho],\mathcal U_S[\nu]\bigr) \leq \|S\|d_{\mathrm{CM},2}(\rho,\nu).
\end{equation}
\end{corollary}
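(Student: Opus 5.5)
The plan is to transfer the pointwise-in-time properties of the action $\mathsf a_{\rho,\beta}$ to the path functional $\msf A_\beta$, and then to the infimum over paths. The normalization constant $\sqrt{\kappa_{\beta,\hbar}}$ does not depend on the states, so it suffices to prove each property for $\widetilde d_{\beta,2}$.

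\emph{Translation invariance.} Given a $C^1$ curve $\rho_t$ from $\rho_0$ to $\rho_1$, I will consider the translated curve $\mathcal T_\alpha[\rho_t]$. It is again $C^1$, since $\mathcal T_\alpha$ is a trace-norm isometry. Its velocity is $\mathcal T_\alpha[\dot\rho_t]$, so by Eq.~\eqref{eq:action-invariance} it satisfies $\msf A^0_\beta(\mathcal T_\alpha[\rho_\cdot]) = \msf A^0_\beta(\rho_\cdot)$. For the lower-semicontinuous envelope, I note that $\mathcal T_\alpha$ maps approximating sequences converging pointwise in trace norm to approximating sequences of the translated curve, and it is invertible with inverse $\mathcal T_{-\alpha}$. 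Hence $\msf A_\beta(\mathcal T_\alpha[\rho_\cdot]) = \msf A_\beta(\rho_\cdot)$. Taking infima over curves with fixed endpoints gives $\widetilde d_{\beta,2}(\mathcal T_\alpha\rho_0,\mathcal T_\alpha\rho_1)=\widetilde d_{\beta,2}(\rho_0,\rho_1)$, with equality because the argument applies in both directions.

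\emph{Symplectic bound.} I will argue identically using $\mathcal U_S$, which is also a trace-norm isometry with inverse $\mathcal U_{S^{-1}}$. Eq.~\eqref{eq:action-symplectic} gives $\msf A^0_0(\mathcal U_S[\rho_\cdot]) \le \|S\|^2 \msf A^0_0(\rho_\cdot)$, and this passes to the envelope. Taking square roots and infimizing yields the upper bound $d_{\mathrm{CM},2}(\mathcal U_S\rho,\mathcal U_S\nu)\le\|S\|\,d_{\mathrm{CM},2}(\rho,\nu)$. The lower bound follows by applying the upper bound to $S^{-1}$ and using $\|S^{-1}\|=\|S\|$ for symplectic $S$.

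\emph{Double convexity.} This is the main obstacle, since Theorem~\ref{thm:distance-convexity} only covers finite mixtures that share a common weight vector. For a general coupling $\Gamma$ of $\mu,\nu$, I will first handle finitely supported $\mu,\nu$. Writing $\rho=\sum_{i,k}\Gamma_{ik}\rho_{\theta_i}$ and $\sigma=\sum_{i,k}\Gamma_{ik}\sigma_{\phi_k}$ puts both states in the form of Eq.~\eqref{eq:multipoint-convexity} with common weights $\Gamma_{ik}$, which gives the axiom directly.

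For general $\Gamma$, I will approximate it by finitely supported couplings $\Gamma_n$. Concretely, I will partition parameter space into cells, replace each cell-mixture by a single averaged state, and apply the discrete inequality together with convexity on each cell. This yields approximating states $\rho^{(n)}\to\rho$ and $\sigma^{(n)}\to\sigma$ in trace norm, whose right-hand sides converge to, or are bounded by, $\int d^p\,d\Gamma$. To pass to the limit I then need lower semicontinuity of $\widetilde d_{\beta,2}$ under trace-norm convergence of the endpoints. I would try to obtain this from the lower-semicontinuous-envelope definition of $\msf A_\beta$: splice short connecting curves onto near-optimal curves, or rescale time and use a diagonal argument. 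If that fails, I would instead restrict the axiom to measurable decompositions admitting finite-partition approximations. I expect this semicontinuity/approximation step to require the most care.
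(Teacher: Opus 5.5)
Your treatment of translation invariance and of the symplectic bounds is the paper's argument written out in full. You integrate \eqref{eq:action-invariance} and \eqref{eq:action-symplectic} along $C^1$ curves, pass through the envelope using that $\mathcal T_\alpha$ and $\mathcal U_S$ act as trace-norm isometric bijections on curves, and infimize. Your reduction of a finitely supported coupling to \eqref{eq:multipoint-convexity}, indexing both mixtures over pairs $(i,k)$ with common weights $\Gamma_{ik}$, is what the paper's one-line \emph{approximating the integrals by Riemann sums} presupposes.

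The paper says nothing more about general $\Gamma$, so you are not missing an idea it contains. Still, your proposal does not close this step, and the step cannot be dropped: Proposition~\ref{prp:ub} applies double convexity to continuous mixtures, namely the Gaussian average of translates and the coherent-state decomposition with an arbitrary Husimi coupling. Two of your remedies would fail.
\begin{itemize}
\item Bounding each cell by \emph{convexity on each cell} is circular, since that is the continuous inequality restricted to a cell.
\item Splicing connecting curves amounts to the triangle inequality through $\rho^{(n)},\sigma^{(n)}$, so it needs $\widetilde d_{\beta,2}(\rho^{(n)},\rho)\to0$. Trace-norm convergence does not give this. For example, $\rho^{(n)}=(1-\eps_n)\rho+\eps_n\mathcal T_{a_n}[\rho]$ has $\|\rho^{(n)}-\rho\|_1\le2\eps_n$, while the Husimi data-processing bound gives $\widetilde d_{\beta,2}(\rho^{(n)},\rho)\gtrsim\sqrt{\eps_n}\,|a_n|$, which can diverge.
\end{itemize}

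Here is a route that closes the step.
\begin{itemize}
\item Discretize the joint parameter space of $(\theta,\phi)$, not each factor separately. Take cells $C_m$ on which $(\theta,\phi)\mapsto\bigl(\rho_\theta,\sigma_\phi,\widetilde d_{\beta,2}(\rho_\theta,\sigma_\phi)^2\bigr)$ oscillates by less than $\eps$, and pick $(\theta_m,\phi_m)\in C_m$.
\item Set $\rho^{(n)}=\sum_m\Gamma(C_m)\rho_{\theta_m}$ and $\sigma^{(n)}=\sum_m\Gamma(C_m)\sigma_{\phi_m}$, over finitely many cells and renormalized. These share weights automatically and converge in trace norm. Their right-hand sides are at most $\int\widetilde d_{\beta,2}^2\,\diff\Gamma+o(1)$ even for merely measurable families, which product cells with separately chosen representatives do not guarantee.
\item The genuinely missing ingredient is then lower semicontinuity of $\widetilde d_{\beta,2}$ under trace-norm convergence of the endpoints. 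This needs compactness of bounded-action curves.
\item Testing against $sF$ as in Proposition~\ref{prp:translation-lb} gives $|\Tr(F(\gamma_t-\gamma_s))|^2\le\kappa_{\beta,\hbar}\sum_\ell\|[V_\ell,F]\|_{\rm op}^2\,\mathsf A_\beta(\gamma)\,|t-s|$. This yields equi-H\"older bounds against number-basis matrix units.
\item Taking $F=h(\hat N)$ with $h$ bounded and slowly increasing makes $\|[V_\ell,h(\hat N)]\|_{\rm op}$ small, which gives uniform tightness.
\item Weak convergence of states to a state then upgrades to trace-norm convergence at each $t$. A diagonal argument over the $C^1$ approximants produces a curve from $\rho$ to $\sigma$ with action at most the $\liminf$.
\end{itemize}

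Alternatively, imitate the proof of Theorem~\ref{thm:distance-convexity} directly. Average near-optimal curves $\gamma^{\theta,\phi}$ against $\Gamma$ and use the integral form of \eqref{eq:action-convexity}. This avoids endpoint semicontinuity, at the cost of a measurable selection of curves and some care with the envelope.
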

\begin{proof}
Axiom~\ref{ax:tr-invar} follows from integrating~\eqref{eq:action-invariance}.  
The convexity axiom~\ref{ax:conv} follows from Theorem~\ref{thm:distance-convexity} after approximating the integrals by Riemann sums. Finally, the bi-Lipschitz property of symplectic maps follows from integrating~\eqref{eq:action-symplectic}.
\end{proof}

The remaining axioms~\ref{ax:tr-id} and~\ref{ax:dp} require more involved verifications, each of which occupies its own section. After these are established, the verification of admissibility of $d_{\mathrm{CM},2}^{(\beta)}$ is complete.

\section{Carlen--Maas quantum optimal transport distance for translations}

In this section we verify the translation identity of Axiom~\ref{ax:tr-id}. We begin with two identities for linear observables. For $\alpha = (q,p) \in \mathbb R^{2D}$, define
\begin{equation}
L_\alpha := \alpha \cdot \hat r = q \cdot \hat x + p \cdot \hat p.
\end{equation}
Recall that $\hat X_\alpha = \alpha^\intercal J\hat r = p \cdot \hat x - q \cdot \hat p$ generates the translation channel $\mathcal T_{t\alpha}$.
\begin{lemma}
\label{lem:linear-translation-potential}
For every state $\rho$ and every admissible observable $A$,
\begin{equation}
Q_{\rho,\beta}(L_\alpha) = \kappa_{\beta,\hbar}|\alpha|^2
\end{equation}
and
\begin{equation}
\Gamma_{\rho,\beta}(A,L_\alpha) = \kappa_{\beta,\hbar}\frac{i}{\hbar}\Tr\!\left(A[\hat X_\alpha,\rho]\right).
\end{equation}
\end{lemma}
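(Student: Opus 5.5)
The plan is a direct computation. It rests on the fact that every noncommutative gradient of the linear observable $L_\alpha$ is a scalar multiple of the identity. \emph{Step 1.} From $a_j=(2\hbar)^{-1/2}(\hat x_j+\rmi\hat p_j)$ and $[\hat x_j,\hat p_k]=\rmi\hbar\delta_{jk}$ I expect
\[
[\hat a_j,L_\alpha]=\sqrt{\hbar/2}\,(q_j+\rmi p_j)\One,\qquad [\hat a_j^\dagger,L_\alpha]=-\sqrt{\hbar/2}\,(q_j-\rmi p_j)\One .
\]
Hence $\partial_\ell L_\alpha=c_\ell\One$, with $c_{j,-}=2^{-1/2}(q_j+\rmi p_j)$ and $c_{j,+}=-2^{-1/2}(q_j-\rmi p_j)$. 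In particular $\sum_\ell|c_\ell|^2=|\alpha|^2$. These commutators are bounded, so $L_\alpha$ is an admissible element of $\mcal B_0$.

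\emph{Step 2: the quadratic form.} By Eq.~\eqref{eq:Theta-Id}, $\Theta_{\rho,\omega_\ell}(c_\ell\One)=c_\ell\kappa_{\omega_\ell}\rho$. Since $\kappa_\omega=\tfrac{2}{\omega}\sinh(\omega/2)$ is even in $\omega$, we have $\kappa_{\omega_{j,\pm}}=\kappa_{\pm2\beta\hbar}=\kappa_{\beta,\hbar}$ for both signs. Therefore
\[
Q_{\rho,\beta}(L_\alpha)=\sum_\ell \bar c_\ell c_\ell\,\kappa_{\beta,\hbar}\Tr\rho=\kappa_{\beta,\hbar}|\alpha|^2 .
\]

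\emph{Step 3: the polarization.} For self-adjoint $A$,
\[
\Gamma_{\rho,\beta}(A,L_\alpha)=\kappa_{\beta,\hbar}\sum_\ell c_\ell\Tr\bigl([V_\ell,A]^\dagger\rho\bigr),\qquad [V_\ell,A]^\dagger=-[V_\ell^\dagger,A],
\]
so $\Gamma_{\rho,\beta}(A,L_\alpha)=-\kappa_{\beta,\hbar}\Tr\bigl([\sum_\ell c_\ell V_\ell^\dagger,A]\rho\bigr)$. The key algebraic identity to verify is
\[
\sum_\ell c_\ell V_\ell^\dagger=\frac{1}{\sqrt{2\hbar}}\sum_j\bigl[(q_j+\rmi p_j)\hat a_j^\dagger-(q_j-\rmi p_j)\hat a_j\bigr]=\frac{\rmi}{\hbar}\hat X_\alpha .
\]
This follows by expanding the ladder operators: the bracket becomes $(2\hbar)^{-1/2}\,2\rmi(p_j\hat x_j-q_j\hat p_j)$. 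Finally, cyclicity of the trace gives $\Tr([\hat X_\alpha,A]\rho)=-\Tr(A[\hat X_\alpha,\rho])$, and together these yield $\Gamma_{\rho,\beta}(A,L_\alpha)=\kappa_{\beta,\hbar}\tfrac{\rmi}{\hbar}\Tr(A[\hat X_\alpha,\rho])$.

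\emph{Main obstacle.} The algebra is routine. The delicate point is justifying it with unbounded operators. The commutators are first defined on $\mathcal S(\mathbb R^D)$, and the cyclicity step $\Tr([\hat X_\alpha,A]\rho)=-\Tr(A[\hat X_\alpha,\rho])$ needs $\hat X_\alpha\rho$ and $\rho\hat X_\alpha$ to make sense as trace-class operators. I would handle this in three stages:
\begin{itemize}[label={}]
\item First, prove the identity for $\rho$ of finite rank with range in the Schwartz core, such as finite combinations of number states, where every manipulation is legitimate.
\item Next, extend to general $\rho$ with finite first Husimi moment by the finite-energy cutoffs $\Pi_N\rho\Pi_N$.
\item Finally, use boundedness of $[V_\ell,A]$ to pass to the limit on the left-hand side. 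The right-hand side is interpreted as the (weak) derivative $\tfrac{d}{dt}\Tr(A\,\mathcal T_{t\alpha}[\rho])|_{t=0}$, which is continuous in $\rho$ because $\|A\|_{\rm Lip}<\infty$.
\end{itemize}
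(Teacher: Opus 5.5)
Your proposal is correct and is essentially the paper's proof. Both use the scalar gradients $[V_\ell,L_\alpha]=c_\ell\One$, the identity $\Theta_{\rho,\omega}(\One)=\kappa_\omega\rho$ with $\kappa_{\pm 2\beta\hbar}=\kappa_{\beta,\hbar}$, and $\sum_\ell c_\ell V_\ell^\dagger=\frac{i}{\hbar}\hat X_\alpha$ followed by cyclicity, which the paper compresses into $\Tr([V_\ell,A]^\dagger\rho)=\Tr(A[V_\ell^\dagger,\rho])$.

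The paper omits the domain discussion you add, and yours can be simplified. Since $\Theta_{\rho,\omega}(c_\ell\One)=c_\ell\kappa_\omega\rho$ holds for every state, and $[\hat X_\alpha,A]$ is a combination of the bounded commutators $[V_\ell,A]$, you get $\Gamma_{\rho,\beta}(A,L_\alpha)=-\kappa_{\beta,\hbar}\frac{i}{\hbar}\Tr([\hat X_\alpha,A]\rho)$ directly for all $\rho\in\mathcal D$. This is the natural meaning of the right-hand side, so neither the cutoffs nor the finite-first-moment restriction is needed.
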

\begin{proof}
For each $j = 1,\ldots,D$, a direct computation gives
\begin{equation}
[V_{j,-},L_\alpha] = \frac{q_j + ip_j}{\sqrt{2}}\,\mathds 1,
\qquad
[V_{j,+},L_\alpha] = \frac{-q_j + ip_j}{\sqrt{2}}\,\mathds 1.
\end{equation}
Since $\kappa_{\omega_{j,\pm}} = \kappa_{\beta,\hbar}$ and $\Theta_{\rho,\omega}(\mathds 1) = \kappa_\omega\rho$, it follows that
\begin{align}
Q_{\rho,\beta}(L_\alpha)
&= \kappa_{\beta,\hbar}\sum_{j = 1}^D\left(\frac{|q_j + ip_j|^2}{2} + \frac{|-q_j + ip_j|^2}{2}\right)\\
&= \kappa_{\beta,\hbar}|\alpha|^2.
\end{align}
Writing $c_\ell := [V_\ell,L_\alpha]$, we also have
\begin{equation}
\sum_{\ell \in \mathcal I}c_\ell V_\ell^\dagger = \frac{i}{\hbar}\hat X_\alpha.
\end{equation}
Therefore,
\begin{align}
\Gamma_{\rho,\beta}(A,L_\alpha)
&= \kappa_{\beta,\hbar}\sum_{\ell \in \mathcal I}c_\ell\Tr\!\left([V_\ell,A]^\dagger\rho\right)\\
&= \kappa_{\beta,\hbar}\Tr\!\left(A\left[\sum_{\ell \in \mathcal I}c_\ell V_\ell^\dagger,\rho\right]\right)\\
&= \kappa_{\beta,\hbar}\frac{i}{\hbar}\Tr\!\left(A[\hat X_\alpha,\rho]\right).
\end{align}
\end{proof}
To prove the translation-distance identity, we first establish an upper bound for the distance between a state and its translate.
\begin{proposition}
\label{prp:translation-ub}
For every density matrix $\rho \in \mathcal D$ and every $\alpha \in \mathbb R^{2D}$,
\begin{equation}
\widetilde d_{\beta,2}\bigl(\rho,\mathcal T_\alpha[\rho]\bigr) \leq \kappa_{\beta,\hbar}^{-1/2}|\alpha|.
\end{equation}
\end{proposition}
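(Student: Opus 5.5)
Take the curve $\rho_t := \mathcal T_{t\alpha}[\rho]$ for $t\in[0,1]$ as a test path from $\rho$ to $\mathcal T_\alpha[\rho]$, and bound its action pointwise in $t$ using the linear potential $L_\alpha$ from Lemma~\ref{lem:linear-translation-potential}.

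Formally, $\dot\rho_t = \frac{i}{\hbar}[\hat X_\alpha,\rho_t]$. For any admissible $A\in\mathcal B_0$, the second identity of Lemma~\ref{lem:linear-translation-potential} gives
\begin{equation*}
2\Tr(A\dot\rho_t) = \frac{2i}{\hbar}\Tr\!\left(A[\hat X_\alpha,\rho_t]\right) = 2\kappa_{\beta,\hbar}^{-1}\Gamma_{\rho_t,\beta}(A,L_\alpha).
\end{equation*}
By the Cauchy--Schwarz inequality~\eqref{eq:Gamma-CS} and the first identity $Q_{\rho_t,\beta}(L_\alpha)=\kappa_{\beta,\hbar}|\alpha|^2$, this is bounded by $2\kappa_{\beta,\hbar}^{-1/2}|\alpha|\,Q_{\rho_t,\beta}(A)^{1/2}$. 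Hence
\begin{equation*}
2\Tr(A\dot\rho_t)-Q_{\rho_t,\beta}(A) \leq 2\kappa_{\beta,\hbar}^{-1/2}|\alpha|\,Q^{1/2}-Q \leq \kappa_{\beta,\hbar}^{-1}|\alpha|^2,
\end{equation*}
where $Q=Q_{\rho_t,\beta}(A)$. Taking the supremum over $A$ yields $\mathsf a_{\rho_t,\beta}(\dot\rho_t)\leq \kappa_{\beta,\hbar}^{-1}|\alpha|^2$ for every $t$. Integrating over $t\in[0,1]$ gives $\mathsf A^0_\beta\leq\kappa_{\beta,\hbar}^{-1}|\alpha|^2$, and taking square roots gives the claimed bound.

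The main obstacle is regularity. For a general $\rho\in\mathcal D$, the curve $t\mapsto\mathcal T_{t\alpha}[\rho]$ is continuous in trace norm but need not be $C^1$, since $[\hat X_\alpha,\rho]$ may fail to be trace class. This is exactly the situation the lower-semicontinuous envelope $\mathsf A_\beta$ is designed for. I would approximate $\rho$ by $\rho^{(n)}$ that are smooth in the relevant sense, for example finite-energy truncations $\Pi_N\rho\Pi_N/\Tr(\Pi_N\rho)$, or Gaussian smearings of translations. For these, $t\mapsto\mathcal T_{t\alpha}[\rho^{(n)}]$ is $C^1$ in trace norm, and the computation above holds rigorously. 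I would also check that the trace-duality manipulation in Lemma~\ref{lem:linear-translation-potential} is justified for $A\in\mathcal B_0$, which follows from moving the commutator onto $A$ and using boundedness of $[V_\ell,A]$.

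The approximating curves $\mathcal T_{t\alpha}[\rho^{(n)}]$ converge to $\mathcal T_{t\alpha}[\rho]$ pointwise in trace norm, because translations are trace-norm isometries. Their actions are uniformly bounded by $\kappa_{\beta,\hbar}^{-1}|\alpha|^2$, so $\mathsf A_\beta$ of the limit curve obeys the same bound. This curve has the exact endpoints $\rho$ and $\mathcal T_\alpha[\rho]$, so it is admissible in~\eqref{eq:CM-raw-def}, and the proposition follows.
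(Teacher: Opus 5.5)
Your proposal is correct and follows the paper's proof essentially step for step. The paper also bounds the action of the translated curve pointwise using Lemma~\ref{lem:linear-translation-potential}, the Cauchy--Schwarz inequality~\eqref{eq:Gamma-CS} and completing the square; it regularizes with the same finite-energy truncations $c_N^{-1}\Pi_N\rho\Pi_N$, and passes to the lower-semicontinuous envelope via $\|\mathcal T_{t\alpha}[\rho_N]-\mathcal T_{t\alpha}[\rho]\|_1=\|\rho_N-\rho\|_1\to0$.
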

\begin{proof}
For $N \in \mathbb N$, let
\begin{equation}
\Pi_N := \sum_{|\mathbf n| \leq N}|\mathbf n\rangle\langle\mathbf n|,
\qquad
c_N := \Tr(\Pi_N\rho),
\qquad
\rho_N := c_N^{-1}\Pi_N\rho\Pi_N,
\end{equation}
where $N$ is sufficiently large that $c_N > 0$. Consider the translated curve
\begin{equation}
\rho_N(t) := \mathcal T_{t\alpha}[\rho_N],
\qquad
0 \leq t \leq 1.
\end{equation}
This is a differentiable finite-rank curve satisfying
\begin{equation}
\dot\rho_N(t) = \frac{i}{\hbar}[\hat X_\alpha,\rho_N(t)].
\end{equation}
Lemma~\ref{lem:linear-translation-potential} therefore gives
\begin{equation}
\Tr\!\left(A\dot\rho_N(t)\right) = \kappa_{\beta,\hbar}^{-1}\Gamma_{\rho_N(t),\beta}(A,L_\alpha).
\end{equation}
Using the Cauchy--Schwarz inequality for $\Gamma_{\rho,\beta}$, we obtain
\begin{align}
2\Tr\!\left(A\dot\rho_N(t)\right) - Q_{\rho_N(t),\beta}(A)
&\leq 2\kappa_{\beta,\hbar}^{-1}\left|\Gamma_{\rho_N(t),\beta}(A,L_\alpha)\right| - Q_{\rho_N(t),\beta}(A)\\
&\leq 2\kappa_{\beta,\hbar}^{-1}Q_{\rho_N(t),\beta}(A)^{1/2}Q_{\rho_N(t),\beta}(L_\alpha)^{1/2} - Q_{\rho_N(t),\beta}(A)\\
&\leq \kappa_{\beta,\hbar}^{-2}Q_{\rho_N(t),\beta}(L_\alpha)\\
&= \kappa_{\beta,\hbar}^{-1}|\alpha|^2.
\end{align}
Taking the supremum over $A \in \mathcal B_0$ gives
\begin{equation}
\mathsf a_{\rho_N(t),\beta}\bigl(\dot\rho_N(t)\bigr) \leq \kappa_{\beta,\hbar}^{-1}|\alpha|^2,
\end{equation}
and hence
\begin{equation}
\mathsf A_\beta^0(\rho_N) \leq \kappa_{\beta,\hbar}^{-1}|\alpha|^2.
\end{equation}
Since $\rho_N \to \rho$ in trace norm and
\begin{equation}
\sup_{0 \leq t \leq 1}\left\|\mathcal T_{t\alpha}[\rho_N] - \mathcal T_{t\alpha}[\rho]\right\|_1 = \|\rho_N - \rho\|_1 \longrightarrow 0,
\end{equation}
by the definition of $\mathsf A_\beta$ as the lower-semicontinuous envelope,
\begin{equation}
\mathsf A_\beta\bigl(t \mapsto \mathcal T_{t\alpha}[\rho]\bigr) \leq \kappa_{\beta,\hbar}^{-1}|\alpha|^2.
\end{equation}
Taking the infimum over curves proves the result.
\end{proof}
The matching lower bound follows by testing the dual action against bounded regularizations of $L_\alpha$.
\begin{proposition}
\label{prp:translation-lb}
For every density matrix $\rho \in \mathcal D$ and every $\alpha \in \mathbb R^{2D}$,
\begin{equation}
\widetilde d_{\beta,2}\bigl(\rho,\mathcal T_\alpha[\rho]\bigr) = \kappa_{\beta,\hbar}^{-1/2}|\alpha|.
\end{equation}
\end{proposition}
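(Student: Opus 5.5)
The plan is to combine the upper bound of Proposition~\ref{prp:translation-ub} with a matching lower bound. For the lower bound I will use the dual action, tested against a bounded function of the linear observable $L_\alpha=\alpha\cdot\hat r$. The basic inequality comes from choosing $A=cL_\alpha$ in Eq.~\eqref{eq:quantum-dual-action}. Formally, by Lemma~\ref{lem:linear-translation-potential}, this gives
\begin{equation*}
\mathsf a_{\rho_t,\beta}(\dot\rho_t)\geq 2c\,\frac{\diff}{\diff t}\Tr(L_\alpha\rho_t)-c^2\kappa_{\beta,\hbar}|\alpha|^2 .
\end{equation*}
Integrating over $t\in[0,1]$ and using $\Tr(L_\alpha\mathcal T_\alpha[\rho])-\Tr(L_\alpha\rho)=|\alpha|^2$ (from $\mathcal T_\alpha^*[\hat r]=\hat r+\alpha$), I get $\mathsf A\geq 2c|\alpha|^2-c^2\kappa_{\beta,\hbar}|\alpha|^2$. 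Taking $c=\kappa_{\beta,\hbar}^{-1}$ yields $\kappa_{\beta,\hbar}^{-1}|\alpha|^2$.

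To make this rigorous without assuming first moments, I will replace $L_\alpha$ by $A_R:=R\,f(L_\alpha/R)$. Here $f$ is smooth and bounded, with $f(x)=x$ near $0$ and $0\leq f'\leq1$. Since $[V_\ell,L_\alpha]=c_\ell\mathds 1$ is scalar, $V_\ell$ acts as a derivation shifting $L_\alpha$. This can be checked through the Weyl relations, since $e^{isL_\alpha}$ conjugates $V_\ell$ to $V_\ell+sc_\ell$ up to the factor $i$. It gives $[V_\ell,g(L_\alpha)]=c_\ell\,g'(L_\alpha)$ for nice $g$. Hence $\partial_\ell A_R=c_\ell f'(L_\alpha/R)$, which is bounded, so $A_R\in\mathcal B_0$ after the usual core considerations.

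Next I bound the energy of $A_R$. Since $\Theta_{\rho,\omega}$ is a positive map with $\Theta_{\rho,\omega}(\mathds 1)=\kappa_\omega\rho$, for self-adjoint $B$ with $\|B\|_{\rm op}\leq1$ we have $\Tr(B\Theta_{\rho,\omega}(B))\leq\kappa_\omega$. I will check this from the spectral formula for $Q$, using $\lambda_m^{1-s}\lambda_n^s\leq(1-s)\lambda_m+s\lambda_n$ together with $\sum_n|B_{mn}|^2\leq1$. Therefore $Q_{\rho,\beta}(A_R)\leq\kappa_{\beta,\hbar}\sum_\ell|c_\ell|^2=\kappa_{\beta,\hbar}|\alpha|^2$, uniformly in $\rho$ and $R$.

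The test inequality then passes to the envelope. For any $C^1$ curve $\tilde\rho$, the bound $\mathsf A^0_\beta(\tilde\rho)\geq 2c\,\Tr\bigl(A_R(\tilde\rho(1)-\tilde\rho(0))\bigr)-c^2\kappa_{\beta,\hbar}|\alpha|^2$ holds. Its right-hand side is continuous under trace-norm convergence of the endpoints, because $A_R$ is bounded. So the bound survives the lower-semicontinuous envelope and the infimum over continuous curves from $\rho$ to $\mathcal T_\alpha[\rho]$.

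The main obstacle is the limit $R\to\infty$ for a state $\rho$ with possibly no first moment. I will write
\begin{equation*}
\Tr\bigl(A_R(\mathcal T_\alpha[\rho]-\rho)\bigr)=\Tr\bigl(\rho\,h_R(L_\alpha)\bigr),\qquad h_R(x):=R\bigl[f\bigl((x+|\alpha|^2)/R\bigr)-f(x/R)\bigr].
\end{equation*}
Since $0\leq f'\leq1$, we have $0\leq h_R\leq|\alpha|^2$, and $h_R\to|\alpha|^2$ pointwise. Dominated convergence for the spectral measure of $L_\alpha$ in the state $\rho$ then gives the limit $|\alpha|^2$. Optimizing in $c$ yields $\widetilde d_{\beta,2}(\rho,\mathcal T_\alpha[\rho])^2\geq\kappa_{\beta,\hbar}^{-1}|\alpha|^2$, and together with Proposition~\ref{prp:translation-ub} this proves the equality.
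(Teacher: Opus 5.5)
Your proposal is correct and follows essentially the same route as the paper. Both arguments test the dual action against a bounded regularization $f_R(L_\alpha)$ whose commutators are $c_\ell f_R'(L_\alpha)$, bound its energy uniformly by $\kappa_{\beta,\hbar}|\alpha|^2$ via the weighted AM--GM inequality, pass the endpoint bound through the lower-semicontinuous envelope using boundedness, and let $R\to\infty$ using $\mathcal T_\alpha^*[L_\alpha]=L_\alpha+|\alpha|^2\mathds 1$. The only cosmetic differences are that you optimize a single constant $c$ after integrating in time (instead of optimizing pointwise and applying Cauchy--Schwarz), and you invoke dominated convergence for the spectral measure instead of strong convergence; both yield the same bound.
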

\begin{proof}
The case $\alpha = 0$ is immediate, so assume $\alpha \neq 0$. For $R > 0$, define the bounded self-adjoint observable
\begin{equation}
F_R := f_R(L_\alpha),
\qquad
f_R(s) := R\arctan(s/R).
\end{equation}
Since $[V_\ell,L_\alpha]$ is a scalar multiple of the identity,
\begin{equation}
[V_\ell,F_R] = [V_\ell,L_\alpha]f_R'(L_\alpha).
\end{equation}
Moreover, $\|f_R'\|_\infty \leq 1$. The spectral representation of $\Theta_{\rho,\omega}$ and the weighted arithmetic--geometric mean inequality imply
\begin{equation}
\left\langle B,\Theta_{\rho,\omega}(B)\right\rangle_{\rm HS} \leq \kappa_\omega\|B\|_{\rm op}^2
\end{equation}
for every bounded operator $B$. Consequently,
\begin{equation}
Q_{\rho,\beta}(F_R) \leq \kappa_{\beta,\hbar}|\alpha|^2
\end{equation}
for every state $\rho$. Let $\gamma \in C^1([0,1];\mathcal D)$ be any differentiable curve. Testing the dual action against scalar multiples $sF_R$, with $s \in \mathbb R$, gives
\begin{align}
\mathsf a_{\gamma(t),\beta}\bigl(\dot\gamma(t)\bigr)
&\geq \sup_{s \in \mathbb R}\left\{2s\Tr\!\left(F_R\dot\gamma(t)\right) - s^2\kappa_{\beta,\hbar}|\alpha|^2\right\}\\
&= \frac{\left|\Tr\!\left(F_R\dot\gamma(t)\right)\right|^2}{\kappa_{\beta,\hbar}|\alpha|^2}.
\end{align}
Integrating in time and applying the Cauchy--Schwarz inequality gives
\begin{equation}
\mathsf A_\beta^0(\gamma) \geq \frac{\left|\Tr\!\left(F_R(\gamma(1) - \gamma(0))\right)\right|^2}{\kappa_{\beta,\hbar}|\alpha|^2}.
\end{equation}
Because $F_R$ is bounded, this estimate is stable under trace-norm convergence and therefore passes to the lower-semicontinuous envelope $\mathsf A_\beta$. Now suppose that $\gamma(0) = \rho$ and $\gamma(1) = \mathcal T_\alpha[\rho]$. Since
\begin{equation}
\mathcal T_\alpha^*[L_\alpha] = L_\alpha + |\alpha|^2\mathds 1,
\end{equation}
functional calculus gives
\begin{equation}
\mathcal T_\alpha^*[F_R] = f_R\!\left(L_\alpha + |\alpha|^2\mathds 1\right).
\end{equation}
Therefore,
\begin{equation}
\Tr\!\left(F_R(\mathcal T_\alpha[\rho] - \rho)\right) = \Tr\!\left(\left[f_R\!\left(L_\alpha + |\alpha|^2\mathds 1\right) - f_R(L_\alpha)\right]\rho\right).
\end{equation}
As $R \to \infty$,
\begin{equation}
f_R\!\left(L_\alpha + |\alpha|^2\mathds 1\right) - f_R(L_\alpha) \longrightarrow |\alpha|^2\mathds 1
\end{equation}
strongly, while its operator norm is bounded by $|\alpha|^2$. Hence
\begin{equation}
\lim_{R \to \infty}\Tr\!\left(F_R(\mathcal T_\alpha[\rho] - \rho)\right) = |\alpha|^2.
\end{equation}
It follows that every curve joining $\rho$ to $\mathcal T_\alpha[\rho]$ satisfies
\begin{equation}
\mathsf A_\beta(\gamma) \geq \kappa_{\beta,\hbar}^{-1}|\alpha|^2.
\end{equation}
Taking the infimum over curves and combining with Proposition~\ref{prp:translation-ub} proves the equality.
\end{proof}
\begin{corollary}
The normalized Carlen--Maas distance satisfies
\begin{equation}
d_{\mathrm{CM},2}^{(\beta)}\bigl(\rho,\mathcal T_\alpha[\rho]\bigr) = |\alpha|.
\end{equation}
\end{corollary}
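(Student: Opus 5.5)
The corollary follows by combining Proposition~\ref{prp:translation-lb} with the normalization in the definition~\eqref{eq:CM-def}. That proposition gives, for every $\rho\in\mathcal D$ and $\alpha\in\mathbb R^{2D}$, the exact value
\begin{equation*}
\widetilde d_{\beta,2}\bigl(\rho,\mathcal T_\alpha[\rho]\bigr) = \kappa_{\beta,\hbar}^{-1/2}|\alpha|.
\end{equation*}
Multiplying by $\sqrt{\kappa_{\beta,\hbar}}$ then yields
\begin{equation*}
d_{\mathrm{CM},2}^{(\beta)}\bigl(\rho,\mathcal T_\alpha[\rho]\bigr) = \sqrt{\kappa_{\beta,\hbar}}\,\kappa_{\beta,\hbar}^{-1/2}|\alpha| = |\alpha|.
\end{equation*}
This step requires $\kappa_{\beta,\hbar}>0$. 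That holds because $\kappa_{\beta,\hbar}=\sinh(\beta\hbar)/(\beta\hbar)\geq 1$, with the convention $\kappa_{0,\hbar}=1$ at $\beta=0$.

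The only content here lies in the two inequalities making up Proposition~\ref{prp:translation-lb}.

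\textbf{Upper bound.} Proposition~\ref{prp:translation-ub} produces the translated curve $t\mapsto\mathcal T_{t\alpha}[\rho]$. It bounds the action pointwise using the identity $\Gamma_{\rho,\beta}(A,L_\alpha)=\kappa_{\beta,\hbar}\frac{i}{\hbar}\Tr(A[\hat X_\alpha,\rho])$ from Lemma~\ref{lem:linear-translation-potential}, together with Cauchy--Schwarz~\eqref{eq:Gamma-CS}. The finite-rank cutoffs $\Pi_N\rho\Pi_N$ make the curve differentiable, and the lower-semicontinuous envelope then passes the bound to $\rho$ itself.

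\textbf{Lower bound.} The dual action is tested against the bounded regularizations $F_R=R\arctan(L_\alpha/R)$. This gives $Q_{\rho,\beta}(F_R)\leq\kappa_{\beta,\hbar}|\alpha|^2$ uniformly in $\rho$, so every curve from $\rho$ to $\mathcal T_\alpha[\rho]$ has action at least $|\Tr(F_R(\mathcal T_\alpha[\rho]-\rho))|^2/(\kappa_{\beta,\hbar}|\alpha|^2)$. Letting $R\to\infty$ sends $\Tr(F_R(\mathcal T_\alpha[\rho]-\rho))$ to $|\alpha|^2$.

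The delicate point, already handled in Proposition~\ref{prp:translation-lb}, is the limit $R\to\infty$. I would justify it by dominated convergence: the operators $f_R(L_\alpha+|\alpha|^2)-f_R(L_\alpha)$ converge strongly to $|\alpha|^2\mathds 1$ with operator norms uniformly bounded by $|\alpha|^2$, and $\rho$ is trace class. The other subtlety is that the lower bound must survive the lower-semicontinuous envelope. This holds because $F_R$ is bounded, so the bound is stable under trace-norm convergence of curves.

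With both inequalities in place, the identity $d_{\mathrm{CM},2}^{(\beta)}(\rho,\mathcal T_\alpha[\rho])=|\alpha|$ is exactly Axiom~\ref{ax:tr-id}, for all $\beta$ and in particular for $d_{\mathrm{CM},2}$.
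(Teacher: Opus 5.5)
Your proposal is correct and follows the paper's route. The corollary is Proposition~\ref{prp:translation-lb}, whose upper half is Proposition~\ref{prp:translation-ub}, multiplied by the normalizing factor $\sqrt{\kappa_{\beta,\hbar}}$ from~\eqref{eq:CM-def}, and your sketches of the translated-curve upper bound and the $F_R=R\arctan(L_\alpha/R)$ lower bound match the paper's arguments, including the justification of the $R\to\infty$ limit and the passage to the lower-semicontinuous envelope.
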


\section{The data processing inequality by Petz monotonicity}
\label{sec:petz}

What remains is to verify the data processing axiom~\ref{ax:dp} for $d_{\mathrm{CM},2}$.
The Husimi measurement map
\begin{equation}
\mathcal C:\mathfrak S_1(\mathcal H)\to L^1(\mathbb R^{2D})\,, \qquad \mathcal C[\rho] = \Husimi_\rho\,.
\end{equation}
is positive and trace preserving. Its adjoint
\begin{equation}
\mathcal C^*:L^\infty(\mathbb R^{2D})\to\mathcal B(\mathcal H)\,,
\end{equation}
defined in Eq.~\eqref{eq:anti-Wick}, is normal, unital, and completely positive. Recall that it is given by anti-Wick quantization,
\begin{equation}
\mathcal C^*[f] = \int_{\mathbb R^{2D}}f(\alpha)|\alpha\rangle\langle\alpha|\frac{\diff\alpha}{(2\pi\hbar)^D}.
\end{equation}
The main result, from which the lower bound for $\widetilde d_{\beta,2}$ follows, is the following \textit{upper} bound for the quadratic form $Q_{\rho,\beta}(A)$.
\begin{proposition}
\label{prp:Q-ub}
For any $g\in C_c^\infty(\bbR^{2D})$, the quantum Dirichlet energy satisfies
\[
Q_{\rho,\beta}(\mcal{C}^*g)
\leq \kappa_{\beta,\hbar} \int_{\bbR^{2D}} |\nabla g(\alpha)|^2 \Husimi_\rho(\alpha) \diff \alpha\,.
\]
\end{proposition}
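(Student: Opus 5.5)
Our plan is to compute the noncommutative gradients of $\mcal C^*g$ exactly and then apply an operator-convexity (Petz-type monotonicity) inequality for the mean $\Theta_{\rho,\omega}$ under the unital completely positive map $\mcal C^*$.

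\textbf{Step 1: exact gradients.} The jump operators $V_\ell$ are linear combinations of $\hat x_j,\hat p_j$, divided by $\hbar^{1/2}$. Applying the exact commutator identity~\eqref{eq:coh-comm} componentwise (after rewriting $\hat x_j,\hat p_j$ as $\hat X_\alpha$ for suitable $\alpha$), we expect
\[
\partial_{j,-}\mcal C^*g = \mcal C^*\bigl(\tfrac{1}{\sqrt2}(\partial_{x_j}+i\partial_{p_j})g\bigr)=:\mcal C^*(h_{j,-}),
\]
with the conjugate combination (up to sign) for $(j,+)$. In particular $|h_{j,-}|^2+|h_{j,+}|^2 = |\partial_{x_j}g|^2+|\partial_{p_j}g|^2$ pointwise, so that $\sum_\ell|h_\ell|^2=|\nabla g|^2$. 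This mirrors Lemma~\ref{lem:linear-translation-potential}.

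\textbf{Step 2: Kadison--Schwarz/Petz bound for each $\ell$.} We want
\[
\langle\mcal C^*h,\Theta_{\rho,\omega}(\mcal C^*h)\rangle_{\rm HS}\le\kappa_\omega\int|h|^2\Husimi_\rho\diff\alpha .
\]
The classical counterpart of $\Theta_{\rho,\omega}$ on the commutative algebra $L^\infty(\bbR^{2D})$ with density $\Husimi_\rho=\mcal C[\rho]$ is multiplication by $\kappa_\omega\Husimi_\rho$, since $\Lambda(e^{\omega/2}m,e^{-\omega/2}m)=\kappa_\omega m$ for commuting $L=R=m$. The desired inequality is therefore exactly the monotonicity of the quadratic form $X\mapsto\langle X,\Lambda(e^{\omega/2}L_\rho,e^{-\omega/2}R_\rho)X\rangle$ under the channel $\mcal C$, applied in the dual form
\[
\langle\mcal C^*h,\Theta_{\rho,\omega}\mcal C^*h\rangle \le \langle h,\Theta^{\rm cl}_{\mcal C\rho,\omega}h\rangle .
\]
We plan to derive this from the standard Petz argument. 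The mean $\Lambda$ is an operator-monotone Kubo--Ando mean, and for unital CP $\Phi=\mcal C^*$ with predual $\Phi_*=\mcal C$ one has
\[
\Phi\circ \Lambda(L_{\Phi_*\rho},R_{\Phi_*\rho})^{-1}\circ\Phi^* \ge \Lambda(L_\rho,R_\rho)^{-1}
\]
in Petz's formulation (equivalently, the Lieb--Ruskai/Petz monotonicity of $\langle X,J_\rho^{-1}X\rangle$). Inverting the roles gives the bound on $\langle \Phi h,J_\rho\Phi h\rangle$. The extra weights $e^{\pm\omega/2}$ only rescale $L$ and $R$, and these rescalings commute with $\Phi$.

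\textbf{Step 3: sum over $\ell$.} Summing Step 2 over $\ell\in\mcal I$ with Step 1, and using $\kappa_{\omega_\ell}=\kappa_{\beta,\hbar}$, gives the claim.

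\textbf{Main obstacle.} The hard part is Step 2 in infinite dimensions. $\mcal C$ maps into the commutative $L^1$ algebra, $\rho$ may have kernel, and $\Husimi_\rho$ is not a finite matrix, so the Petz inequality must be justified there. We would first prove it for the finite-rank truncations $\rho_N=c_N^{-1}\Pi_N\rho\Pi_N$ and a discretization of phase space, replacing $\mcal C$ by a finite POVM $\{\int_{B_k}|\alpha\rangle\langle\alpha|\}$ on small cells in a compact set containing ${\rm supp}\,g$. In that setting the finite-dimensional Petz theorem applies. We would then pass to the limit using $g\in C_c^\infty$, bounded gradients, and trace-norm convergence. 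The joint concavity of $\Lambda$ (already used for Lemma~\ref{lem:Q-convexity}) is an alternative route: one can write $\mcal C^*$ via a Stinespring dilation and use concavity/monotonicity of $\Lambda$ under compressions.
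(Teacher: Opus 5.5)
Your proposal follows the paper's proof: the exact identity \eqref{eq:coh-comm} turns each $[V_\ell,\mathcal C^*g]$ into $\mathcal C^*h_\ell$ with $\sum_\ell|h_\ell|^2=|\nabla g|^2$. Petz's monotonicity for the operator-monotone $\ell(t)=\int_0^1t^s\diff s$, applied to the scaled functionals $e^{\pm\omega/2}\Tr(\rho\,\cdot)$ and $e^{\pm\omega/2}\int\cdot\,\Husimi_\rho$, then gives the per-$\ell$ bound with factor $\kappa_\omega$, and the bounds are summed using $\kappa_{-\omega}=\kappa_\omega$. Two small remarks: your displayed operator inequality has the sign reversed (monotone metrics contract, so it should read $\mathcal C^*\Lambda(L_{\mathcal C\rho},R_{\mathcal C\rho})^{-1}\mathcal C\le\Lambda(L_\rho,R_\rho)^{-1}$, which is equivalent to your correctly stated target $\mathcal C\,\Lambda(L_\rho,R_\rho)\,\mathcal C^*\le\Lambda(L_{\mathcal C\rho},R_{\mathcal C\rho})$), and the paper avoids your finite-dimensional discretization by applying Petz's von Neumann-algebra theorem directly to $\mathcal C^*:L^\infty(\mathbb R^{2D})\to\mathcal B(\mathcal H)$ for faithful $\rho$, then regularizing with $\rho_\varepsilon=(1-\varepsilon)\rho+\varepsilon\tau$.
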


The proof is based on Petz monotonicity.  Petz monotonicity is a deep monotonicity statement concerning completely positive maps and certain quadratic forms known as quasi-entropies.  The application to our setting uses the normal, unital, completely positive map $\mathcal C^*:L^\infty(\mathbb R^{2D})\to\mathcal B(\mathcal H)$, whose domain is a \textit{commutative}
von Neumann algebra with multiplication.

The special case of Petz's monotonicity that we need is stated below.

\begin{lemma}[Corollary of Petz monotonicity~\cite{petz1985quasientropies}]
\label{lem:Petz-cor}
Let $\rho\in\mcal{D}$, $\omega\in\bbR$, and
$f\in L^\infty(\bbR^{2D})$. Then
\begin{equation}
\label{eq:Petz-consequence}
\left\langle \mcal{C}^*f,\,
\Theta_{\rho,\omega}(\mcal{C}^*f)\right\rangle_{\rm HS}
\leq
\kappa_\omega
\int_{\bbR^{2D}} |f(\alpha)|^2
\Husimi_\rho(\alpha)\diff\alpha .
\end{equation}
\end{lemma}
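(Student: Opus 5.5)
The plan is to recognize the left-hand side of~\eqref{eq:Petz-consequence} as a quasi-entropy in the sense of Petz, evaluated on an observable of the form $\mathcal C^*f$. Petz's monotonicity theorem then lets us move $\mathcal C^*$ off the observable and turn $\rho$ into $\mathcal C[\rho]=\Husimi_\rho$. Because the target algebra $L^\infty(\mathbb R^{2D})$ is commutative, left and right multiplication by $\Husimi_\rho$ coincide there. So the operator mean collapses to a scalar multiple of multiplication by $\Husimi_\rho$, and that scalar is exactly $\kappa_\omega$.

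\textbf{Step 1: write $\Theta_{\rho,\omega}$ as a perspective.} Write
\[
\Theta_{\rho,\omega}=\Lambda(e^{\omega/2}L_\rho,e^{-\omega/2}R_\rho)=R_\rho\, h_\omega(L_\rho R_\rho^{-1}),
\qquad
h_\omega(t):=\int_0^1 e^{\omega(1-2s)/2}t^{1-s}\diff s .
\]
Here $h_\omega$ is a positive combination of the operator monotone powers $t^{1-s}$, so it is operator monotone on $(0,\infty)$. Hence
\[
\langle B,\Theta_{\rho,\omega}(B)\rangle_{\rm HS}=\langle B\rho^{1/2},h_\omega(\Delta_\rho)(B\rho^{1/2})\rangle_{\rm HS},
\qquad \Delta_\rho = L_\rho R_\rho^{-1},
\]
which is a Petz quasi-entropy with operator monotone $h_\omega$. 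Equivalently, $(X,Y)\mapsto\Lambda(e^{\omega/2}X,e^{-\omega/2}Y)$ is a jointly concave, degree-one homogeneous Kubo--Ando type mean.

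\textbf{Step 2: apply Petz monotonicity in its cotangent (dual) form.} For a unital completely positive normal map $\Phi^*$ with trace-preserving predual $\Phi$, the quadratic forms built from an operator mean of $L$ and $R$ satisfy
\[
\langle\Phi^*K,\,\mathsf M(L_\rho,R_\rho)\,\Phi^*K\rangle\leq\langle K,\,\mathsf M(L_{\Phi\rho},R_{\Phi\rho})K\rangle .
\]
This is the Legendre dual of the usual contraction of monotone metrics, and it follows from the joint concavity of $\mathsf M$ together with the Kadison--Schwarz structure (Petz; Lesniewski--Ruskai). I will take $\Phi=\mathcal C$ and $K=f$. On the commutative side, $L_{\Husimi_\rho}=R_{\Husimi_\rho}=$ multiplication by $\Husimi_\rho$, so
\[
\mathsf M(L_{\Husimi_\rho},R_{\Husimi_\rho}) = \Lambda(e^{\omega/2},e^{-\omega/2})\,\Husimi_\rho=\kappa_\omega\Husimi_\rho .
\]
The right-hand side is therefore $\kappa_\omega\int|f|^2\Husimi_\rho\diff\alpha$. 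In this step the trace on $L^\infty$ is Lebesgue integration, which is the functional dual to $\mathcal C$ by~\eqref{eq:C*Husimi-dual}.

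\textbf{Step 3: justify the infinite-dimensional setting.} This is the step I expect to be the main obstacle. Petz's theorem is cleanest for finite-dimensional algebras and faithful states, but here $\mathcal H$ is infinite-dimensional and the commutative side carries an infinite (Lebesgue) trace. I would first prove the inequality for $f$ a simple function with finitely many values on a finite measurable partition $\{E_k\}$. Then $\mathcal C^*f=\sum_k f_k P_k$ with POVM elements $P_k=\mathcal C^*[\One_{E_k}]$, so $\mathcal C$ composed with the partition is a channel into the finite commutative algebra $\mathbb C^n$, and $\Husimi_\rho$ is replaced by the probabilities $p_k=\Tr(\rho P_k)$. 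Next I would replace $\rho$ by its finite-rank truncations $\Pi_N\rho\Pi_N/\Tr(\Pi_N\rho)$ and restrict to the corresponding finite-dimensional corner, possibly adding $\epsilon\Pi_N$ for faithfulness; the finite-dimensional Petz theorem then applies. Finally I would pass to limits: $N\to\infty$ and $\epsilon\to0$ using normality and the spectral formula for $\Theta$ (lower semicontinuity suffices on the left), and simple $f\to f$ in $L^2(\Husimi_\rho)$ with uniform $L^\infty$ bounds. The latter works because $\|\mathcal C^*f\|_{\rm op}\leq\|f\|_\infty$ by~\eqref{eq:coh-bd}. If Petz's theorem in the dual form is felt to be unavailable, my fallback is to derive it directly from the joint concavity~\eqref{eq:Theta-concavity}, via a Stinespring dilation of $\mathcal C^*$ and the conditional-expectation version of the inequality.
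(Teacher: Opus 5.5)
Your proposal is correct and follows the same route as the paper: Petz's quasi-entropy monotonicity applied to the unital completely positive map $\mathcal C^*:L^\infty(\mathbb R^{2D})\to\mathcal B(\mathcal H)$, with commutativity of $L^\infty$ collapsing the operator mean on the classical side to $\kappa_\omega$ times multiplication by $\Husimi_\rho$. The differences are only in packaging. The paper keeps the fixed function $\ell(t)=\int_0^1 t^s\diff s$, puts $\omega$ into the functionals $e^{\pm\omega/2}\Tr(\rho\,\cdot)$, and invokes Petz's von Neumann-algebraic theorem directly, so it needs no trace on $L^\infty$ and no finite-dimensional reduction, only the approximation $(1-\varepsilon)\rho+\varepsilon\tau$ to remove faithfulness. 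You instead encode $\omega$ in $h_\omega$ and reduce to finite dimensions via partitions and truncations, which also works.
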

\noindent Before we explain how Lemma~\ref{lem:Petz-cor} follows from Petz monotonicity,
we first use it to derive Proposition~\ref{prp:Q-ub}.

\begin{proof}[Proof of Proposition~\ref{prp:Q-ub} using Lemma~\ref{lem:Petz-cor}]
Unwrapping the definition of $Q_{\rho,\beta}$ in~\eqref{eq:Q-def}, we have
\begin{equation}
Q_{\rho,\beta}(\mathcal C^*[g]) = \sum_{\ell\in\mathcal I}\left\langle[V_\ell,\mathcal C^*[g]],\Theta_{\rho,\omega_\ell}\bigl([V_\ell,\mathcal C^*[g]]\bigr)\right\rangle_{\rm HS}.
\end{equation}
The result then follows by applying Lemma~\ref{lem:Petz-cor} to each commutator and using the exact commutator identity~\eqref{eq:coh-comm} for the linear combinations defining $V_{j,\pm}$. Since $\kappa_{-\omega} = \kappa_\omega$, every term carries the same factor $\kappa_{\beta,\hbar}$.
\end{proof}

Before we complete the proof of Lemma~\ref{lem:Petz-cor}, we point out that Proposition~\ref{prp:Q-ub} implies a lower bound for the quantum transport distance.
\begin{proof}[Proof that CM satisfies Axiom~\ref{ax:dp}]
The upper bound on the quadratic form $Q_{\rho,\beta}$ implies the following lower bound on the action in terms of the classical action.
\begin{align}
\mathsf a_{\rho,\beta}(\xi)
&\geq \sup_{f\in C^1(\mathbb R^{2D})}\left\{2\int f(\alpha)\Husimi_\xi(\alpha)\diff\alpha-\kappa_{\beta,\hbar}\int|\nabla f(\alpha)|^2\Husimi_\rho(\alpha)\diff\alpha\right\}\\
&=\kappa_{\beta,\hbar}^{-1}\mathsf a_{\Husimi_\rho}^{\rm cl}(\Husimi_\xi).
\end{align}
Integrating over infimizing curves $\rho_t$ and using the characterization~\eqref{eq:dual-classical-def} for the classical transport distance gives
\begin{equation}
\widetilde d_{\beta,2}(\rho,\sigma)\geq\kappa_{\beta,\hbar}^{-1/2}W_2(\Husimi_\rho,\Husimi_\sigma).
\end{equation}
Consequently,
\begin{equation}
W_2(\Husimi_\rho,\Husimi_\sigma)\leq d_{\mathrm{CM},2}^{(\beta)}(\rho,\sigma).
\end{equation}
\end{proof}

\begin{proof}[Proof of Lemma~\ref{lem:Petz-cor} using Petz monotonicity]
Let
\[
\ell(t):=\int_0^1 t^s\diff s
=
\begin{cases}
\dfrac{t-1}{\log t},&t\neq1,\\[4pt]
1,&t=1.
\end{cases}
\]
The function $\ell$ is operator monotone on $[0,\infty)$, since
$t\mapsto t^s$ is operator monotone for every $s\in[0,1]$. Note that the anti-Wick quantization map $\mcal{C}^*:L^\infty(\bbR^{2D})\to \mcal{B}(\mcal{H})$ is normal, unital, and completely positive, hence in particular it is $2$-positive.

We first suppose that $\rho$ is faithful. Define positive normal
functionals on $\mcal{B}(\mcal{H})$ by
\[
\varphi_\pm(A):= e^{\pm\omega/2}\Tr(\rho A)\,,
\]
and positive normal functionals on $L^\infty(\bbR^{2D})$ by
\[
\varphi_{\pm,0}(g):= e^{\pm\omega/2} \int_{\bbR^{2D}} g(\alpha)\Husimi_\rho(\alpha)\diff\alpha\,.
\]
The duality between the Husimi measurement and anti-Wick quantization,
\begin{equation}
\label{eq:Petz-domination}
\varphi_{\pm,0}=\varphi_\pm\circ\mcal{C}^*
\end{equation}
verifies the hypotheses\footnote{In the notation of Petz's result, $M_0=L^\infty(\bbR^{2D})$, $M=\mcal{B}(\mcal{H})$, $\alpha=\mcal{C}^*$, $\varphi=\varphi_+$, $\varphi_0=\varphi_{+,0}$, $\omega=\varphi_-$, and $\omega_0=\varphi_{-,0}$.} $\varphi\circ\alpha\leq\varphi_0$ and $\omega\circ\alpha\leq\omega_0$ of Petz's theorem, in fact with equality. Therefore Petz's monotonicity theorem~\cite[Theorem 4]{petz1985quasientropies} applies and yields the bound
\begin{equation}
\label{eq:Petz-abstract}
S_\ell^{\mcal C^*f}(\varphi_+,\varphi_-) \leq S_\ell^f(\varphi_{+,0},\varphi_{-,0}).
\end{equation}

For faithful $\rho$, the standard-form expression for the quasi-entropy gives, for every bounded operator $X$,
\begin{align*}
S_\ell^X(\varphi_+,\varphi_-)
&=
\int_0^1 e^{\omega(s-1/2)}
\Tr\!\left(X^\dagger\rho^sX\rho^{1-s}\right)\diff s\\
&=
\int_0^1 e^{\omega(1-2s)/2}
\Tr\!\left(X^\dagger\rho^{1-s}X\rho^s\right)\diff s\\
&=
\left\langle X,\Theta_{\rho,\omega}(X)
\right\rangle_{\rm HS}.
\end{align*}
Thus
\[
S_\ell^{\mcal{C}^*f}(\varphi_+,\varphi_-)
=
\left\langle
\mcal{C}^*f,
\Theta_{\rho,\omega}(\mcal{C}^*f)
\right\rangle_{\rm HS}.
\]

On the commutative algebra $L^\infty(\bbR^{2D})$, the two functionals $\varphi_{+,0}$ and $\varphi_{-,0}$ differ by the constant factor $e^\omega$, so the corresponding relative modular operator is
multiplication by $e^\omega$. Hence
\begin{align*}
S_\ell^f(\varphi_{+,0},\varphi_{-,0})
&=
e^{-\omega/2}\ell(e^\omega)
\int_{\bbR^{2D}}|f(\alpha)|^2
\Husimi_\rho(\alpha)\diff\alpha\\
&=
\kappa_\omega
\int_{\bbR^{2D}}|f(\alpha)|^2
\Husimi_\rho(\alpha)\diff\alpha.
\end{align*}
Combining this with~\eqref{eq:Petz-abstract} proves \eqref{eq:Petz-consequence} when $\rho$ is faithful.

To remove the faithfulness assumption, fix a faithful density matrix $\tau$ on $\mcal{H}$ and apply the above result to $\rho_\eps := (1-\eps)\rho + \eps\,\tau$ to obtain 
\begin{equation}
\label{eq:Petz-epsilon}
\left\langle \mcal C^*f,\,
\Theta_{\rho_\varepsilon,\omega}(\mcal C^*f)
\right\rangle_{\rm HS}
\leq
\kappa_\omega
\int_{\bbR^{2D}}|f(\alpha)|^2
\Husimi_{\rho_\varepsilon}(\alpha)\diff\alpha.
\end{equation}
A standard limiting argument using continuity in trace norm on the left and dominated convergence on the right recovers the result for $\rho$.
\end{proof}

\section{Finite-action tangent directions and singular local geometry}
\label{sec:finite-action-tangents}

The finite-dimensional Carlen--Maas construction is Riemannian on the manifold of faithful states~\cite{carlen2020non}. Its local geometry can nevertheless become singular at nonfaithful states and, in infinite dimension, the inverse quadratic form defining the tangent action can be finite only on a proper subspace of trace-zero perturbations. The dual action in Eq.~\eqref{eq:quantum-dual-action} provides a precise way to describe this phenomenon. We define the space of finite-action tangent directions at $\rho$ by
\begin{equation}\label{eq:finite-action-tangent-space}
\operatorname{Tan}_{\rho,\beta}^{\rm fin} := \left\{\xi \in \mcal M_0 : \msf{a}_{\rho,\beta}(\xi) < \infty\right\}.
\end{equation}

In finite dimension, this space can be described explicitly. Let $\mcal K_{\rho,\beta}$ denote the positive semidefinite operator associated with the polarization of $Q_{\rho,\beta}$, so that
\begin{equation}
Q_{\rho,\beta}(A) = \langle A,\mcal K_{\rho,\beta}A\rangle_{\rm HS}
\end{equation}
for self-adjoint observables $A$.

\begin{lemma}
\label{lem:finite-action-tangent-space}
Suppose that $\mcal H$ is finite-dimensional. Then
\begin{equation}\label{eq:dual-action-pseudoinverse}
\msf{a}_{\rho,\beta}(\xi) = \begin{cases}
\langle \xi,\mcal K_{\rho,\beta}^{+}\xi\rangle_{\rm HS}, & \xi \in \operatorname{Ran}\mcal K_{\rho,\beta},\\[4pt] +\infty, & \xi \notin \operatorname{Ran}\mcal K_{\rho,\beta},
\end{cases}
\end{equation}
where $\mcal K_{\rho,\beta}^{+}$ is the Moore--Penrose pseudoinverse. In particular,
\begin{equation}
\operatorname{Tan}_{\rho,\beta}^{\rm fin} = \operatorname{Ran}\mcal K_{\rho,\beta}.
\end{equation}
\end{lemma}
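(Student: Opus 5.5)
Since $\mathcal H$ is finite-dimensional, everything reduces to linear algebra. The plan is to restrict the variational formula \eqref{eq:quantum-dual-action} to a finite-dimensional real Hilbert space and compute the supremum of a concave quadratic function.

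First we fix the setting. Let $V$ be the real vector space of self-adjoint operators on $\mathcal H$ with the Hilbert--Schmidt inner product, which is real on $V$. In finite dimension every self-adjoint operator lies in $\mathcal B_0$, since all commutators $[V_\ell,A]$ are automatically bounded. So the supremum in \eqref{eq:quantum-dual-action} runs over all of $V$. The polarization $\Gamma_{\rho,\beta}$ is a real symmetric bilinear form on $V$. To check the reality, take adjoints in the spectral expression and use the symmetry $s\mapsto 1-s$ together with the pairing of $\ell=(j,\pm)$ through $V_{j,+}=V_{j,-}^\dagger$, since $[V_\ell,A]^\dagger=-[V_\ell^\dagger,A]$ and $\omega_{j,+}=-\omega_{j,-}$. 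By the Riesz representation it is therefore given by a positive semidefinite operator $\mathcal K_{\rho,\beta}$ on $V$, with $Q_{\rho,\beta}(A)=\langle A,\mathcal K_{\rho,\beta}A\rangle_{\rm HS}$. Positivity is the nonnegativity already noted after \eqref{eq:Q-def}.

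Next, for $\xi\in\mathcal M_0$ (self-adjoint, so $\xi\in V$) we write $\Tr(A\xi)=\langle A,\xi\rangle_{\rm HS}$. We then split $V=\operatorname{Ran}\mathcal K\oplus\ker\mathcal K$, which is an orthogonal decomposition because $\mathcal K$ is symmetric.

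If $\xi\notin\operatorname{Ran}\mathcal K$, its component $\xi_0$ in $\ker\mathcal K$ is nonzero. Taking $A=t\xi_0$ gives $Q(A)=0$ and $2\Tr(A\xi)=2t|\xi_0|^2\to\infty$, so $\mathsf a_{\rho,\beta}(\xi)=+\infty$.

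If instead $\xi=\mathcal K\eta$, complete the square:
\[
2\langle A,\mathcal K\eta\rangle-\langle A,\mathcal K A\rangle=\langle \eta,\mathcal K\eta\rangle-\langle A-\eta,\mathcal K(A-\eta)\rangle\leq\langle\eta,\mathcal K\eta\rangle .
\]
Equality holds at $A=\eta$. Moreover $\langle\eta,\mathcal K\eta\rangle=\langle\xi,\mathcal K^+\xi\rangle$, obtained by choosing $\eta=\mathcal K^+\xi$ and using $\mathcal K\mathcal K^+\mathcal K=\mathcal K$. This yields \eqref{eq:dual-action-pseudoinverse}, and the identification of $\operatorname{Tan}^{\rm fin}_{\rho,\beta}$ follows because the value $\langle\xi,\mathcal K^+\xi\rangle$ is finite.

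One point needs care, though it is minor: the intersection with trace-zero directions. The identity $\mathds 1$ is always in $\ker\mathcal K$, since all commutators vanish. Hence $\operatorname{Ran}\mathcal K\subset\mathds 1^\perp=\mathcal M_0$, so the range is automatically a subspace of the tangent space and the statement is consistent. The only genuinely nontrivial step is verifying that $\Gamma_{\rho,\beta}$ is real-valued and symmetric on self-adjoint observables, so that $\mathcal K$ is a well-defined self-adjoint operator on the real space $V$. I expect this bookkeeping with the $\pm$ jump pairs and the frequencies $\omega_\ell$ to be the main obstacle, and I would handle it in the spectral basis of $\rho$ using the explicit formula following \eqref{eq:Q-def}.
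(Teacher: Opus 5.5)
Your proposal is correct and follows essentially the paper's argument: split the real space of self-adjoint observables orthogonally into $\ker\mathcal K_{\rho,\beta}$ and $\operatorname{Ran}\mathcal K_{\rho,\beta}$, scale along a kernel direction with nonzero pairing to get $+\infty$, and otherwise complete the square. Your observation that $\One\in\ker\mathcal K_{\rho,\beta}$ forces $\operatorname{Ran}\mathcal K_{\rho,\beta}\subset\mathcal M_0$ is a detail the paper leaves implicit. The step you flag as the main obstacle is dispensable: on the real space of self-adjoint operators, $Q_{\rho,\beta}$ is already a real quadratic form with polarization $\operatorname{Re}\Gamma_{\rho,\beta}$, so $\mathcal K_{\rho,\beta}$ is well defined without checking that $\Gamma_{\rho,\beta}$ itself is real, although your pairing argument does correctly show that it is.
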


\begin{proof}
Decompose the real Hilbert space of self-adjoint observables as
\[
\ker\mcal K_{\rho,\beta} \oplus (\ker\mcal K_{\rho,\beta})^\perp.
\]
If $\xi$ has nonzero pairing with an observable in $\ker\mcal K_{\rho,\beta}$, then scaling that observable in the variational formula~\eqref{eq:quantum-dual-action} shows that the supremum is infinite. Otherwise, $\xi \in (\ker\mcal K_{\rho,\beta})^\perp = \operatorname{Ran}\mcal K_{\rho,\beta}$, and completing the square gives Eq.~\eqref{eq:dual-action-pseudoinverse}.
\end{proof}

If $\rho$ is faithful and the gradient calculus is ergodic, in the sense that $Q_{\rho,\beta}(A) = 0$ only when $A$ is a multiple of the identity, then
\[
\ker\mcal K_{\rho,\beta} = \bbR\One.
\]
It follows that $\operatorname{Tan}_{\rho,\beta}^{\rm fin} = \mcal M_0$, and the metric is locally Riemannian. Thus the restriction to a proper space of finite-action directions is a boundary or infinite-dimensional phenomenon, rather than a property of every state.

To see what happens at a finite-rank state in the present oscillator model, let $P$ denote the support projection of $\rho$. The spectral formula for $\Theta_{\rho,\omega}$ gives
\begin{equation}\label{eq:Theta-support-compression}
\Theta_{\rho,\omega}(B) = P\,\Theta_{\rho,\omega}(PBP)P.
\end{equation}
Consequently, $Q_{\rho,\beta}(A)$ depends only on the finitely many compressed commutators
\begin{equation}
\mcal G_\rho(A) := \bigl(P[V_\ell,A]P\bigr)_{\ell \in \mcal I}.
\end{equation}
If $P$ has finite rank and its range lies in the common domains of the $V_\ell$ and $V_\ell^\dagger$, then $\mcal G_\rho$ has finite-dimensional range. Finiteness of the dual action requires $\Tr(A\xi) = 0$ whenever $\mcal G_\rho(A) = 0$, so $\operatorname{Tan}_{\rho,\beta}^{\rm fin}$ is contained in a finite-dimensional subspace of $\mcal M_0$.

For example, for one oscillator in the vacuum state $\rho = \ket 0\bra 0$, the compressed commutators depend only on the matrix elements $\langle 1|A|0\rangle$ and $\langle 0|A|1\rangle$. Hence
\begin{equation}
\operatorname{Tan}_{\ket 0\bra 0,\beta}^{\rm fin} = \operatorname{span}_{\bbR}\left\{\ket 1\bra 0 + \ket 0\bra 1,\, \rmi\ket 1\bra 0 - \rmi\ket 0\bra 1\right\}.
\end{equation}
A general unitary need not preserve these finite-action tangent spaces. For instance, a unitary that fixes $\ket 0$ and maps $\ket 1$ to $\ket 2$ maps the first direction above to
\[
|2\rangle \langle 0| + |0\rangle \langle 2|\,,
\]
which is not a finite-action direction at $\ket 0\bra 0$. By contrast, metaplectic unitaries preserve the canonical gradient module. In particular, Eq.~\eqref{eq:action-symplectic} implies that, at $\beta = 0$,
\begin{equation}
\mcal U_S\!\!\left[\operatorname{Tan}_{\rho,0}^{\rm fin}\right] = \operatorname{Tan}_{\mcal U_S[\rho],0}^{\rm fin}.
\end{equation}

When $\rho$ is nonfaithful, a formal affine perturbation $\rho + \varepsilon\xi$ need not remain positive. Local statements should therefore be formulated using admissible curves
\[
\rho_\varepsilon = \rho + \varepsilon\,\xi + O(\varepsilon^2)
\]
inside $\mcal D$. Moreover, the condition $\msf{a}_{\rho,\beta}(\xi) = +\infty$ does not by itself imply a universal power law for $d_{\mathrm{CM},2}(\rho,\rho_\varepsilon)$. For comparison, in the classical two-point logarithmic-mean metric, writing
\[
\rho_r = (1-r,r), \qquad \Lambda(1-r,r) = \frac{1-2r}{\log(1-r)-\log r}\,,
\]
one has
\begin{equation}
d(\rho_0,\rho_\varepsilon) \asymp \int_0^\varepsilon \frac{\diff r}{\sqrt{\Lambda(1-r,r)}} \asymp \varepsilon\sqrt{\log(1/\varepsilon)}.
\end{equation}
Thus the ratio of the distance to $\varepsilon$ diverges, but the singularity is logarithmic rather than a fixed fractional power.

This singular local geometry explains why an unrestricted infinitesimal Lipschitz quotient can probe properties of the transport metric rather than dynamical instability. The proof of Theorem~\ref{thm:precise-main} below does not require a particular local asymptotic for $d_{\mathrm{CM},2}$. It instead uses the uniform $O(\hbar^{1/2})$ semiclassical comparison and therefore only requires the initial separation to satisfy $\frac{d_p(\rho,\sigma)}{\hbar^{1/2}} \to \infty$.

\section{Classical optimal transport and expansion exponent}
\label{sec:classical-ot-exponent}

The definition of the classical expansion exponent $\lambda_{\rm exp}$ can also be phrased in terms of optimal transport.  The idea is to embed points as delta-function measures, using
\[
W_p(\delta_x, \delta_y) = |x-y|.
\]
The expansion coefficient can be redefined to superficially use optimal transport as follows:
\[
\lambda_{\rm exp} := \limsup_{T\to\infty} \sup_{x\not=y} \frac1T
\log\Big(\frac{W_p((\Phi_T)_{\#} \delta_x, (\Phi_T)_{\#}\delta_y)}{W_p(\delta_x,\delta_y)}\Big)
\]
where $(\Phi_T)_{\#}$ is the pushforward map on measures. Since $(\Phi_T)_{\#}\delta_x = \delta_{\Phi_T(x)}$, the supremum above is precisely the Lipschitz constant of the classical flow $\Phi_T$:
\[
\lambda_{\rm exp} := \limsup_{T\to\infty} \frac1T \log(\Lip(\Phi_T)).
\]

The natural extension of this idea would be to take the supremum not only over points $x\not=y$, but over general measures.  Doing so, we are led to define
\[
\tilde{\lambda}_{\rm exp} := \limsup_{T\to\infty} \sup_{\mu\not=\nu} \frac1T \log\Big(\frac{W_p((\Phi_T)_{\#} \mu, (\Phi_T)_{\#}\nu)}{W_p(\mu,\nu)}\Big),
\]
where the supremum is over probability measures $\mu$ and $\nu$. It is clear that $\tilde{\lambda}_{\rm exp} \geq \lambda_{\rm exp}$, simply because the supremum is now over a larger space.  The reverse inequality also holds, since
\begin{align*}
W_p((\Phi_T)_{\#}\mu,(\Phi_T)_{\#}\nu) \leq \Lip(\Phi_T)\, W_p(\mu,\nu).
\end{align*}
Therefore, $\tilde{\lambda}_{\rm exp} = \lambda_{\rm exp}$.

One might therefore think that one can simply replace the classical transport distance with a quantum distance and recover a suitable definition.  There is a subtle issue however, which necessitates the unusual order of limits in our definition.

This issue can best be illustrated for Hamiltonians that are compactly supported perturbations of a harmonic oscillator, with the quantum distance taken at $p=1$. Namely, let $H(x,p) = p^2 + x^2 + W(x)$, with $W\in C_c^\infty(\mathbb R^D)$, and
\[
d_{\mathrm{CM},1}(\rho,\sigma) := \sup_{\|A\|_{\rm Lip} \leq 1} \Tr[A(\rho-\sigma)] =: \|\rho-\sigma\|_{\rm KR}.
\]
The nice feature is that the $1$-Wasserstein distance can be interpreted in terms of a norm, the Kantorovich-Rubinstein norm. Suppose that the classically unstable dynamics is confined below some fixed energy scale $E$. Because the quantum Hamiltonian is confining, its spectral subspace below $E$ is finite dimensional at fixed $\hbar$. On this finite-dimensional subspace, the trace norm is equivalent to the Kantorovich--Rubinstein norm, so there exist constants $c,C>0$ such that
\[
c\,d_{\mathrm{CM},1}(\rho,\sigma) \leq \|\rho-\sigma\|_{\Tr} \leq C\,d_{\mathrm{CM},1}(\rho,\sigma).
\]

Therefore, letting $\rho(T)$ and $\sigma(T)$ be the evolution of $\rho$ and $\sigma$ by the
Schr\"{o}dinger evolution generated by $\hat{H}$, we have
\[
d_{\mathrm{CM},1}(\rho(T),\sigma(T))
\leq c^{-1} \|\rho(T)-\sigma(T)\|_{\Tr}
= c^{-1} \|\rho-\sigma\|_{\Tr} \leq c^{-1}C \,d_{\mathrm{CM},1}(\rho,\sigma).
\]
That is, there is a $T$-independent bound for the expansion ratio $d_{\mathrm{CM},1}(\rho(T),\sigma(T)) /d_{\mathrm{CM},1}(\rho,\sigma)$ among states $\rho$ and $\sigma$ in the bounded-energy sector.  On the other hand, the high energy dynamics is governed by the harmonic oscillatory $H(x,p) = x^2+p^2$ (up to a perturbation), so there is no reason to expect that one would see any exponential expansion in that sector.  Thus, it is reasonable to conjecture that the naive definition of the quantum Lyapunov exponent at finite
$\hbar$ vanishes, namely
\[
\lim_{T\to\infty} \sup_{\rho\not= \sigma} \frac1T \log\Big(\frac{d_{\mathrm{CM},1}(\rho(T),\sigma(T))}{d_{\mathrm{CM},1}(\rho,\sigma)}\Big) = 0.
\]
Although we sketched this argument in the case $p=1$ with one specific metric, it does seem that the general mechanism of compactness which comes from the finite-dimensionality of the relevant dynamical space is a generic obstacle to developing a fixed-$\hbar$ theory of quantum chaos.  It seems therefore that a general notion of quantum Lyapunov exponent requires considering some sequence of dynamical systems whose effective dimension tends to infinity, whether that be a semiclassical limit or a large $N$-type limit.

\section{Proof of the main theorem}

Now we state and prove the precise version of our main result Theorem~\ref{thm:lambda-eq}.
\begin{theorem}[Precise form of the main result]
\label{thm:precise-main}
Let $H\in S_2(1)$ be an admissible Hamiltonian as in Definition~\ref{def:admissible-H}. Then
\begin{equation}
\lambda_{\rm q} = \lambda_{\rm exp}\,,
\end{equation}
where $\lambda_{\rm exp}$ is the classical expansion coefficient defined in the main text. Let $\{\rho_\hbar\}_{\hbar>0}$ be a family of quantum states with classical phase-space limit $\mu$, in the sense of Eq.~\eqref{eq:rho-sc-lim}, and define
\begin{equation}
C_{jk}^{\rm cl}(T) := \int_{\mathbb R^{2D}}\left|\{\Phi_T(\alpha)_j,r_k\}_{\rm PB}\right|^2\diff\mu(\alpha).
\end{equation}
Then
\begin{equation}
\lambda_{\rm OTOC} = \limsup_{T\to\infty}\frac{1}{2T}\log\!\left[\sigma_{\max}\!\left(\mathbf C^{\rm cl}(T)\right)\right] \leq \lambda_{\rm exp}.
\end{equation}
\end{theorem}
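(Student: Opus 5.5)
The proof splits into three parts: the upper bound $\lambda_{\rm q}\leq\lambda_{\rm exp}$, the lower bound $\lambda_{\rm q}\geq\lambda_{\rm exp}$, and the OTOC identity and bound. The common ingredient for the first two is a \emph{Wasserstein--Egorov estimate}, taken from the companion work~\cite{cotler2025egorov}. It states that for an admissible $H\in S_2(1)$, fixed $T$, and any $\rho\in\mathcal D_p$,
\[
W_p\bigl(\Husimi_{\rho(T)},(\Phi_T)_\#\Husimi_\rho\bigr)\leq C e^{CT}\hbar^{1/2},
\]
uniformly in $\rho$. The mechanism is the approximate $e^{CT}\hbar^{1/2}$-phase-space locality of $W_{\delta,T}=e^{-iT\hat H/\hbar}\tau_\delta e^{iT\hat H/\hbar}$ sketched in Fig.~\ref{fig:conceptual2}.

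\emph{Upper bound.} Write $J(T)=\Lip(\Phi_T)$, which is finite because $X_H=J\nabla H$ is globally Lipschitz. I will chain four inequalities.
\begin{enumerate}[label=(\roman*)]
\item By Proposition~\ref{prp:ub}, $d_p(\rho(T),\sigma(T))\leq W_p(\Husimi_{\rho(T)},\Husimi_{\sigma(T)})+2M_{D,p}\hbar^{1/2}$.
\item By the triangle inequality for $W_p$ and the Wasserstein--Egorov estimate, $W_p(\Husimi_{\rho(T)},\Husimi_{\sigma(T)})$ is at most $W_p((\Phi_T)_\#\Husimi_\rho,(\Phi_T)_\#\Husimi_\sigma)+2Ce^{CT}\hbar^{1/2}$.
\item Pushing an optimal coupling forward by $\Phi_T$ gives $W_p((\Phi_T)_\#\mu,(\Phi_T)_\#\nu)\leq J(T)W_p(\mu,\nu)$.
\item Axiom~\ref{ax:dp} gives $W_p(\Husimi_\rho,\Husimi_\sigma)\leq d_p(\rho,\sigma)$.
\end{enumerate}
Together these yield
\[
d_p(\rho(T),\sigma(T))\leq J(T)\,d_p(\rho,\sigma)+O(e^{CT}\hbar^{1/2}).
\]
Divide by $d_p(\rho_\hbar,\sigma_\hbar)=\omega(\hbar^{1/2})$. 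The error term then vanishes as $\hbar\to0$ at fixed $T$, so the $\limsup_{\hbar\to0}$ of the quotient is at most $J(T)$, uniformly over admissible families. Taking $T^{-1}\log$ and $T\to\infty$ gives $\lambda_{\rm q}\leq\lambda_{\rm exp}$.

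\emph{Lower bound.} Fix $T$ and $\varepsilon>0$, and choose $x\neq y$ with $|\Phi_T(x)-\Phi_T(y)|\geq(1-\varepsilon)J(T)|x-y|$. Take the $\hbar$-independent points and set $\rho_\hbar=|x\rangle\langle x|$, $\sigma_\hbar=|y\rangle\langle y|$. Since these are translates of each other, Axiom~\ref{ax:tr-id} gives $d_p(\rho_\hbar,\sigma_\hbar)=|x-y|$, which is trivially $\omega(\hbar^{1/2})$.

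For the evolved states, Axiom~\ref{ax:dp} gives $d_p(\rho(T),\sigma(T))\geq W_p(\Husimi_{\rho(T)},\Husimi_{\sigma(T)})$. The Husimi function $\Husimi_\rho$ is a Gaussian of width $\hbar^{1/2}$ centered at $x$, so $W_p(\Husimi_\rho,\delta_x)=M_{D,p}\hbar^{1/2}$. Therefore
\[
W_p\bigl((\Phi_T)_\#\Husimi_\rho,\delta_{\Phi_T(x)}\bigr)\leq J(T)M_{D,p}\hbar^{1/2}.
\]
Combining this with the Wasserstein--Egorov estimate, $W_p(\Husimi_{\rho(T)},\delta_{\Phi_T(x)})=O(e^{CT}\hbar^{1/2})$, and likewise for $y$. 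The reverse triangle inequality then gives
\[
d_p(\rho(T),\sigma(T))\geq|\Phi_T(x)-\Phi_T(y)|-O(e^{CT}\hbar^{1/2}).
\]
Letting $\hbar\to0$, then $\varepsilon\to0$, then $T\to\infty$ gives $\lambda_{\rm q}\geq\lambda_{\rm exp}$.

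\emph{OTOC.} The coordinates $r_j$ lie in $S_1(1)$, so the Bouzouina--Robert form of Egorov's theorem gives
\[
\hbar^{-1}[\hat r_j(T),\hat r_k]=i\,\Op_\hbar(b_{jk})+O(\hbar),\qquad b_{jk}:=\{(\Phi_T)_j,r_k\}_{\rm PB}\in S_0(1),
\]
with the error bounded in $S_0(1)$ and hence in operator norm. Squaring and applying the product formula, $\hbar^{-2}|[\hat r_j(T),\hat r_k]|^2=\Op_\hbar(|b_{jk}|^2)+O(\hbar)$. The semiclassical limit~\eqref{eq:rho-sc-lim} then gives $C_{jk}(T)\to C^{\rm cl}_{jk}(T)$ entrywise at fixed $T$. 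Because $\sigma_{\max}$ is continuous on a finite matrix, this proves the claimed formula for $\lambda_{\rm OTOC}$. For the bound, note $b_{jk}=(D\Phi_T\,J)_{jk}$ up to sign, so $|b_{jk}|\leq\|D\Phi_T\|\leq J(T)$. Bounding $\sigma_{\max}$ by a sum of entries gives
\[
\sigma_{\max}(\mathbf C^{\rm cl}(T))\leq\sum_{j,k}C^{\rm cl}_{jk}(T)\leq\sum_{j}\|(D\Phi_T J)_{j\cdot}\|^2\leq 2D\,J(T)^2,
\]
and hence $\lambda_{\rm OTOC}\leq\lambda_{\rm exp}$.

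The main obstacle is the Wasserstein--Egorov estimate itself. It must hold in $W_p$ rather than merely weakly, uniformly over arbitrary states in $\mathcal D_p$, with only $e^{CT}\hbar^{1/2}$ loss. If I had to prove it rather than cite~\cite{cotler2025egorov}, I would first try the following route.
\begin{enumerate}[label=(\alph*)]
\item Replace $\rho$ by $\mathcal N[\rho]$, a mixture of coherent states; this costs $O(\hbar^{1/2})$ by Eq.~\eqref{eq:noising-close} and Proposition~\ref{prp:ub}.
\item Use the convexity of $W_p^p$ under mixtures to reduce to a single coherent state.
\item For a coherent state, use Gaussian wavepacket propagation, which holds because $H\in S_2(1)$, to show that the evolved Husimi function concentrates within $O(e^{CT}\hbar^{1/2})$ of $\Phi_T(\alpha)$, with Gaussian tails controlling the $p$-th moment.
\end{enumerate}
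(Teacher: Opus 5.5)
Your proof is correct and follows the paper's argument. The upper bound $\lambda_{\rm q}\leq\lambda_{\rm exp}$ uses the same chain: Proposition~\ref{prp:ub}, the cited Wasserstein--Egorov estimate, Lipschitz pushforward of couplings, and the data-processing axiom. The lower bound $\lambda_{\rm q}\geq\lambda_{\rm exp}$ uses coherent states at nearly maximally expanding points; you close it with the data-processing axiom and the reverse triangle inequality for $W_p$, where the paper uses the triangle inequality for $d_p$ against coherent states at $\Phi_T(\alpha_T)$, an immaterial variation. The OTOC part is Egorov's theorem plus an entrywise bound, as in the paper.

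One caution on your optional sketch of the cited estimate: step (a) is circular as stated. Eq.~\eqref{eq:noising-close} only controls the noising cost at time $0$. Carrying it to time $T$ requires exactly the quantum stability of the flow (approximate locality of $W_{\delta,T}$) that the companion work supplies.
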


The proof of this result relies on an estimate of the classical transport distance of Husimi distributions evolved by the Hamiltonian flow proved in~\cite{cotler2025egorov}.
\begin{theorem}[Consequence of Theorem 1.1 of~\cite{cotler2025egorov}]
\label{thm:wass-egorov}
Let $H\in S_2(1)$ be an admissible Hamiltonian, and let $\rho \in \mcal{D}_p$ be a density matrix and $\rho(t) := e^{-it\hat{H}/\hbar}\rho e^{it\hat{H}/\hbar}$, and let $\Phi_t$ be the classical flow by $H$.  Then
\[
W_p\!\left(\Husimi_{\rho(t)},(\Phi_t)_\#\Husimi_\rho\right)
\leq e^{Ct}\hbar^{1/2}\,.
\]
\end{theorem}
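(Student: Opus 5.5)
The plan is a Golse--Paul-type semiclassical coupling argument: build a coupling between the quantum state $\rho(t)$ and the classical measure $(\Phi_t)_\#\Husimi_\rho$, track its transport cost by a Gronwall argument, and convert back to Husimi distributions at the end. I would first do $p=2$ and treat general $p$ afterwards.

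\textbf{Step 1 (initial coupling).} I would introduce an operator-valued coupling $\alpha\mapsto Q(\alpha)\geq 0$ with
\[
\int Q(\alpha)\diff\alpha=\rho,\qquad \Tr Q(\alpha)=\Husimi_\rho(\alpha),
\]
for instance
\[
Q(\alpha)=(2\pi\hbar)^{-D}\rho^{1/2}|\alpha\rangle\langle\alpha|\rho^{1/2}.
\]
Its transport cost is
\[
E(0)=\int \Tr\!\big[Q(\alpha)\,|\hat r-\alpha|^2\big]\diff\alpha .
\]
I would aim to show $E(0)=O(\hbar)$. If this fails for general $\rho$, I would first replace $\rho$ by the noised state $\mcal N[\rho]$. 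For $\mcal N[\rho]$ the Töplitz coupling $Q(\alpha)=\Husimi_\rho(\alpha)|\alpha\rangle\langle\alpha|$ has cost exactly $D\hbar$, because coherent states have variance $\hbar/2$ in each coordinate. The price of this replacement must then be controlled separately.

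\textbf{Step 2 (evolution and Gronwall).} I would evolve the coupling by
\[
Q_t(\alpha):=e^{-it\hat H/\hbar}\,Q(\alpha)\,e^{it\hat H/\hbar},
\]
while transporting the label $\alpha$ along $\Phi_t$. This yields a coupling of $\rho(t)$ with $(\Phi_t)_\#\Husimi_\rho$, with cost
\[
E(t)=\int\Tr\!\big[Q_t(\alpha)\,|\hat r-\Phi_t(\alpha)|^2\big]\diff\alpha .
\]
Differentiating, the Heisenberg derivative of $\hat r$ is $\Op_\hbar(J\nabla H)$ exactly up to $O(\hbar)$ errors; this uses the symbol calculus and $H\in S_2(1)$. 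The classical point moves with velocity $J\nabla H(\Phi_t(\alpha))$. Since $\nabla H$ is globally Lipschitz, the difference of the two drifts is controlled by $|\hat r-\Phi_t(\alpha)|$ in a symmetrized operator sense. This should give
\[
\dot E\leq C E + C\hbar,
\]
and hence $E(t)\leq e^{Ct}\big(E(0)+\hbar\big)$.

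\textbf{Step 3 (back to Husimi).} Measuring the quantum side with the coherent-state POVM, i.e.\ applying $\mcal C$, turns $Q_t$ into a classical coupling of $\Husimi_{\rho(t)}$ with $(\Phi_t)_\#\Husimi_\rho$. This adds at most the coherent-state spread $M_{D,2}\hbar^{1/2}$ to the cost, exactly as in the proof of Proposition~\ref{prp:ub}. Taking square roots then gives the $e^{Ct}\hbar^{1/2}$ bound. For general $p$, I would replace $|\cdot|^2$ by a smoothed $|\cdot|^p$ cost and run the same Gronwall argument. Alternatively, for $p=1$ I would use Kantorovich duality together with Egorov's theorem applied to $\Op_\hbar^{\rm coh}(f)$, after mollifying the Lipschitz function $f$ at scale $\hbar^{1/2}$.

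\textbf{Main obstacle.} I expect Step 2 to be the hardest, specifically the non-commutative Gronwall inequality. The quantity $|\hat r-\Phi_t(\alpha)|^2$ does not commute with $\Op_\hbar(\nabla H)$, and $\nabla H$ is only Lipschitz, not linear. Turning ``Lipschitz drift'' into an operator inequality $\pm\big(\Op_\hbar(\nabla H)-\nabla H(\Phi_t\alpha)\big)\lesssim |\hat r-\Phi_t\alpha|+O(\hbar)$ therefore needs a careful commutator or Taylor-expansion estimate. The second obstacle is the initial cost for a non-Töplitz $\rho$. I would try to handle it through the translation-localization picture suggested in Figure~\ref{fig:conceptual2}: show that $W_{\delta,t}$ is $e^{Ct}\hbar^{1/2}$-local in phase space, so that the error from working with $\mcal N[\rho]$ instead of $\rho$ propagates with only an $e^{Ct}$ loss.
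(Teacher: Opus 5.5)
\textbf{Verdict: there is a genuine gap.} Your argument proves the bound only for T\"oplitz data. The case of general $\rho$ is exactly the part you defer, and it is the actual content of the cited theorem.

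Steps 2 and 3 are fine for $p=2$. They are essentially the Golse--Paul argument. For general $H\in S_2(1)$, the operator inequality $\sum_j\bigl(\Op_\hbar(F_j)-F_j(z)\bigr)^2\leq (L^2+C\hbar)|\hat r-z|^2+C\hbar$, with $F$ the Hamiltonian vector field, follows from positivity of the anti-Wick quantization plus the symbol calculus. The trouble is Step 1, and no better choice of coupling can repair it. If $\rho=|\psi\rangle\langle\psi|$ is pure, positivity forces every operator-valued coupling with $\int Q=\rho$ to be $Q(\alpha)=\Husimi_\rho(\alpha)\,|\psi\rangle\langle\psi|$. The smallest possible initial cost is therefore
\[
\int_{\bbR^{2D}}\Husimi_\rho(\alpha)\,\langle\psi|\,|\hat r-\alpha|^2\,|\psi\rangle\diff\alpha\;\geq\;\sum_j\operatorname{Var}_\psi(\hat r_j).
\]
For a cat state $\psi\propto|a\rangle+|-a\rangle$ with fixed $a\neq0$, this is of order $|a|^2$, not $O(\hbar)$. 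Your Gronwall bound then gives only $O(e^{Ct})$. So the detour through $\mcal N[\rho]$ is forced, and the whole proof rests on the ``price'' you leave to be controlled separately.

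That price is not a side issue. Golse--Paul gives the T\"oplitz estimate $W_2(\Husimi_{\mcal N[\rho](t)},(\Phi_t)_\#\Husimi_\rho)\lesssim e^{Ct}\hbar^{1/2}$. Given it, the remaining bound $W_p(\Husimi_{\rho(t)},\Husimi_{\mcal N[\rho](t)})\lesssim e^{Ct}\hbar^{1/2}$, uniformly over states, is equivalent to the theorem itself. Write $\mcal N[\rho](t)=\int\gamma_\hbar(a)\,W_{a,t}\,\rho(t)\,W_{a,t}^\dagger\diff a$ with $W_{a,t}=e^{-it\hat H/\hbar}\tau_a e^{it\hat H/\hbar}$, and use joint convexity of $W_p^p$. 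The remaining bound then amounts to this: for $|a|\sim\hbar^{1/2}$, $W_{a,t}$ displaces the Husimi distribution of an \emph{arbitrary} state by $O(e^{Ct}\hbar^{1/2})$ in $W_p$, with controlled tails. This must hold even for states with structure below the $\hbar^{1/2}$ scale. That phase-space locality of the conjugated translation is precisely what the paper imports from Theorem~1.1 of the companion work, where it is proved by classical coarse-graining and phase-space localization estimates. Your proposal names it but gives no mechanism for it.

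Egorov-type operator-norm estimates do not supply this for $p>1$, because they control single expectation values, which is linear duality information. For the same reason, your $p=1$ alternative does work uniformly in $\rho$:
\begin{itemize}
\item $\Op_\hbar^{\rm coh}(f)$ is the Weyl quantization of $f$ convolved with the Gaussian of covariance $\tfrac{\hbar}{2}I_{2D}$, which is an $S_{1/2}$-type symbol.
\item The Egorov remainder involves only third and higher derivatives of $H$, and is $O(C_t\hbar)$.
\item The mismatch between mollifying before and after composing with $\Phi_t$ quantizes to norm $O(e^{Ct}\hbar^{1/2})$, by Calder\'on--Vaillancourt after rescaling by $\hbar^{1/2}$.
\end{itemize}
For $p>1$, however, $W_p$ has no such linear dual. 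Your smoothed-$|\cdot|^p$ Gronwall for $p>2$ is also unsupported: $\frac{i}{\hbar}[\hat H,K^{p/2}]$ with $K=|\hat r-z|^2$ does not factor through the $p=2$ computation. For $p\leq2$ you can simply use $W_p\leq W_2$.
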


We are now ready to prove Theorem~\ref{thm:precise-main}.

\begin{proof}[Proof of Theorem~\ref{thm:precise-main} using Theorem~\ref{thm:wass-egorov}]
We first prove $\lambda_{\rm q} \geq \lambda_{\rm exp}$. Define
\begin{equation}
J(T) := \sup_{\alpha\neq\beta}\frac{|\Phi_T(\alpha)-\Phi_T(\beta)|}{|\alpha-\beta|}.
\end{equation}
For each $T$, choose $\alpha_T\neq\beta_T$ such that
\begin{equation}
\frac{|\Phi_T(\alpha_T)-\Phi_T(\beta_T)|}{|\alpha_T-\beta_T|} \geq \frac{1}{2}J(T).
\end{equation}
Let $\rho_{\hbar;T} = |\alpha_T\rangle\langle\alpha_T|$ and $\sigma_{\hbar;T} = |\beta_T\rangle\langle\beta_T|$. Since these coherent states are related by a phase-space translation, the translation-distance axiom gives
\begin{equation}
d_p(\rho_{\hbar;T},\sigma_{\hbar;T}) = |\alpha_T-\beta_T|.
\end{equation}
For fixed $T$, Theorem~\ref{thm:wass-egorov} together with Theorem~\ref{thm:upperbd1} implies
\begin{equation}
d_p\!\left(\rho_{\hbar;T}(T),|\Phi_T(\alpha_T)\rangle\langle\Phi_T(\alpha_T)|\right) \leq C_T\hbar^{1/2},
\end{equation}
and the same estimate holds with $\alpha_T$ replaced by $\beta_T$. Hence the triangle inequality gives
\begin{equation}
d_p(\rho_{\hbar;T}(T),\sigma_{\hbar;T}(T)) \geq |\Phi_T(\alpha_T)-\Phi_T(\beta_T)| - C_T\hbar^{1/2}.
\end{equation}
Since $|\alpha_T-\beta_T|$ is independent of $\hbar$, this family satisfies $d_p(\rho_{\hbar;T},\sigma_{\hbar;T}) = \omega(\hbar^{1/2})$, and therefore
\begin{equation}
\limsup_{\hbar\to0}\frac{d_p(\rho_{\hbar;T}(T),\sigma_{\hbar;T}(T))}{d_p(\rho_{\hbar;T},\sigma_{\hbar;T})} \geq \frac{1}{2}J(T).
\end{equation}
Taking the logarithm, dividing by $T$, and then sending $T\to\infty$ yields $\lambda_{\rm q}\geq\lambda_{\rm exp}$.

For the reverse inequality, Theorem~\ref{thm:upperbd1}, Theorem~\ref{thm:wass-egorov}, and the classical transport bound
\begin{equation}
W_p\bigl((\Phi_T)_\#\mu,(\Phi_T)_\#\nu\bigr) \leq J(T)W_p(\mu,\nu)
\end{equation}
give, for fixed $T$,
\begin{align}
d_p(\rho(T),\sigma(T))
&\leq W_p(\Husimi_{\rho(T)},\Husimi_{\sigma(T)}) + C\hbar^{1/2} \nonumber\\
&\leq W_p\bigl((\Phi_T)_\#\Husimi_\rho,(\Phi_T)_\#\Husimi_\sigma\bigr) + C_T\hbar^{1/2} \nonumber\\
&\leq J(T)W_p(\Husimi_\rho,\Husimi_\sigma) + C_T\hbar^{1/2} \nonumber\\
&\leq J(T)d_p(\rho,\sigma) + C_T\hbar^{1/2}.
\end{align}
For any $\hbar$-dependent family satisfying $d_p(\rho_\hbar,\sigma_\hbar) = \omega(\hbar^{1/2})$, it follows that
\begin{equation}
\limsup_{\hbar\to0}\frac{1}{T}\log\!\left(\frac{d_p(\rho_\hbar(T),\sigma_\hbar(T))}{d_p(\rho_\hbar,\sigma_\hbar)}\right) \leq \frac{1}{T}\log J(T).
\end{equation}
Taking the supremum over admissible families and then $T\to\infty$ gives $\lambda_{\rm q}\leq\lambda_{\rm exp}$.

Finally, define
\begin{equation}
C_{jk}^{(\hbar)}(T) := \frac{1}{\hbar^2}\Tr\!\left(\rho_\hbar\left|[\hat r_j(T),\hat r_k(0)]\right|^2\right).
\end{equation}
For fixed $T$, Egorov's theorem and the symbolic calculus give
\begin{equation}
\lim_{\hbar\to0}C_{jk}^{(\hbar)}(T) = C_{jk}^{\rm cl}(T).
\end{equation}
Since $J(T)$ is the Lipschitz constant of $\Phi_T$, the entries of the tangent map are bounded in magnitude by $J(T)$, and hence
\begin{equation}
\sigma_{\max}\!\left(\mathbf C^{\rm cl}(T)\right) \leq 2D\,J(T)^2.
\end{equation}
Therefore
\begin{equation}
\lambda_{\rm OTOC} \leq \lim_{T\to\infty}\frac{1}{2T}\log\!\left(2D\,J(T)^2\right) = \lambda_{\rm exp}.
\end{equation}
\end{proof}

\subsection{Commentary on $\lambda_{\rm OTOC}$}

The preceding result highlights the distinction between $\lambda_{\rm OTOC}$ and the global classical expansion coefficient $\lambda_{\rm exp}$. The former averages squared linearized growth over the classical phase-space distribution selected by the state, whereas $\lambda_{\rm exp}$ optimizes the finite-time expansion over phase space before taking the long-time limit. To make this distinction explicit, define
\begin{equation}
\chi_T(\alpha) := \frac{1}{T}\log\!\left\|\mathrm D\Phi_T(\alpha)\right\|_{\rm F},
\end{equation}
where $\|\cdot\|_{\rm F}$ denotes the Frobenius norm. Since
\begin{equation}
\sum_{j,k}C_{jk}^{\rm cl}(T) = \int_{\mathbb R^{2D}}\left\|\mathrm D\Phi_T(\alpha)\right\|_{\rm F}^2\diff\mu(\alpha),
\end{equation}
and the entries of $\mathbf C^{\rm cl}(T)$ are nonnegative, we have
\begin{equation}
\frac{1}{(2D)^2}\int_{\mathbb R^{2D}}\left\|\mathrm D\Phi_T(\alpha)\right\|_{\rm F}^2\diff\mu(\alpha) \leq \sigma_{\max}\!\left(\mathbf C^{\rm cl}(T)\right) \leq \int_{\mathbb R^{2D}}\left\|\mathrm D\Phi_T(\alpha)\right\|_{\rm F}^2\diff\mu(\alpha).
\end{equation}
The dimension-dependent constants disappear after taking the logarithm and dividing by $T$, so
\begin{equation}
\lambda_{\rm OTOC} = \limsup_{T\to\infty}\frac{1}{2T}\log\!\left[\int_{\mathbb R^{2D}}e^{2T\chi_T(\alpha)}\diff\mu(\alpha)\right].
\end{equation}
Thus $\lambda_{\rm OTOC}$ is a state-weighted second-moment exponent, consistent with earlier observations that semiclassical OTOC growth probes moments of finite-time trajectory divergence rather than the usual trajectory-averaged Lyapunov exponent~\cite{Rozenbaum:2016mmv, PradoReynoso:2022jfh}. By contrast, up to the immaterial choice of matrix norm,
\begin{equation}
\lambda_{\rm exp} = \lim_{T\to\infty}\sup_{\alpha\in\mathbb R^{2D}}\chi_T(\alpha).
\end{equation}
This immediately gives $\lambda_{\rm OTOC}\leq\lambda_{\rm exp}$ and explains why the inequality can be strict. In settings where $\lambda_{\rm exp}=\lambda_{\rm cl}$, the same bound also gives $\lambda_{\rm OTOC}\leq\lambda_{\rm cl}$.

The saturation condition can be expressed in terms of the weight assigned to nearly maximally expanding regions. For $\varepsilon>0$, define
\begin{equation}
E_T(\varepsilon) := \left\{\alpha\in\operatorname{supp}\mu:\chi_T(\alpha)\geq\lambda_{\rm exp}-\varepsilon\right\}.
\end{equation}
Restricting the phase-space integral to this set gives
\begin{equation}
\lambda_{\rm OTOC}\geq\lambda_{\rm exp}-\varepsilon+\frac{1}{2}\liminf_{T\to\infty}\frac{1}{T}\log\mu\bigl(E_T(\varepsilon)\bigr).
\end{equation}
It follows that $\lambda_{\rm OTOC}=\lambda_{\rm exp}$ whenever
\begin{equation}
\lim_{T\to\infty}\frac{1}{T}\log\mu\bigl(E_T(\varepsilon)\bigr)=0
\end{equation}
for every $\varepsilon>0$. Thus nearly maximally expanding trajectories need not have positive measure; it is enough that their weight not be exponentially small in time.

Suppose now that $\mu$ is invariant and ergodic under the classical flow. Oseledets' theorem gives a top Lyapunov exponent $\lambda_{\rm top}(\mu)$ that is constant for $\mu$-almost every initial point. Ergodicity alone does not imply $\lambda_{\rm OTOC}=\lambda_{\rm exp}$. The typical exponent $\lambda_{\rm top}(\mu)$ may be strictly smaller than the global expansion rate, and rare finite-time fluctuations can affect the exponential second moment even when the asymptotic exponent is constant almost everywhere. A sufficient, but stronger, condition is uniform convergence of the finite-time expansion rate on the support of $\mu$,
\begin{equation}
\lim_{T\to\infty}\sup_{\alpha\in\operatorname{supp}\mu}\left|\chi_T(\alpha)-\lambda_{\rm top}(\mu)\right| = 0\,.
\end{equation}
Under this condition, $\lambda_{\rm OTOC}=\lambda_{\rm top}(\mu)$, and the bound is saturated if $\lambda_{\rm top}(\mu)=\lambda_{\rm exp}$. For a nonergodic measure, the averaging before the logarithm similarly means that the OTOC is dominated by the components or rare phase-space regions that achieve the best balance between rapid growth and statistical weight.

A simple illustration is provided by a particle in a one-dimensional double-well potential with Hamiltonian $H(x,p) = \frac{p^2}{2} + V(x)$, where $V$ has a local maximum at $x=x_*$. Writing
\begin{equation}
V(x) = V(x_*)-\frac{\omega_*^2}{2}(x-x_*)^2+O\!\left((x-x_*)^3\right), \quad \omega_* := \sqrt{-V''(x_*)}\,,
\end{equation}
the point $(x_*,0)$ is a hyperbolic fixed point with local instability exponent $\omega_*$. Generic bounded trajectories in this one-degree-of-freedom system are regular and have vanishing asymptotic Lyapunov exponent, while the hyperbolic fixed point itself has positive exponent $\omega_*$. The global expansion coefficient, and hence also the usual maximal Lyapunov exponent when no larger instability is present elsewhere, is therefore positive even though the instability is not representative of typical trajectories.

The behavior of the OTOC depends on the phase-space distribution. If $\mu$ is supported a positive distance away from the separatrix, the OTOC does not sample the saddle instability and its asymptotic exponent can vanish. On the other hand, suppose that $\mu$ has a smooth, nonvanishing density near the saddle. Initial conditions that remain in the linear saddle region for a time $T$ occupy an exponentially narrow strip whose measure scales schematically as $e^{-\omega_*T}$, while their squared sensitivity grows as $e^{2\omega_*T}$. Their contribution to the averaged squared sensitivity therefore scales as
\begin{equation}
\sim e^{-\omega_*T}e^{2\omega_*T} = e^{\omega_*T}.
\end{equation}
In the normalization used here, this local estimate corresponds to a positive OTOC exponent $\omega_*/2$, even though almost every trajectory is nonchaotic. If the state is instead concentrated near the saddle so that the unstable region is not exponentially suppressed, the OTOC exponent can approach the full saddle exponent $\omega_*$. This saddle-dominated exponential growth of OTOCs in classically integrable systems has been discussed extensively in the literature~\cite{xu2020scrambling,hashimoto2020exponential,trunin2023quantum}.

This saddle-dominated OTOC growth is fully consistent with the quantum-to-classical correspondence. The quantum OTOC converges to a classical phase-space average of squared tangent-map elements; the distinction arises because this average is taken before the logarithm. The quantum optimal transport exponent and the OTOC exponent therefore answer different questions: $\lambda_{\rm q}$ optimizes over initial state perturbations and recovers the global classical expansion rate $\lambda_{\rm exp}$, whereas $\lambda_{\rm OTOC}$ measures a state-weighted second moment of the tangent dynamics. The two coincide when nearly maximally expanding trajectories contribute without exponential suppression. In settings where $\lambda_{\rm exp}=\lambda_{\rm cl}$, this also gives $\lambda_{\rm q} = \lambda_{\rm cl} = \lambda_{\rm OTOC}$ under the same saturation condition.

\end{document}